\documentclass[11pt]{article}

\PassOptionsToPackage{final}{hyperref}

\usepackage[margin=1in]{geometry}

\usepackage{booktabs}

\usepackage{makecell}
\usepackage{multirow}

\usepackage{float}
\usepackage{comment}
\usepackage{algorithm}
\usepackage[noend]{algpseudocode}

\usepackage{amsmath,amssymb,mathtools}
\usepackage{amsthm}
\usepackage{thmtools} %

\usepackage{xcolor}

\usepackage{soul}
\usepackage[normalem]{ulem}
\usepackage{graphicx}
\usepackage{dsfont}

\usepackage{tikz}
\usetikzlibrary{calc}
\usetikzlibrary{shapes.geometric}
\usetikzlibrary{automata, positioning, calc, fit, shapes.misc}
\usetikzlibrary{matrix}
\usetikzlibrary{positioning}
\usetikzlibrary{arrows, decorations.pathmorphing}
\usetikzlibrary{decorations.pathreplacing,calligraphy}
\usepackage{paralist}

\usepackage{caption}
\usepackage[labelformat=simple]{subcaption}

\usepackage[multiple]{footmisc}
\usepackage{aliascnt}

\usepackage{rotating}
\usepackage{tcolorbox}

\usepackage{csquotes}
\usepackage{colortbl}

\usepackage{sansmath}

\newcolumntype{C}[1]{>{\centering\arraybackslash}p{#1}}

\usepackage[backref=page]{hyperref}
\hypersetup{colorlinks=true,citecolor=blue,linkcolor=blue}
\hypersetup{colorlinks=true}
\hypersetup{linkcolor=[rgb]{.7,0,0}}
\hypersetup{citecolor=[rgb]{0,.7,0}}
\hypersetup{urlcolor=[rgb]{.7,0,.7}}

\declaretheorem[numberwithin=section]{theorem}
\declaretheorem[sibling=theorem, style=definition]{definition}
\declaretheorem[sibling=theorem]{lemma}

\declaretheorem[sibling=theorem]{observation}

\declaretheorem[sibling=theorem]{proposition}

\newcommand{\addQEDstyle}[2]{\AtBeginEnvironment{#1}{\pushQED{\qed}\renewcommand{\qedsymbol}{#2}}\AtEndEnvironment{#1}{\popQED}} 
\addQEDstyle{definition}{$\triangle$}

\makeatletter
\patchcmd{\ALG@step}{\addtocounter{ALG@line}{1}}{\refstepcounter{ALG@line}}{}{}
\newcommand{\ALG@lineautorefname}{Line}
\makeatother

\makeatletter
\newcommand\Ps@textstyle[2]{\mathbb{P}_{#1}\left[{#2}\right]}
\newcommand\Es@textstyle[2]{\mathbb{E}_{#1}\left[{#2}\right]}
\newcommand\Ps[2]{%
  \mathchoice %
  {\underset{{#1}}{\mathbb{P}}\left[{#2}\right]}
  {\Ps@textstyle{#1}{#2}}
  {\Ps@textstyle{#1}{#2}}
  {\Ps@textstyle{#1}{#2}}
}
\newcommand\Es[2]{%
  \mathchoice %
  {\underset{{#1}}{\mathbb{E}}\left[{#2}\right]}
  {\Es@textstyle{#1}{#2}}{\Es@textstyle{#1}{#2}}{\Es@textstyle{#1}{#2}}
}
\makeatother

\makeatletter
\DeclareRobustCommand
  \myvdots{\vbox{\baselineskip4\p@ \lineskiplimit\z@
    \kern4\p@ \hbox{.}\hbox{.}\hbox{.}}}
\makeatother

\usepackage{framed}

\newcommand{\Appls}{\mathcal{A}}
\newcommand{\Insts}{\mathcal{I}}

\DeclareMathOperator*{\vio}{vio}

\newcommand{\DA}{\mathsf{DA}}
\newcommand{\APDA}{\mathsf{APDA}}
\newcommand{\IPDA}{\mathsf{IPDA}}
\newcommand{\PN}{\mathsf{PN}}
\newcommand{\Stab}{\mathsf{Stab}}

\newcommand{\Legal}{\mathsf{L}}
\newcommand{\K}{\mathsf{K}}
\newcommand{\EADAM}{\mathsf{EADAM}}

\newcommand{\match}{\mathsf{match}}
\newcommand{\rots}{\mathsf{rots}}

\newcommand{\covers}{\gtrdot}

\begin{document}

\title{
Characterization of Priority-Neutral Matching Lattices\thanks{We thank Ata Atay, Nicole Immorlica, Phil Reny, participants at Stony Brook 2024, and various anonymous reviewers for helpful comments and conversations.}
}

\hypersetup{pdftitle={Characterization of Priority-Neutral Matching Lattices}}
\hypersetup{pdfauthor={Clayton Thomas}}

\author{
Clayton Thomas\thanks{Yale University. 
| \emph{E-mail}: \href{mailto:thomas.clay95@gmail.com}{thomas.clay95@gmail.com}.
}
}

\date{\today}

\begin{titlepage}
\maketitle
\begin{abstract}
  We study the structure of the set of priority-neutral matchings.
These matchings, introduced by \cite{Reny22}, generalize stable matchings by allowing for priority violations in a principled way that enables Pareto-improvements to stable matchings. 
Known results show that the set of priority-neutral matchings is a lattice, suggesting that these matchings may enjoy the same tractable theoretical structure as stable matchings.

In this paper, we characterize priority-neutral matching lattices, 
and show that their structure is considerably more intricate than that of stable matching lattices. 
To begin, we show priority-neutral lattices are not distributive, an important property that characterizes stable lattices and is satisfied by many other lattice structures considered in matching theory and algorithm design.
Then, in our main result, we show that priority-neutral lattices are in fact characterized by a more-involved property which we term being a ``movement lattice,'' which allows for significant departures from the order theoretic properties of distributive (and hence stable) lattices.
While our results show that priority-neutrality 
is more intricate than stability, they also establish tractable properties.
Indeed, as a corollary of our main result, we obtain the first known polynomial-time algorithm for checking whether a given matching is priority-neutral.

\end{abstract}
\thispagestyle{empty}
\end{titlepage}

\tableofcontents
\thispagestyle{empty}
\clearpage
\pagenumbering{arabic}

\section{Introduction}
\label{sec:intro}

Stable matchings are a foundational notion in the theory of matching under preferences.
Since their introduction in \cite{GaleS62}, they have been found to exhibit a both textbook algorithmic elegance and rich theoretical structure \cite{Knuth76, blair1984every, GusfieldI89, kleinberg2006algorithm}.
These insights culminate in a striking characterization: when matchings are ordered according to agents' preferences, the set of stable matchings forms a \emph{distributive lattice}.\footnote{
  Specifically, matchings are partially ordered according to preferences as follows: if we are matching students to schools, $\mu \ge \nu$ means that for all students $d$, we have $\mu(d)\succeq_d\nu(d)$.
  A lattice is a partially-ordered set such that each pair $x,y$ has a greatest lower bound $x \wedge y$ and a least upper bound $x \vee y$.
  A distributive lattice is one in which the equation $\mu \wedge (\nu \vee \xi) = (\mu \wedge \nu) \vee (\mu \wedge \xi)$ always holds. 
}
Moreover, every distributive lattice arises in this way, so this property precisely characterizes the structure of stable matchings.
This characterization has been instrumental for countless results on stable matchings, including their combinatorics (e.g., \cite{KarlinGW18,PalmerP21}), communication (e.g., \cite{GonczarowskiNOR19, AshlagiBKS20, GonczarowskiT24}), and algorithms (e.g., \cite{ManloveMatchPrefs13, echenique2023online}).

Despite these successes, there are compelling reasons to look beyond stable matchings.
For instance, in the crucial real-world setting of student-to-school assignment \cite{AbdulkadirougluS03,abdulkadirouglu2005boston,abdulkadirouglu2005new, sonmez2023minimalist},
stability may not be a binding requirement, and is often overly restrictive.
First, regardless of the schools' priorities, students are typically unable to attend a school that the centralized algorithm did not assign them to.
Thus, priority violations in matching $\mu$---i.e., pairs $(d,h)$ where $h\succ_d \mu(d)$ and $d \succ_h \mu(h)$---may create a sense of unfairness, but may not compromise the integrity of the matching overall.
Second, stability implies forgoing Pareto-optimality for the students.
For instance, \cite{AbdulkadirougluPR09} report that in 2006 in New York City, over 4,000 grade 8 students could be made better off without worsening the match of any student.
This leads to an important question: is there a natural class of matchings which permits Pareto-improvements to stable matchings, while paying due respect to priority violations?\footnote{
  The classical Pareto-optimal matching mechanism in student-to-school assignment is Top Trading Cycles (TTC) \cite{ShapleyS74,AbdulkadirougluS03}.
  Despite having many desirable properties, TTC (informally speaking) ignores priority violations completely, motivating the search for a more integrated compromise between stability and Pareto-optimality.
}

Motivated by this question, the recent breakthrough work \cite{Reny22} proposes a generalization of stability, termed {priority-neutrality}.
When matching applicants to institutions,
a matching $\mu$ is \emph{priority-neutral} if, for each applicant $d$ whose priority is violated by $\mu$, it is not possible for another matching $\nu$ to make $d$ better off, unless $\nu$ also violates the priority of some applicant who is made worse off.
Since stable matchings violate no applicant's priority, every stable matching is priority-neutral. 
But unlike stable matchings, priority-neutral matchings also allow for violations of $d$'s priority when, in some sense, $d$ cannot be improved without undue effects on others.

\cite{Reny22} proves that priority-neutral matchings always Pareto-improve upon stable matchings, and include a Pareto-optimal matching. 
Moreover, he shows their set has structure: like stable matchings, the set of priority-neutral matchings forms a lattice.\footnote{
  \cite{Reny22} is far from the first work to pursue the agenda of Pareto-improving stable matchings; see related work below. 
  Perhaps most notably, \cite{EhlersM20} introduce \emph{legal} matchings, where a set $\Legal$ of matchings is called legal if $\Legal = \{ \mu\ \allowbreak|\ \allowbreak \text{$\forall \nu \in \Legal:$ \allowbreak no applicant $d$'s priority is violated by $\mu$ at institution $\nu(d)$} \}$.
  \cite{EhlersM20} prove there is a unique legal set $\Legal$, which always contains all stable matchings while allowing for Pareto-improvements.
  Moreover, $\Legal$ is a lattice.
  However, the self-referential nature of the definition of a legal set motivates the search for other notions such as priority-neutrality, which \cite{Reny22} posits may be a simpler notion than legality.
  \label{footnote:legal}
}
Thus, known results seem to suggest that---in addition to offering improvements in applicants' welfare---priority-neutral matchings may be as theoretically tractable as stable matchings.

\paragraph{Our Results.}

In this paper, we show that the structure of the priority-neutral lattice is in fact \emph{far} more involved than that of the stable lattice.
Our first result (\autoref{sec:non-distributive}) begins by answering an open question of \cite{Reny22}, and showing that the priority-neutral lattice is \emph{not} distributive.
Moreover, we show that in the priority-neutral lattice, the greatest lower bound of two matchings is \emph{not} their coordinatewise minimum (as it is in the stable matching lattice, and in many other lattices of algorithmic interest, e.g., \cite{PicardQ80, KhullerNK93, GulS99}).\footnote{
  The coordinatewise minimum of matchings $\mu$ and $\nu$ pairs each applicant $d$ to their worse choice from $\{\mu(d), \nu(d)\}$.
  In the stable matching lattice, $\mu\wedge\nu$ is the coordinatewise minimum of $\mu$ and $\nu$, and $\mu\vee\nu$ is (the analogous notion of) the coordinatewise maximum of $\mu$ and $\nu$.
  One can show that, if $\wedge$ and $\vee$ are given by coordinatewise minima and maxima (respectively), then the lattice is distributive.
  We prove that in the priority-neutral lattice, $\wedge$ is \emph{not} given by coordinatewise minima (although, as shown by \cite{Reny22}, $\vee$ is given by coordinatewise maxima).
}
These facts imply that priority-neutral matchings lack much of the tractable structure of stable matchings.
(For instance, they show that the priority-neutral lattice cannot be represented by a partial order on a set of rotations, the classic strengthening of Birkhoff's representation theorem \cite{Birkhoff37} for the stable lattice which underpins the results discussed above.)

Given that priority-neutral lattices thus lack the tractable structure of stable matching lattices, we ask the following questions:
What lattices \emph{do} arise as priority-neutral matching lattices?
What properties to priority-neutral lattices satisfy, and how can we represent their structure?

In our main result (\autoref{sec:result}), we characterize priority-neutral matching lattices.
That is, we provide a representation of the set of priority-neutral matchings,
and moreover, we delineate the exact class of lattices which can arise in this way.
This class---which we call \emph{movement lattices}---is fairly intricate, and we show via examples that priority-neutral lattices can have order-theoretic structure quite different from that of distributive lattices. 
Moreover, proving that this representation captures priority-neutral matching lattices is involved: we build on an extensive theory of different matching lattices that generalize stability, and use techniques from \cite{kesten2010school,TangY14,EhlersM20,Reny22,FaenzaZ22}.
For some preliminary intuition, see below.

While our results imply that priority-neutral lattices are more intricate than stable ones, they also show ways in which priority-neutrality is tractable.
For instance, our characterization directly yields a poly-time algorithm for checking whether a given matching is priority-neutral.
Prior to our work, it was far from clear how to efficiently check whether a matching $\mu$ is priority-neutral,
since the definition of priority-neutrality requires verifying that no alternative matching $\nu$ satisfies some condition (specifically, $\nu$ cannot improve the match of an applicant whose priority is violated in $\mu$ without violating the priority of an applicant who is worse off in $\nu$).

\paragraph{Intuition for Our Results.}

Intuitively, the structure of the stable matching lattice is derived from the fact that avoiding priority violations---i.e., pairs $(d,h)$ with $h\succ_d \mu(d)$ and $d \succ_h \mu(h)$---enforces simple ordering-based constraints on the set of stable matchings.
For instance, in any stable matching $\mu$ which sets $\mu(h)=d_0$, consider any potential priority-violating pair $(d,h)$ with $d \succ_h d_0$.
Then, to preserve stability, we must guarantee that $\mu$ matches $d$ at least as well as $h$, i.e., $\mu(d) \succ_d h$.
Thus, this potential priority violation gives a natural ordering structure on the \emph{differences} between stable matchings: any time $h$ is made worse-off than $d$, we must in turn make $d$ better-off than $h$.

For a simple example of the above discussion, consider \autoref{fig:simple-ex}.
In this instance, we cannot assign $h_D$ to $d_C$ without ``first'' assigning $d_A$ to $h_B$ (or else $d_A$'s priority would be violated at $h_D$).
More generally, all stable matching lattices can be represented this way, where different changes can be made to the matching so long as \emph{all} of their ``predecessor'' changes have already been made. 
(The idea of such predecessor relations exactly characterizes distributive lattices by Birkhoff's representation theorem \cite{Birkhoff37}, and the formal notion capturing this for stable matchings is the rotation DAG; see \autoref{sec:aditional-prelims-rotation-DAG}.\footnote{
  In \autoref{sec:exposition-stable-construction}, we show for the sake of exposition that \autoref{fig:simple-ex} can be directly generalized to construct a stable matching lattice isomorphic to any distributive lattice.
  We believe our construction simplifies prior ones showing this fact \cite{blair1984every, GusfieldI89}.
})

\begin{figure}[htb]
\begin{minipage}{0.45\textwidth}
  \centering
  Institutions' Priorities:
  \\[0.05in]
  \begin{tabular}{ccccccc}
      \toprule
      $h_A$ & $h_B$ & $h_C$ & $h_D$
      \\ \midrule
      $d_A$ & $d_B$ & $d_C$ & $d_D$
      \\
      $d_B$ & $d_A$ & $d_D$ & $d_A$
      \\
      &  & 
      & $d_C$ & 
  \end{tabular}
\end{minipage} 
\begin{minipage}{0.45\textwidth}
  \centering
  Applicants' Preferences:
  \\[0.05in]
  \begin{tabular}{ccccccc}
      \toprule
      $d_A$ & $d_B$ & $d_C$ & $d_D$
      \\ \midrule
      $h_B$ & $h_A$ & $h_D$ & $h_C$
      \\
      $h_D$ & $h_B$ & $h_C$ & $h_D$
      \\
      $h_A$ & & & & 
  \end{tabular}
\end{minipage} 
\\ %
\begin{minipage}{\textwidth}
  \centering
  \begin{align*}
    \mu_1 & = \{ (h_A, d_A), (h_B, d_B), (h_C, d_C), (h_D, d_D)\} \\
    \mu_2 & = \{ (h_A, d_B), (h_B, d_A), (h_C, d_C), (h_D, d_D)\} \\
    \mu_3 & = \{ (h_A, d_B), (h_B, d_A), (h_C, d_D), (h_D, d_C)\}
  \end{align*}
\end{minipage} 
\caption{An example instance where the stable matching lattice is $\{ \mu_1, \mu_2, \mu_3\}$, with $\mu_1 < \mu_2 < \mu_3$.
This illustrates the simple ordering constraints on stable matchings: we must ensure that $d_A$ is improved ``before'' $h_D$ is made worse.
In contrast, we show that the ordering constraints on priority-neutral matchings are more involved: 
they may require only that \emph{some} change, among many possibilities, is made ``before'' another change.}
\label{fig:simple-ex}
\end{figure}

Where stable matchings are those that do not have priority violations, priority-neutral matchings are those matchings $\mu$ that do not have what we call \emph{priority-correcting adjustments}---i.e., alternative matchings $\nu$ that have $\nu(d) \succ_d \mu(d)$ for some $d$ whose priority is violated in $\mu$, and such that there does not exist a $d_x$ whose priority is violated in $\nu$ and $\mu(d)\succ_{d_x} \nu(d)$.
According to \cite{Reny22}, such matchings $\nu$ provide a plausible argument for such a applicant $d$ as to why their priority should not have been violated in matching $\mu$, namely, $\nu$ ``corrects'' $\mu$ from $d$'s perspective by improving applicant $d$'s match, while only harming applicants $d_x$ if their priorities are respected.

Unfortunately, avoiding priority-correcting adjustments \emph{does not} enforce the same simple ordering structure on the set of priority-neutral matchings (compared to the simple structure that avoiding priority violations enforces on the set of stable matchings).
Intuitively, this is because if $\mu$ violates $d$'s priority, and if a potential priority-correcting adjustment $\nu$ violates the priority of multiple applicants $d_x$ and $d_y$, then this $\nu$ can witness the fact that $\mu$ is not priority-neutral only if \emph{neither} $d_x$ or $d_y$ have been made better off than their match in $\nu$.
Hence, the change that violates $d$'s priority can be made after \emph{either} $d_x$ or $d_y$ have been improved. 
The resulting ordering structure can be much more intricate than those that (by Birkhoff's theorem) underpin stable and distributive lattices (which could only require, for example, that \emph{both} $d_x$ and $d_y$ must have been improved).

In our first result (in \autoref{sec:non-distributive}), we construct an instance implementing the above idea for priority-neutral lattices, showing that they are not distributive.
After this (in \autoref{sec:building}), we directly pursue a generalization of the above discussion, and attempt to capture the order-theoretic constraints induced by priority-correcting adjustments in terms of a ``multiple predecessor'' generalization of results for stable lattices.
While this approach allows us to prove some structural results on the set of priority neutral matchings, it does not give a tight characterization (see \autoref{sec:remaining-challenges} for discussion). 
Hence, in our main result (in \autoref{sec:result}), we take a more ``low level'' approach, and encode the data of a priority-correcting adjustment into a direct representation.
In the end, this low-level approach gives a tight characterization.

\paragraph{Related Work.}

Our paper builds most directly on two strands of literature.
First, we build on a vast literature on the structure of stable matchings following from the seminal \cite{GaleS62}.
For structural properties specifically, \cite{GusfieldI89} is the classical reference; see also \cite[Section 1.4]{echenique2023online} or \cite{CaiT19}.
Key individual works include \cite{Knuth74, blair1984every, IrvingL86}, which establish that stable matching lattices are exactly characterized by distributive lattices.
Other works establishing fundamental properties of stable matching mechanisms include, for example, \cite{GaleMsMachiavelli85, McVitieW71, Roth82-DA, Roth84, DubinsMachiavelliGaleShapley81, GaleS85}.

Second, we build on papers studying Pareto-improvements of stable matchings.
\cite{kesten2010school} introduces a mechanism known as $\EADAM$, which has become a centerpoint of this literature.
\cite{TangY14} provides a simpler algorithm for computing $\EADAM$.
\cite{Reny22} introduces priority-neutral matchings, proves they form a lattice, and proves that the applicant-optimal matching in this lattice equals the outcome of $\EADAM$.

Other than priority-neutral matchings, likely the most significant prior theoretical framework for Pareto-improving the set of stable matchings is the set of \emph{legal} matchings \cite{EhlersM20}, which we denote by $\Legal$; see \autoref{footnote:legal} for a definition.
\cite{EhlersM20} proves that $\Legal$ forms a lattice, and the applicant-optimal matching in $\Legal$ equals the outcome of $\EADAM$.
\cite{Reny22} proves that every priority-neutral matching is legal (and that there can be legal, but not priority-neutral, matchings).
Thus, these two matching lattices are closely related. %
Throughout this paper, we exploit the connection between priority-neutrality and legality, along with additional structural results on legality proven by \cite{FaenzaZ22} (for more background, see \autoref{sec:aditional-prelims}).
However, we note that \cite{EhlersM20} show that $\Legal$ forms a \emph{distributive} lattice which is closed under coordinatewise minimum and maximum operations. 
In contrast, we prove the priority-neutral lattice is \emph{not} distributive, and hence is quite different in structure from the legal or stable lattice.

Many variations on stable matching have been proposed.
\cite{EhlersM20} follows the general idea of ``stable sets'' as in \cite{vonNeumannM44}; other papers taking similar or related approaches include \cite{ehlers2007neumann, mauleon2011neumann, atay2022school,dougan2023existence}, among others.
\cite{troyan2020essentially} define essentially stable matchings, which allow for Pareto-optimal matchings, but does not form a lattice (or characterize a unique Pareto-optimal matchings). 
\cite{abdulkadiroǧlu2020efficiency, kwon2020justified, tang2021weak} compare mechanisms by the sets of priorities they violate.
\cite{chen2023regret} studies a generalization of strategyproofness satisfied by $\EADAM$.
\cite{rong2023core} gives a novel characterization of $\EADAM$, as do many of the previously-mentioned papers.

In certain more-general matching problems, stable matchings are known to form a \emph{non-}distributive lattice. 
Specifically, this is the case in many-to-many \cite{Blair88} and many-to-one \cite{EcheniqueO04} matching settings with substitutes preferences; in fact, the recent paper \cite{EnF25} shows that \emph{all} finite lattices can arise as the stable lattice in this model.
These facts crucially use the generality of multi-partner matching with substitutes in preferences; 
indeed, when preferences are responsive (i.e., institutions preferences are consistent with a preference list over applicants) \cite{RothRuralHospital86}, or when they satisfy the commonly-assumed law of aggregate demand \cite{fleiner2003fixed, HatfieldM05}, stable matchings form a distributive lattice \cite{alkan2002class, li2014new}. 
We believe it is very interesting that priority-neutral lattices have a non-distributive and fundamentally new structure, despite being defined in the completely standard one-to-one matching model involving no new primitives.

Following \cite{Knuth74, GusfieldI89}, lattice properties of stable matchings have been instrumental in many results.
\cite{KarlinGW18,PalmerP21} make use the rotation poset to make breakthrough progress on the combinatorial problem of determining the maximum possible number of distinct stable matchings as a function of $n$, the number of applicants and institutions.
\cite{GonczarowskiT24} designs a blackboard verification (in the sense of NP) communication protocol for the applicant-optimal stable mechanism.
Much algorithmic work on different variants of such matching problems also exploits rotation poset representations and distributive lattice properties, see e.g. \cite{ManloveMatchPrefs13,echenique2023online}.
Additionally, \cite{IrvingL86, SabanS15} use related arguments to show that counting stable matchings, and other combinatorial matching problems, are $\#\mathsf{P}$-hard, and
\cite{Subramanian94, CookFL14} show that stable matchings are connected to a circuit model with ``comparator'' gates, and to certain novel complexity classes between $\mathsf{NL}$ and $\mathsf{P}$.
Further from our work,
\cite{PittelAverageStable89, ImmorlicaM05, AshlagiKL17, GimbertMM21, KanoriaMQ21, Mauras21, CaiT22} study stable matchings under certain random preference distributions.
Additionally, different (non-stable) matchings have been defined using ordinal preferences.
One example is popular matchings following \cite{abraham2007popular};
the structural properties of stable matchings have found applications for this set of matchings as well \cite{kavitha2024maximum}.
Another line of work investigates the extent to which natural decentralized matching dynamics converge to stable matchings \cite{roth1990random, ackermann2008uncoordinated, rudov2024fragile}.\footnote{
  To our knowledge, no such decentralized dynamics have been investigate that accommodate Pareto-optimal matchings or Pareto-improvements to stable matchings; this may be an interesting topic for future work.
}

\paragraph{Paper Organization.}

We recall the definition of stability and priority-neutrality in \autoref{sec:prelims}, and show that the priority-neutral lattice is not distributive in \autoref{sec:non-distributive}. 
In \autoref{sec:building}, we build up to our main characterization as follows: first, we recall additional preliminaries and results from prior work; second, we establish new structural properties of the priority-neutral lattice; and third, we provide a simpler sufficient representation of priority-neutral lattices (i.e., a representation which includes all priority-neutral lattices, but which includes other lattices as well).
In \autoref{sec:result}, give our main characterization result: a necessary and sufficient representation of priority-neutral lattices.
As a corollary, we also provide the first poly-time algorithm for checking whether a matching is priority-neutral.

\section{Preliminaries}
\label{sec:prelims}

\subsection{The Stable Matching Lattice}
\label{sec:prelims-stable}

We begin by recalling the definition of stable matchings and their essential properties.

\paragraph{Stabile Matchings and Priority Violations.}
We study one-to-one matchings between applicants $\Appls$ and institutions $\Insts$ (outside of some exposition in the introduction, where we referred to students and schools, respectively). 
We typically use notations like $\mu, \nu, \xi$ for matchings and (to avoid the common variables $a$ and $i$) we write $d\in \Appls, h \in \Insts$ (mnemonic: doctors and hospitals).
For $d \in \Appls$, we write $\mu(d) \in \Insts \cup \{ \emptyset \}$ for $d$'s match, where $\mu(d)=\emptyset$ denotes going unmatched (and likewise with $\mu(h)$ for $h \in \Insts$).

Agents on each side have a strict ordinal preference list over the other side, i.e., rankings $\succ_d$ over $\Insts$ for each $d\in\Appls$, and likewise $\succ_h$ over $\Appls$ for $h\in\Insts$. 
We refer to these rankings of applicants as their preferences, but by convention, we refer to the rankings of institutions as \emph{priorities} (reflecting the fact that, as in many school choice mechanisms, we think of the priorities as fixed and/or determined by policies, instead of reflecting the institution's preferences).
Following \cite{Reny22}, we assume for simplicity that each institution ranks all applicants;
we allow applicants' lists to be partial (i.e., they can rank some ``outside option'' $\emptyset$, which represents going unmatched, above some institutions).

\begin{definition}[Stable matchings and priority violations]
  A matching $\mu$ is \emph{unstable} if there exists a $d \in \Appls$ and $h\in\Insts$ such that $h \succ_d \mu(d)$ and $d \succ_h \mu(h)$. 
  In this case, we say that $d$'s \emph{priority is violated} (at $h$).
  If there are no such $d$ and $h$, we say that $\mu$ is \emph{stable}.
  Let $\Stab$ denote the set of all stable matchings.

  For a matching $\mu$, denote the set of applicants whose priority is violated in $\mu$ by 
  \[ \vio(\mu) = \{ d \in \Appls \ \mid\ \exists h\in \Insts:\text{ $ d\succ_h\mu(h)$ and $h \succ_d \mu(d)$} \}. \qedhere \]
\end{definition}

We let $\APDA$ denote the applicant-optimal stable matching, i.e. the outcome of the applicant-proposing deferred acceptance algorithm.
Likewise, let $\IPDA$ denote the institution-optimal / institution-proposing matching.
We typically consider a fixed profile of applicant preferences and institution priorities $P$, and thus suppress the dependence on $P$ in our notation $\APDA$, $\IPDA$, and other matching mechanism. 
When we need to refer to $P$ explicitly, we denote the matchings $\APDA(P)$ and $\IPDA(P)$.

\paragraph{(Stable) Lattices.}
This paper studies lattice structures on sets of matchings.
A (finite) \emph{lattice} $Q$ is a partial order $\le$ over a (finite) set $Q$ such that every pair of elements in $Q$ have a meet and a join.
The \emph{meet} (also called the \emph{greatest lower bound}) of $x, y \in Q$, denoted $x \wedge_Q y$, is the element $z \in Q$ such that $z\le x$ and $z \le y$, and moreover for each $w \in Q$ such that $w \le x$ and $w\le y$, we have $w \le z$.
Likewise, the \emph{join} (also called the \emph{least upper bound}) of $x, y \in Q$, denoted $x \vee_Q y$, is the element $z\in Q$ such that $z\ge x$ and $z \ge y$, and for each $w\in Q$ such that $w \ge x$ and $w\ge y$, we have $w \ge z$.
A lattice $Q$ is \emph{distributive} if, for each $x, y, z \in Q$, we have $\mu \wedge_Q (\nu \vee_Q \xi) = (\mu \wedge_Q \nu) \vee_Q (\mu \wedge_Q \xi)$.

We say that matching $\mu$ (weakly) \emph{dominates} matching $\nu$, denoted $\mu \ge \nu$, if for each applicant $d$ we have $\mu(d)\succeq_d \nu(d)$.
We write $\mu > \nu$ if $\mu \ge \nu$ and $\mu\ne\nu$.
Throughout this entire paper, the matchings are ordered by dominance.

For two matchings $\mu, \nu$, let the \emph{coordinatewise maximum} (resp., \emph{minimum}) of $\mu$ and $\nu$ be the function $\xi$ from applicants to institutions given by $\xi(d) = \max_{\succ_d} \{ \mu(d), \nu(d) \}$ (resp., $\xi(d)=\min_{\succ_d} \{ \mu(d), \nu(d) \}$). 
In some (but not all) of the latices we consider, joins/meets are given by coordinatewise maxima/minima.

It is classically known that stable matchings form a distributive lattice with joins/meets given by coordinatewise maxima/minima, and that all distributive lattices arise as the set of stable matchings \cite{Knuth76, blair1984every}.
For exposition, see \cite{GusfieldI89, echenique2023online}.

\begin{theorem}[Distributive Lattices Characterize Stable Matchings]
  The set of stable matchings is always nonempty, and forms a lattice under the dominance ordering.
  In this lattice, the join (resp., meet) of any two stable matchings is their coordinatewise maximum (resp., minimum).
  Moreover, a lattice $Q$ is isomorphic to a lattice of stable matchings for some set of preferences and priorities if and only if $Q$ is distributive.
\end{theorem}

\paragraph{Denoting Matchings Using Rotations.}

\label{sec:rotations-mentioned}

The distributive lattice of stable matchings additionally has a compact and useful representation in terms of a concept called ``rotations'' \cite{GusfieldI89,echenique2023online}.
We recall the full details of this representation in \autoref{sec:aditional-prelims}.
For ease of exposition, we begin by using rotations only as notation for denoting different matchings, as follows.

A \emph{rotation} is an ordered list of agents $\alpha = [ (h_1, d_1),\allowbreak (h_2, d_2),\allowbreak \ldots,\allowbreak (h_k, d_k)]$, where each $h_i \in \Insts$ and $d_i \in \Appls$.
If matching $\mu$ has $\mu(h_i)=d_i$ for each $i=1,\ldots,k$, then we let $\alpha\mu$, which is called the \emph{elimination of rotation $\alpha$ from $\mu$}, denote the matching which differs from $\mu$ only by matching $d_i$ to $h_{i+1}$ for $i=1,\ldots,k$ modulo $k$.
Hence, the action of a rotation can be visualized as moving the parentheses in $\alpha$ over one place, modulo $k$.
Moreover, we will always consider rotations $\alpha$ eliminated at matchings $\mu$ such that all applicants weakly prefer $\alpha\mu$ to $\mu$;
this convention allows us to think about matchings ``higher'' in the lattice as corresponding to ``larger'' sets of rotations, which we find intuitive.\footnote{
  In contrast to this, \cite{GusfieldI89} (resp., \cite{echenique2023online}) define rotations such that $\mu \ge \alpha \mu$, i.e., such that all men (resp., workers) weakly prefer $\mu$ to $\alpha\mu$.
  Other than this change in convention, the notions are identical.
  We adopt our convention---that larger sets of rotations correspond to matchings preferred more by the applicants---in an attempt to make working with priority-neutral matchings intuitive, since in this setting one side of the market (the applicants) are distinguished, and we seek to improve their welfare.
}
We write $\alpha_1\alpha_2\ldots\alpha_\ell\mu$ to mean $\alpha_1(\alpha_2(\ldots(\alpha_\ell\mu)\ldots))$; when the ``base'' matching $\mu$ is understood, we also simply denote this by $\alpha_1\alpha_2\ldots\alpha_\ell$.

\subsection{The Priority-Neutral Matching Lattice}
\label{sec:prelims-pn}

As discussed in the introduction, there are important reasons to look beyond the set of stable matchings. 
We first discuss a specific matching mechanism $\EADAM$ which improves applicants' matches beyond their stable ones, and will be important for our analysis.
Then we recall the main notion we are interested in: priority-neutrality.

\paragraph{Pareto-Optimality and $\EADAM$.}

When $\mu$ dominates $\nu$, i.e., $\mu > \nu$, we also say that $\mu$ Pareto-improves $\nu$.
Matching $\mu$ is \emph{Pareto-optimal} if there does not exist a $\nu$ with $\nu > \mu$.
The literature on Pareto-improving stable matchings has identified a central mechanism which is Pareto-optimal and dominates $\DA$, namely, \cite{kesten2010school}'s efficiency-adjusted DA mechanism ($\EADAM$).
One approach to defining $\EADAM$, pioneered by \cite{TangY14}, is as follows:

\begin{definition}[$\EADAM$ and the Kesten-Tang-Yu sequence]
  \label{def:Kesten-Tang-Yu-Alg}
  For a profile of preferences and priorities $P$, define the Kesten-Tang-Yu sequence as follows. 
  Let $\mu_1$ denote the outcome of $\APDA(P)$ under preference and priorities $P = P_1$. Call an institution $h$ \emph{under-demanded in round $i$} if $h$ never rejects any applicant during the run of $\APDA(P_i)$ (additionally, the ``institution'' $\emptyset$ representing going unmatched is always considered under-demanded). 
  For each $i \ge 2$, let $P_i$ denote modifying the profile of preferences as follows: for each applicant $d$ matched in $\mu_{i-1}$ to an institution $h$ which is under-demanded in round $i-1$, remove all institutions strictly before $h$ in $d$'s preference list.
  Then, let $\mu_i = \APDA(P_i)$.
  There must exist an $N$ such that $\mu_n = \mu_{N}$ for each $n \ge N$.
  The outcome of efficiency-adjusted DA ($\EADAM$) is $\mu_N$.
\end{definition}

\cite{TangY14} prove that at least one institution is under-demanded in each round, and thus $\EADAM$ can be computed via at most $N$ runs of $\APDA$, where $N$ is the number of institutions.
Moreover, $\EADAM$ is Pareto-optimal and dominates all stable matchings.
$\EADAM$ is closely related to our main definition of interest, priority-neutrality, and will feature into our analysis.

\paragraph{Priority-Neutrality.}

We now turn to the main object of study in this paper: the lattice of priority-neutral matchings \cite{Reny22}.
In words, a matching $\mu$ is priority-neutral if it's impossible to make any applicant whose priority is violated in $\mu$ better off, without violating the priority of some applicant who is made worse off.
To fully spell out this definition, we introduce the terminology ``priority-correcting adjustment'' to aid in discussing matchings which are \emph{not} priority-neutral.
Namely, we have:
\begin{definition}[Priority-Neutrality]
  \label{def:priority-neutral}
  Say that matching $\nu$ is a \emph{priority-correcting adjustment (abbreviated PCA)} of matching $\xi$ if:
  \begin{enumerate}[(1)]
    \item \label{item:vio-impr} There exists an applicant $d_* \in \vio(\xi)$ with $\nu(d_*) \succ_{d_*} \xi(d_*)$.
    \item \label{item:decr-nonvio} For each $d$ such that $d \in \vio(\nu)$, we have $\nu(d) \succeq_d \xi(d)$.
  \end{enumerate}

  A matching $\mu$ is \emph{priority-neutral} if there does not exist any priority-correcting adjustment of $\mu$. 
  We denote the set of priority-neutral matchings by $\PN$.
\end{definition}

Priority-correcting adjustments capture failures of priority-neutrality; indeed, \cite{Reny22} argues they ``correct'' some applicant $d$'s priority violation (without unduly harming others).
In other words, priority-neutral matchings allow for some applicants' priority to be violated, but only when such applicants cannot be improved without harming others and also violating their priority.

Observe that, since stable matchings violate no applicant's priority, every stable matching is priority-neutral.
Thus, $\PN$ generalizes the set of stable matchings, and \cite{Reny22} proves that $\PN$ also forms a lattice.
Moreover, \cite{Reny22} proves that the set $\PN$ is closed under the coordinatewise maximum operation, and identifies $\IPDA$ and $\EADAM$ as the applicant-pessimal and applicant-optimal elements of $\PN$, respectively.
Formally:

\begin{theorem}[\cite{Reny22}]
  \label{thrm:reny-lattice}
  For any profile of priorities and preferences, $\PN$ is a lattice under $\le$, and this lattice contains every stable matching.
  Additionally, the applicant-pessimal element of $\PN$ is $\IPDA$, and the applicant-optimal element of $\PN$ is $\EADAM$.
  Finally, for any two matchings $\mu, \nu$ in $\PN$, the least upper bound of $\mu$ and $\nu$ in $\PN$ equals their coordinatewise maximum.
\end{theorem}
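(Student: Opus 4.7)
The plan is to establish four ingredients in sequence: (i) every stable matching is priority-neutral, (ii) $\PN$ is closed under coordinatewise maximum, (iii) $\IPDA$ is the minimum element of $\PN$, and (iv) $\EADAM$ is the maximum element of $\PN$. Ingredient (i) is immediate from \autoref{def:priority-neutral}: for a stable matching $\mu$, $\vio(\mu) = \emptyset$ and so item~\ref{item:vio-impr} cannot be satisfied by any candidate adjustment, making $\mu$ vacuously priority-neutral. Given (ii), $\PN$ becomes a join-semilattice, and combined with the minimum element from (iii) this promotes to a full lattice: the meet of $\mu,\nu$ is the join of the (nonempty, since $\IPDA \in \PN$) set of PN matchings below both.

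The technical crux is (ii). Let $\mu, \nu \in \PN$ and $\xi$ be their coordinatewise max. I would first show $\xi$ is a valid matching: if not, there are $d_1 \ne d_2$ with $\xi(d_1) = \xi(d_2) = h$, say with $\mu(d_1) = h = \nu(d_2)$, and whichever of $d_1, d_2$ has lower priority at $h$ witnesses a priority violation in $\mu$ or $\nu$. Rather than obtaining an immediate blocking-pair contradiction as in the stable case, I would construct an explicit priority-correcting adjustment using $\nu$ (or $\mu$), possibly after a local modification that reassigns conflicting applicants, and verify both items of \autoref{def:priority-neutral} to contradict $\mu \in \PN$ or $\nu \in \PN$. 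Once $\xi$ is a matching, a direct check shows $\vio(\xi) \subseteq \vio(\mu) \cup \vio(\nu)$: a violating pair $(d,h)$ in $\xi$ gives $h \succ_d \xi(d) \succeq_d \mu(d),\nu(d)$ and $d \succ_h \xi(h)$, and since $\xi(h) \in \{\mu(h),\nu(h)\}$ the same pair violates priority in at least one of $\mu, \nu$. Finally, to show $\xi \in \PN$, suppose $\zeta$ is a priority-correcting adjustment of $\xi$ witnessed by $d_* \in \vio(\xi)$; WLOG $d_* \in \vio(\mu)$. Then $\zeta(d_*) \succ_{d_*} \xi(d_*) \succeq_{d_*} \mu(d_*)$ and $\zeta(d) \succeq_d \xi(d) \succeq_d \mu(d)$ for each $d \in \vio(\zeta)$, so $\zeta$ is also a priority-correcting adjustment of $\mu$, contradicting $\mu \in \PN$.

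For (iii), if $\mu \in \PN$ fails to dominate $\IPDA$, some $d^*$ has $\IPDA(d^*) \succ_{d^*} \mu(d^*)$. The natural adjustment candidate is $\IPDA$ itself: item~\ref{item:decr-nonvio} holds vacuously since $\vio(\IPDA) = \emptyset$, so it suffices to produce some $d_* \in \vio(\mu)$ strictly improved by $\IPDA$. If $d^* \in \vio(\mu)$, done; otherwise $\mu(h^*) \succ_{h^*} d^*$ for $h^* = \IPDA(d^*)$, and stability of $\IPDA$ yields a new applicant $d_1 = \mu(h^*)$ with $\IPDA(d_1) \succ_{d_1} \mu(d_1)$; iterate. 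Either the chain enters $\vio(\mu)$ or closes into a cycle, in which case rotating along the cycle yields a priority-correcting adjustment that only improves cycle members outside $\vio(\mu)$---the cyclic case can be handled by appealing to the already-established closure under join (using $\mu \vee_{\PN} \IPDA$) or by refining the chain. For (iv), I would apply the Kesten-Tang-Yu sequence (\autoref{def:Kesten-Tang-Yu-Alg}) and argue inductively that every $\mu \in \PN$ satisfies $\mu \le \mu_i$ at each stage; the limit gives $\mu \le \EADAM$, and $\EADAM \in \PN$ follows from a direct verification using that only under-demanded institutions are ``released'' during the sequence.

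The main obstacle is step (ii)(a), showing the coord max is a matching under only the PN hypothesis. In the stable setting this is one line, but since priority violations are permitted in PN matchings, the ``double-demand'' conflict does not directly contradict any hypothesis and must instead be converted into an explicit priority-correcting adjustment that navigates the possible violations in both $\mu$ and $\nu$ simultaneously. A secondary subtlety is in (iii), ensuring the chain argument reaches an applicant in $\vio(\mu)$ rather than evading it in a closed cycle; this may require a cleverer choice of adjustment or leveraging the already-established join closure.
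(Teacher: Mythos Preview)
The paper does not prove this theorem; it is quoted as a background result from \cite{Reny22} (see the statement and the absence of any proof following it). So there is no ``paper's own proof'' to compare against, and I evaluate your sketch on its own terms.

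Your overall architecture is right: (i) is immediate, and the promotion from ``join-semilattice with a minimum element'' to ``lattice'' is the standard move. The second half of (ii) is also correct: once $\xi$ is known to be a matching, your argument that $\vio(\xi)\subseteq\vio(\mu)\cup\vio(\nu)$ and that any priority-correcting adjustment of $\xi$ is already one of $\mu$ or $\nu$ goes through cleanly.

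There are, however, two genuine gaps. First, in (ii)(a) you correctly flag that showing $\xi$ is a matching is the crux, but your proposed fix (``construct an explicit priority-correcting adjustment using $\nu$ \ldots\ possibly after a local modification'') does not work in the obvious way. If $\mu(d_1)=h=\nu(d_2)$ with $d_1\succ_h d_2$, then indeed $d_1\in\vio(\nu)$ and $\mu(d_1)\succ_{d_1}\nu(d_1)$, but to make $\mu$ a priority-correcting adjustment of $\nu$ you would also need $\mu(d)\succeq_d\nu(d)$ for \emph{every} $d\in\vio(\mu)$, which is not guaranteed. A ``local modification'' does not obviously repair this: any reassignment you make to fix one such $d$ can create new members of $\vio(\cdot)$ elsewhere. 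Reny's actual argument for this step is more delicate than a one-shot construction.

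Second, your plan for (iv) contains an error as written. You propose to ``argue inductively that every $\mu\in\PN$ satisfies $\mu\le\mu_i$ at each stage'' of the Kesten--Tang--Yu sequence. But $\mu_1=\APDA(P)$, and priority-neutral matchings can strictly dominate $\APDA$ (indeed $\EADAM$ itself does whenever $\APDA$ is not Pareto-optimal), so the base case fails. Whatever induction is intended here, it cannot be the one stated. Similarly, in (iii) your fallback for the cyclic case (``appealing to the already-established closure under join'') is not spelled out in a way that yields the needed applicant in $\vio(\mu)$; knowing $\mu\vee\IPDA\in\PN$ does not by itself produce a priority-correcting adjustment of $\mu$.
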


Note, however, that \cite{Reny22} leaves as an open question whether the greatest lower bound of two matchings in $\PN$ is their coordinatewise minimum.
Our first result below resolves this question.

\section{Priority-Neutral Lattices Are Not Distributive}
\label{sec:non-distributive}

In this section, we show that the $\PN$ lattice lacks much of the tractable structure of the stable lattice.
We give a self-contained proof that $\PN$ is not distributive, and that greatest lower bounds in $\PN$ are not given by coordinatewise minimums.

We prove that the above results via an explicit example, presented in \autoref{fig:non-distributive}.
\autoref{fig:non-distributive-a} gives the priorities and preferences of the construction.
As we show below, \autoref{fig:non-distributive-b} displays the $\PN$ lattice for this instance, where $\mu\le\nu$ if any only if $\mu$ is below $\nu$ in a path in the diagram. %
\autoref{fig:non-distributive-c} displays $\IPDA$ and the set of rotations used to describe $\PN$, as discussed in \autoref{sec:rotations-mentioned}.
For brevity, through this section we write $\alpha_1\ldots\alpha_k$ to mean $\alpha_1\ldots\alpha_k\IPDA$.%
\footnote{
    For example, the matching denoted $\alpha$ equals $\{(h_A, d_B), (h_B, d_A), (h_C, d_C), (h_D, d_D), (h_E, d_E), (h_F, d_F), (h_Z, d_Z)\} $, and the matching $\beta \gamma$ equals $\{(h_A, d_A), (h_B, d_B), (h_C, d_D), (h_D, d_C), (h_E, d_F), (h_F, d_E), (h_Z, d_Z)\}$.
}

    \begin{figure}[htb]
        \begin{minipage}{0.45\textwidth}
            \begin{minipage}{\textwidth} \centering
                Institutions' Priorities:
                \\[0.05in]
                \begin{tabular}{ccccccc}
                    \toprule
                    $h_A$ & $h_B$
                    & $h_C$ & $h_D$
                    & $h_E$ & $h_F$
                    & $h_Z$ \\
                    \midrule
                    $d_A$ & $d_B$
                    & $d_C$ & $d_D$
                    & $d_E$ & $d_F$
                    & $d_Z$ \\
                    $d_B$ & $d_A$
                    & $d_Z$ & $d_C$
                    & $d_F$ & $d_E$
                    & $d_A$ \\
                    $d_E$ & $\myvdots$ 
                    & $d_D$ & $\myvdots$
                    & $d_Z$ & $\myvdots$ 
                    & $\myvdots$ \\
                    $\myvdots$ &
                    & $\myvdots$ &
                    & $\myvdots$ &
                    &
                \end{tabular}
            \end{minipage}
            \\[0.3in]
            \begin{minipage}{\textwidth} \centering
                Applicants' Preferences:
                \\[0.05in]
                \begin{tabular}{ccccccc}
                    \toprule
                    $d_A$ & $d_B$
                    & $d_C$ & $d_D$
                    & $d_E$ & $d_F$
                    & $d_Z$ \\
                    \midrule
                    $h_B$ & $h_A$
                    & $h_D$ & $h_C$
                    & $h_F$ & $h_E$
                    & $h_E$ \\
                    $h_Z$ & $h_B$
                    & $h_C$ & $h_D$
                    & $h_A$ & $h_F$
                    & $h_C$ \\
                    $h_A$ & $\myvdots$ 
                    & $\myvdots$ & $\myvdots$
                    & $h_E$ & $\myvdots$ 
                    & $h_Z$ \\
                    $\myvdots$ & 
                    & &
                    & $\myvdots$ &
                    & $\myvdots$
                \end{tabular}
            \end{minipage} 
            \begin{minipage}{\textwidth} 
                {\centering
                \subcaption{Preferences and Priorities.}
                \label{fig:non-distributive-a}
                }
                {\footnotesize
                \textbf{Note:} The vertical dots indicate that the remainder of the preference/priority list is irrelevant.
                \par}
            \end{minipage} 
        \end{minipage} 
        \qquad
        \begin{minipage}{0.45\textwidth}
                \includegraphics[width=0.8\textwidth]{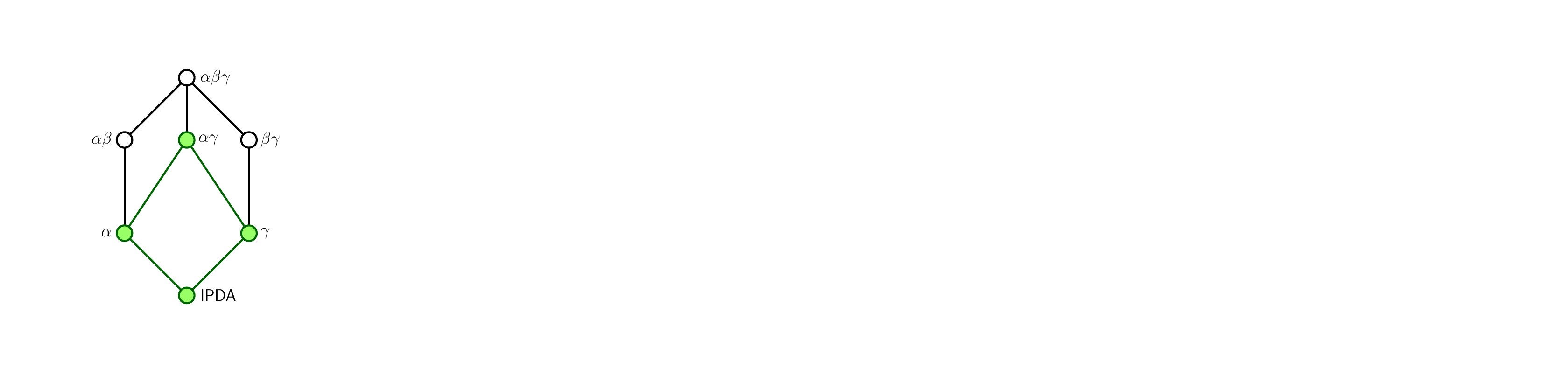}
            \begin{minipage}{\textwidth} \centering
                \subcaption{Priority-neutral lattice.}
                \label{fig:non-distributive-b}
            \end{minipage} 
        \end{minipage} 
        \\[0.5in]
        \begin{minipage}{0.85\textwidth} \centering
            \begin{minipage}{0.5\textwidth} \centering
                Institution-optimal stable matching $\mu_0$:
                \\[0.05in]
                \begin{tabular}{ccccccc}
                    $h_A$ & $h_B$
                    & $h_C$ & $h_D$
                    & $h_E$ & $h_F$
                    & $h_Z$ \\
                    \midrule
                    $d_A$ & $d_B$
                    & $d_C$ & $d_D$
                    & $d_E$ & $d_F$
                    & $d_Z$
                \end{tabular}
            \end{minipage} 
            \quad
            \begin{minipage}{0.45\textwidth} \centering
                Relevant rotations:
                \begin{align*}
                    \alpha & = [ (h_A, d_A), (h_B, d_B) ]
                    \\ \beta & = [ (h_C, d_C), (h_D, d_D) ]
                    \\ \gamma & = [ (h_E, d_E), (h_F, d_F) ]
                \end{align*}
            \end{minipage} 
            \subcaption{Matching and rotations used to denote the matchings in \autoref{fig:non-distributive-b}.}
            \label{fig:non-distributive-c}
        \end{minipage} 
        \caption{Set of priorities and preferences witnessing \autoref{thrm:non-distributive}, along with the priority-neutral lattice (with the stable matchings shaded in green).}
        \label{fig:non-distributive}
    \end{figure}

The crucial fact about this construction is that $\alpha\beta$ and $\beta\gamma$ are priority-neutral, but $\beta$ is not priority-neutral. 
This will immediately imply that $\PN$ is not closed under coordinatewise minimums (since indeed $\beta$ is the coordinatewise minimum of $\alpha\beta$ and $\beta\gamma$).
Two important things are used to guarantee this property.
First, there is an applicant $d_Z$ can be Pareto-improved in $\IPDA$;
however, either of the rotations $\alpha$ or $\gamma$ make this Pareto-improvement no longer possible.\footnote{
    The most involved part of the proof below involves arguing that in $\alpha\beta$ and $\beta\gamma$, not only is $d_Z$ not Pareto-improvable, but $d_Z$ has no priority-correcting adjustment (\autoref{def:priority-neutral}).
}
Second, $d_Z$'s priority is violated in $\beta$, %
and additionally $d_Z$ can Pareto-improve in $\beta$ (as well as $\IPDA$), showing that $\beta \notin \PN$.
As we argue below, all of the other relevant combinations of rotations will be priority-neutral matchings, showing that $\PN$ is given by the lattice in \autoref{fig:non-distributive}.

\begin{proposition}
    \label{prop:non-distributive-proof}
    For the preferences and priorities given in \autoref{fig:non-distributive-a}, priority-neutral lattice $\PN$ is given by \autoref{fig:non-distributive-b}.
\end{proposition}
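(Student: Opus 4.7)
The plan is to verify by direct case analysis which matchings lie in $\PN$ and to confirm the Hasse diagram in \autoref{fig:non-distributive-b}. \autoref{thrm:reny-lattice} anchors the endpoints: $\mu_0$ is stable by inspection of \autoref{fig:non-distributive-a} (each $h_X$ ranks $d_X$ first), and one checks it is institution-optimal, so $\mu_0 = \IPDA$. Running applicant-proposing DA yields $\alpha\gamma$ as $\APDA$, and the Kesten--Tang--Yu procedure (\autoref{def:Kesten-Tang-Yu-Alg}) stabilizes after one nontrivial iteration (which truncates $d_Z$'s preferences below the under-demanded institution $h_Z$), giving $\EADAM = \alpha\beta\gamma$. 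Hence $\PN \subseteq [\mu_0, \alpha\beta\gamma]$. Since $\mu_0$ and $\alpha\beta\gamma$ both assign $d_Z \to h_Z$, every matching in this interval does so as well; moreover, direct inspection shows the remaining applicants must be matched by choosing a permutation within each of the pairs $(h_A,h_B)$, $(h_C,h_D)$, $(h_E,h_F)$. This gives exactly $2^3 = 8$ candidate matchings: the seven shown in \autoref{fig:non-distributive-b} together with $\beta$.

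Next are the routine checks. The four stable matchings $\mu_0, \alpha, \gamma, \alpha\gamma$ have $\vio = \emptyset$ and lie vacuously in $\PN$, as can be confirmed by checking that no blocking pair exists in each, exploiting that $\alpha$ and $\gamma$ act as textbook swap rotations on disjoint blocks of priorities. The matching $\beta$, in contrast, is \emph{not} priority-neutral: $d_Z \in \vio(\beta)$ because $h_C$'s priorities rank $d_Z$ above $\beta(h_C) = d_D$, and the cycle $d_Z \to h_E,\ d_E \to h_A,\ d_A \to h_Z$ applied to $\beta$ is a priority-correcting adjustment, since all three movers strictly improve and no worse-off applicant lies in $\vio$ of the resulting matching.

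The main obstacle is showing that $\alpha\beta$, $\beta\gamma$, and $\alpha\beta\gamma$ are in $\PN$, since $d_Z$'s priority at $h_C$ remains violated in each. For $\alpha\beta$ I would assume a priority-correcting adjustment $\nu$ exists and split on $\nu(d_Z) \in \{h_E, h_C\}$. If $\nu(d_Z) = h_E$, then $d_E$ must be displaced; the only destinations keeping $d_E \notin \vio(\nu)$ are $h_F$ and $h_A$, but placing $d_E \to h_F$ pushes $d_F$ off its slot, after which $d_F \in \vio(\nu)$ at $h_E$ (since $h_E$ prefers $d_F$ to $d_Z$) while worse off than in $\alpha\beta$, and placing $d_E \to h_A$ pushes $d_B$ off $h_A$, after which $d_B \in \vio(\nu)$ at $h_A$ (since $h_A$ prefers $d_B$ to $d_E$) while worse off. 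In either case condition~(\ref{item:decr-nonvio}) of \autoref{def:priority-neutral} fails. If instead $\nu(d_Z) = h_C$, then $d_D$'s priorities force $\nu(d_D) = h_D$ (else $d_D \in \vio(\nu)$ at $h_D$ while worse off), which pushes $d_C$ off its first choice $h_D$; then $d_C$ is worse off and $d_C \in \vio(\nu)$ at $h_C$ (since $h_C$ prefers $d_C$ to $d_Z$). The argument for $\beta\gamma$ is symmetric by swapping the roles of the $\alpha$- and $\gamma$-blocks, and $\alpha\beta\gamma$ inherits both obstructions. The covering relations of \autoref{fig:non-distributive-b} then follow by direct applicant-by-applicant comparison among the seven matchings.
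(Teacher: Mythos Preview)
Your proposal is correct and takes essentially the same approach as the paper: anchor the endpoints via $\IPDA$ and $\EADAM$, enumerate the eight matchings in $[\mu_0,\alpha\beta\gamma]$, verify the four stable ones, exhibit the same Pareto-improving $\nu_*$ to show $\beta\notin\PN$, and rule out priority-correcting adjustments of $\alpha\beta$ and $\beta\gamma$ via the identical case split on $\nu(d_Z)\in\{h_E,h_C\}$. The only quibble is that the $\beta\gamma$ case is not literally symmetric to the $\alpha\beta$ case under swapping blocks (e.g.\ $d_Z$ ranks $h_E$ but not $h_A$, and $d_E$ ranks $h_A$ while $d_A$ does not rank $h_E$), so it warrants its own short direct argument---which the paper gives and which is in fact simpler, since in $\beta\gamma$ the applicant displaced from $h_E$ is $d_F$, whose top choice is $h_E$, yielding an immediate violation.
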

\begin{proof}
    To begin, we check that the stable matching lattice, which is always contained in $\PN$ by \autoref{thrm:reny-lattice}, is given by the green-shaded matchings in \autoref{fig:non-distributive-b}.
    First, the institution-optimal matching equals $\mu_0$, as defined in \autoref{fig:non-distributive-c}, since all institutions have distinct top-priority applicants equal to their match in $\mu_0$.
    Second, one can run applicant-proposing deferred acceptance to see that $\alpha\gamma$ is the applicant-optimal stable matching. %
    Third, one can verify directly that the matchings $\alpha$ and $\gamma$ have no priority violations, and are thus stable; since no other distinct matchings $\mu'$ exist with $\mu_0 < \mu' < \alpha\gamma$, these green shaded matchings are all stable matchings.\footnote{
        In full detail, in $\alpha$, all institutions except for $h_A$ and $h_B$ are matched to their top-priority applicants; while $d_E$ prefers $h_A$ to their match in $\alpha$, we have $d_B \succ_{h_A} d_E$, so $\alpha$ is stable.
        Similarly, in $\gamma$, all institutions except for $h_E$ and $h_F$ are matched to their top-priority applicants; while $d_Z$ prefers $h_E$ to their match in $\gamma$, we have $d_F \succ_{h_E} d_Z$, and $\gamma$ is stable.
        Alternatively, the methods of \cite{GusfieldI89} can be used to directly compute the set of rotation representing the stable matching lattice (which are $\alpha, \gamma$ with no predecessor relations).
    }

    Next, we verify that the applicant-optimal priority-neutral matching is $\alpha\beta\gamma$.
    By \autoref{thrm:reny-lattice}, this matching equals the $\EADAM$ outcome, which can be shown to equal $\alpha\beta\gamma$ by the procedure in \autoref{def:Kesten-Tang-Yu-Alg}.
    (Indeed, in the second round of the Kesten-Tang-Yu sequence, $d_Z$ will be fixed at $h_Z$, and $d_C$ and $d_D$ will match to their top choices $h_D$ and $h_C$, so $\alpha\beta\gamma$ equals $\EADAM$ and this matching is the highest element of the $\PN$ lattice.)

    Now, three relevant matchings remain which we have not yet considered: $\beta$, $\alpha\beta$, and $\beta\gamma$.
    Since these are the only not-yet-considered matchings $\mu$ such that $\mu_0 < \mu < \alpha\beta\gamma$, we know by \autoref{thrm:reny-lattice} that all non-stable matchings in $\PN$ must be in $\{\beta, \alpha\beta, \beta\gamma \}$.
    We will proceed by showing that $\beta \notin \PN$, but $\alpha\beta \in \PN$ and $\beta\gamma \in \PN$.

    We now show $\beta \notin \PN$. To see this, consider $\beta$, and define the matching $\nu$ as follows:
    \\[0.1in]
    \begin{minipage}{\textwidth} \centering
    $\beta:$
    \begin{tabular}{ccccccc}
        $h_A$ & $h_B$
        & $h_C$ & $h_D$
        & $h_E$ & $h_F$
        & $h_Z$ \\
        \midrule
        $d_A$ & $d_B$
        & $d_D$ & $d_C$
        & $d_E$ & $d_F$
        & $d_Z$
    \end{tabular}
    \qquad
    \qquad
    $\nu_*: $
    \begin{tabular}{ccccccc}
        $h_A$ & $h_B$
        & $h_C$ & $h_D$
        & $h_E$ & $h_F$
        & $h_Z$ \\
        \midrule
        $d_E$ & $d_B$
        & $d_D$ & $d_C$
        & $d_Z$ & $d_F$
        & $d_A$
    \end{tabular}.
    \label{page:nu-star-beta-PCA}
    \end{minipage}
    \\[0.1in]
    We claim that $\nu_*$ is a priority-correcting adjustment of $\beta$ for $d_Z$. 
    To see this, note that in $\beta$, applicant $d_Z$'s priority is violated (at institutions $h_C$), and $\nu_*(d_Z) \succ_{d_Z} \beta(d_Z)$.
    Additionally, note that $\nu_*$ Pareto-dominates $\beta$, i.e., no applicant $d$ exists with $\beta(d) \succ_d \nu_*(d)$ (regardless of whether the priority of $d$ is violated).
    Thus, $\nu_*$ is a priority-correcting adjustment of $\beta$, and $\beta\notin\PN$.

    Next, consider the matchings $\alpha\beta$ and $\beta\gamma$:
    \\[0.1in]
    \begin{minipage}{\textwidth} \centering
    $\alpha\beta:$
    \begin{tabular}{ccccccc}
        $h_A$ & $h_B$
        & $h_C$ & $h_D$
        & $h_E$ & $h_F$
        & $h_Z$ \\
        \midrule
        $d_B$ & $d_A$
        & $d_D$ & $d_C$
        & $d_E$ & $d_F$
        & $d_Z$
    \end{tabular}
    \qquad
    \qquad
    $\beta\gamma: $
    \begin{tabular}{ccccccc}
        $h_A$ & $h_B$
        & $h_C$ & $h_D$
        & $h_E$ & $h_F$
        & $h_Z$ \\
        \midrule
        $d_A$ & $d_B$
        & $d_D$ & $d_C$
        & $d_F$ & $d_E$
        & $d_Z$
    \end{tabular}.
    \end{minipage}
    \\[0.1in]
    We will show that both $\alpha\beta$ and $\beta\gamma$ are in $\PN$, i.e., we will show that these matchings have no priority-correcting adjustments.
    To begin to see this, observe directly that $d_Z$ is the only applicant whose priority is violated in these matchings, so any priority-correcting adjustment $\nu$ must improve the match of $d_Z$.
    Informally, we will show that no priority-correcting adjustment can exist because the only ``reasonable'' candidate will be $\nu_*$ as defined above. 
    However, $\nu_*$ makes $d_B$ worse off than in $\alpha\beta$, and $d_B$'s priority is violated at $h_A$ in $\nu_*$, so $\nu_*$ is not a priority-correcting adjustment of $\alpha\beta$.
    Similarly, $\nu_*$ makes $d_F$ worse off than in $\beta\gamma$, and $d_F$'s priority is violated at $h_E$ in $\nu_*$, so $\nu_*$ is not a priority-correcting adjustment of $\beta\gamma$.
    Below, we show that indeed $\alpha\beta$ and $\beta\gamma$ have \emph{no} priority-correcting adjustments.

    We now formally show that $\alpha\beta \in \PN$.
    Suppose for contradiction that there exists a priority-correcting adjustment $\nu$ of $\alpha\beta$.
    Since the only applicant in $\alpha\beta$ whose priority is violated in $d_Z$, we know that $\nu$ must improve the match of $d_Z$, so $\nu(d_Z) = h_C$ or $\nu(d_Z) = h_E$.
    We consider these two cases.
    In the first case, if $\nu(d_Z) = h_C$, then $\nu(d_D)$ must be $h_D$ (since $d_D$ must receive a worse match in $\nu$ than in $\alpha\beta$, where $\alpha\beta(d_D) = h_C$, and $d_D$ has highest-priority at $h_D$, which is $d_D$'s top choice after $h_C$).
    But, then $\nu(d_C)$ must be $h_C$ (since similarly $d_C$ must receive a worse match in $\nu$ than in $\alpha\beta$, and $d_C$ has top priority at $h_C$), however, this contradicts the assumption that $d_Z$ is matched to $h_C$ in matching $\nu$.
    In the second case, suppose $\nu(d_Z) = h_E$.
    Since $d_E$ has higher priority than $d_Z$ at $h_E$, we know $d_E$ must be matched in $\nu$ to an institution she prefers to $h_E$; so we either have $\nu(d_E)=h_A$ or $\nu(d_E)=h_F$.
    If $\nu(d_E)=h_A$, then $d_B$ must receive a worse match in $\nu$ than in $\alpha\beta$; since $d_B$ has higher priority at $h_A$ than $d_E$, this would mean $\nu$ cannot be a priority-correcting adjustment, so we cannot have $\nu(h_E)=h_A$.
    If $\nu(d_E)=h_F$, then $\nu(d_F)$ must be different from $\alpha\beta(d_F) = h_F$; since $d_F$ has higher priority at $h_E$ than $d_Z$ and $d_F$'s preference list is $h_E\succ h_F$, this makes it impossible that $\nu(d_Z) = h_E$, and we cannot have $\nu(d_E)=h_F$.
    Thus, this shows that there exists no priority-correcting adjustment $\nu$ of $\alpha\beta$, so $\alpha\beta\in\PN$.

    We now formally show that $\beta\gamma \in \PN$. 
    Again, assume for contradiction there is a priority-correcting adjustment $\nu$ of $\beta\gamma$.
    Since $d_Z$ is the only applicant whose priority is violated in $\beta\gamma$, we must have $\nu(d_Z) = h_C$ or $\nu(d_Z) = h_E$.
    By precisely the same logic as in the previous paragraph, we cannot have $\nu(d_Z) = h_C$ (since this case only considered the matches of $d_C$ and $d_D$, which are the same in $\beta\gamma$ as in $\alpha\beta$).
    Thus, consider $\nu$ such that $\nu(d_Z) = h_E$.
    However, since $d_F$'s favorite institution was $h_E = \beta\gamma(d_F)$, and $d_F$ has higher priority than $d_Z$ at $h_E$, we immediately get that $\nu$ is not a priority-correcting adjustment of $\beta\gamma$.
    Thus, this shows that there exists no priority-correcting adjustment $\nu$ of $\beta\gamma$, so $\beta\gamma\in\PN$.

    Thus, the priority-neutral lattice consists of exactly those matchings in \autoref{fig:non-distributive-b}, completing the proof.
\end{proof}

We can now prove our first theorem.
The above example shows that the greatest lower bound of two matchings in $\PN$ need not be their coordinatewise minimum,
resolving an open question of \cite{Reny22}.
It also shows that the $\PN$ lattice is not distributive.

\begin{theorem}[$\PN$ Is Not Distributive]
    \label{thrm:non-distributive}
    $\PN$ is not a distributive lattice.
    Moreover, the greatest lower bound of two matchings in $\PN$ is not their coordinatewise minimum.
\end{theorem}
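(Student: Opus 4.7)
The plan is to leverage Proposition 3.1, which completely describes $\PN$ for the instance in \autoref{fig:non-distributive}, and read both assertions directly off this lattice using only per-applicant comparisons against the preferences in \autoref{fig:non-distributive-a}. I would prove the coordinatewise-minimum claim first, and then use the meet computations it forces to exhibit an $N_5$ sublattice, which obstructs distributivity.

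For the claim about coordinatewise minimums, I would focus on the pair $\alpha\beta, \beta\gamma \in \PN$. A direct per-applicant comparison shows that $\alpha\beta \wedge_{\Legal} \beta\gamma$ reverts $\{d_A,d_B\}$ and $\{d_E,d_F\}$ to their $\mu_0$-assignments (since in each of these pairs exactly one of $\alpha\beta$ and $\beta\gamma$ performs the swap, and the ``min'' under the applicants' preferences undoes that swap) while preserving the $\beta$-swap on $\{d_C,d_D\}$. So $\alpha\beta \wedge_{\Legal} \beta\gamma = \beta$. By Proposition 3.1 we have $\beta \notin \PN$, so $\beta$ cannot be the meet inside $\PN$. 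To pin down $\alpha\beta \wedge_{\PN} \beta\gamma$, I would list the $\PN$-elements below each. The matchings $\le \alpha\beta$ in $\PN$ are $\{\mu_0,\alpha,\alpha\beta\}$ (the other elements of $\PN$ assign $d_E$ strictly above $h_E$ or $d_F$ strictly above $h_F$). Of these, only $\mu_0$ and $\alpha\beta$ itself are $\le \beta\gamma$, because $\alpha(d_A) = h_B \succ_{d_A} h_A = \beta\gamma(d_A)$. Hence $\alpha\beta \wedge_{\PN} \beta\gamma = \mu_0 \ne \beta$.

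For non-distributivity, I would observe that $\{\mu_0, \alpha, \alpha\beta, \beta\gamma, \alpha\beta\gamma\}$ is an $N_5$ sublattice of $\PN$. The chain $\mu_0 < \alpha < \alpha\beta < \alpha\beta\gamma$ and the side chain $\mu_0 < \beta\gamma < \alpha\beta\gamma$ are immediate from the Hasse diagram in \autoref{fig:non-distributive-b}, and $\beta\gamma$ is incomparable to both $\alpha$ and $\alpha\beta$ since $\beta\gamma$ strictly improves $d_E$ but is strictly worse for $d_A$. A coordinatewise-maximum calculation (which equals the $\PN$-join by \autoref{thrm:reny-lattice}) gives $\alpha \vee \beta\gamma = \alpha\beta\gamma$, and the meet computation in the previous paragraph, applied again, gives $\alpha \wedge_{\PN} \beta\gamma = \mu_0$. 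The standard characterization of distributive lattices as those containing no $N_5$ or $M_3$ sublattice then rules out distributivity. Equivalently, to make the failure explicit, set $\mu = \alpha\beta$, $\nu = \alpha$, $\xi = \beta\gamma$: then $\mu \wedge_{\PN} (\nu \vee \xi) = \alpha\beta \wedge_{\PN} \alpha\beta\gamma = \alpha\beta$, while $(\mu \wedge_{\PN} \nu) \vee (\mu \wedge_{\PN} \xi) = \alpha \vee \mu_0 = \alpha$, and these differ.

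I do not expect a real obstacle, since Proposition 3.1 does the heavy lifting of fixing the entire $\PN$ lattice. The only care needed is in the per-applicant dominance comparisons used to enumerate $\PN$-elements below a given matching, and in remembering that meets in $\PN$ are greatest common lower bounds \emph{within $\PN$} rather than within the ambient set of all matchings — which is precisely the distinction that causes the coordinatewise minimum to fail to be the meet here.
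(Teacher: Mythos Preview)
Your approach is essentially the same as the paper's: both lean on Proposition~3.1 to fix the whole $\PN$ lattice, compute $\alpha\beta\wedge_{\Legal}\beta\gamma=\beta\notin\PN$ versus $\alpha\beta\wedge_{\PN}\beta\gamma=\mu_0$, and then exhibit the identical distributivity failure $\alpha\beta\wedge_{\PN}(\alpha\vee\beta\gamma)=\alpha\beta\neq\alpha=(\alpha\beta\wedge_{\PN}\alpha)\vee(\alpha\beta\wedge_{\PN}\beta\gamma)$. Your additional $N_5$-sublattice framing is a nice bonus (and in fact witnesses non-modularity, which the paper only remarks on later). One slip to fix: you write ``only $\mu_0$ and $\alpha\beta$ itself are $\le\beta\gamma$,'' but $\alpha\beta\not\le\beta\gamma$ (same reason you give for $\alpha$, since $\alpha\beta(d_A)=h_B$); the correct statement is that only $\mu_0$ lies below both, which is what you need anyway.
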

\begin{proof}
    We prove that both results of the theorem hold true in the lattice of priority-neutral matchings in \autoref{fig:non-distributive}.
    To prove the second sentence, observe that
    \[ \alpha\beta\wedge_{\PN}\beta\gamma = \mu_0 \ne \beta = \alpha\beta \wedge_{\Legal} \beta\gamma,\] 
    where we write $\wedge_{\PN}$ to denote the greatest lower bound of two matchings within $\PN$, and $\wedge_{\Legal}$ to denote their coordinatewise minimum.
    To prove the first sentence, we build on the above fact (used in equality (*)) and observe that
    \[ \alpha\beta \wedge_{\PN} \big(\alpha \vee \beta\gamma\big)
        = \alpha\beta \wedge_{\PN} \alpha\beta\gamma
        = \alpha\beta
        \ne \alpha
        = \alpha \vee \mu_0
        \overset{(*)}{=} \big(\alpha\beta \wedge_{\PN} \alpha \big) \vee \big( \alpha\beta \wedge_{\PN} \beta\gamma\big).
    \]
    Thus, the $\PN$ lattice is not distributive, which finishes the proof.
\end{proof}

\section{Building Up To Our Main Result}
\label{sec:building}

In this section, we build up to our main result.
First, in \autoref{sec:aditional-prelims}, we recall more-detailed preliminaries and prior results which will be important for our analysis.
Second, in \autoref{sec:structural-properties}, we prove new structural results of $\PN$, particularly a novel ``good chain'' property.
Third, in \autoref{sec:simpler-sufficient}, we prove a partial characterization, namely a necessary condition that all $\PN$ lattices must satisfy, as a tool to build towards our main characterization and illustrate the intricacies involved.

\subsection{Additional Preliminaries: Rotation DAG and Legal Lattice}
\label{sec:aditional-prelims}

\subsubsection{Rotation DAG Representation of \texorpdfstring{$\Stab$}{Stab}}
\label{sec:aditional-prelims-rotation-DAG}
Following the classical theory of stable matchings, we represent matching lattices via subsets of rotations (introduced as notation in 
\autoref{sec:rotations-mentioned}).
We recall here all the properties and definitions we need,
but defer full coverage of some topics (such as algorithms for actually finding the rotation representation) to the standard reference \cite{GusfieldI89}.
See also \cite{CaiT19} or \cite[Section 1.4]{echenique2023online}.\footnote{
  While the notion in our definitions are identical to classical representations, we phrase the definitions with slightly more data (such as having a fixed ordering on the rotations that gives a topological sort of the DAG).
  We find these extra pieces of data conceptually helpful when we generalize this construction in later sections, since they help express additional conditions on these generalized constructions.
}

Recall that %
a rotation $\rho = [ (h_1, d_1),\allowbreak \ldots,\allowbreak (h_k, d_k)]$, where each $h_i\in\Insts$ and $d_i\in\Appls$, is used to represent the difference between two matchings.
If $\mu$ has $\mu(h_i)=d_i$ for each $i$, we say $\rho$ is \emph{valid} at $\mu$.
In this case, we let $\alpha\mu$, called the \emph{elimination of $\alpha$ from $\mu$}, denote the matching which differs from $\mu$ only by matching $d_i$ to $h_{i+1}$ for $i=1,\ldots,k$ modulo $k$.
(We also denote rotations with letters such as $\alpha,\beta,\gamma$ in specific examples.)

\begin{definition}[Rotation DAG of distributive lattices]
  \label{def:rotation-dag}
    For some set of applicants $\Appls$ and institutions $\Insts$, a \emph{rotation DAG} is a tuple $G = (\mu, R, P)$, defined as follows.
    First, $\mu$ is some matching, referred to as the ``base matching.''
    Second, $R$ is some set of rotations $R = \{ \rho_1, \ldots, \rho_N \}$,
    equipped with a fixed ordering $[\rho_1,\ldots,\rho_N]$.
    Third, $P$ is a set of \emph{predecessor relationships} over $R$. 
    Each such predecessor relationship is a pair $p = (\rho_i, \rho_j) \in R \times R$, where $i > j$; we also think of such a $p$ as an edge $\rho_i \to \rho_j$, and in this case we say that $\rho_j$ is a predecessor of $\rho_i$ (and the fact that we always have $i>j$ means that $(R,P)$ forms a DAG).

    Consider some subset $X\subseteq R$ and some predecessor relationship $p = (r_0, r_1) \in P$.
    We say that $p$ is \emph{violated} by $X$ if $r_0 \in X$, but $r_1 \notin X$.\footnote{
      This terminology is inspired by the fact that $S\subseteq R$ represents a matching $\mu$, and whenever some predecessor relationship $p$ is violated by $S$, this means that some applicant's priority is violated in $\mu$.
    }

    Define the \emph{(abstract) distributive lattice}, denoted $D_{\mathsf{abs}}(G)$, to be the collection of subsets of $R$, ordered by set inclusion, defined as follows:
    \[ D_{\mathsf{abs}}(G) = \left\{ X \subseteq R\ \big|\ \text{No $p\in P$ is violated by $X$} \right\}. \]

    Define the \emph{(matching) distributive lattice}, denoted $D(G)$, as follows.
    For every $X \in D_{\mathsf{abs}}(G)$, we require that if $X = \{ \rho_{i(1)}, \ldots \rho_{i(k)} \}$, where $i(1) < \ldots < i(k)$, then $\rho_{i(j+1)}$ is valid at $\rho_{i(j)}\ldots\rho_{i(1)}\mu$ for each $j=0,\ldots,k-1$.
    Then, define $\match(X) = \rho_{i(k)}\ldots\rho_{i(1)}\mu$.
    We set
    \[ D(G) = \left\{ \match(X)\ \big|\ \text{$X \subseteq R$, and no $p\in P$ is violated by $X$} \right\}. \qedhere \]
\end{definition}

Note that any set $R$ and predecessor DAG on $R$ can define an abstract lattice $D_{\mathsf{abs}}(G)$ on subsets of $R$.
The matching lattice $D(G)$ then specifies, for each $X \in D_{\mathsf{abs}}(G)$, a corresponding matching $\match(X)$ which arises by eliminating all of the rotations in $X$ from the ``base'' matching $\mu$.
For a matching $\mu \in D(G)$, we also define $\rots(\mu) = X \in D_{\mathsf{abs}}(G)$ to be the unique $X \subseteq R$ such that $\match(X) = \mu$.

The classical representation of $\Stab$ is as follows:

\begin{theorem}[\cite{GusfieldI89}]
  \label{thrm:stable-rotation-dag}
  There exists a set of rotations $R$ and predecessors $P$, such that if $G = (\IPDA, R, P)$, then $D(G) = \Stab$.
\end{theorem}

Additionally, \cite{GusfieldI89,CaiT19} show that we can identify the rotations and predecessors as follows. 
Recall that in any lattice, we say $\mu$ covers $\nu$, denoted $\mu \covers \nu$, if $\mu > \nu$ and for all $\xi$ with $\mu \ge \xi \ge \nu$, we have $\xi = \mu$ or $\xi = \nu$.
The set of rotations $R$ is the set of all $\rho$ such that there exists a $\mu \in \Stab$ where $\rho$ is valid in $\mu$, and $\rho\mu \in \Stab$, and
where $\rho\mu$ covers $\mu$ in $\Stab$; intuitively, this represents the ``minimal possible differences'' between two stable matchings.
(Note in particular that rotations always improve that match of applicants, and worsen the match of institutions.)
In this case, we say that $\rho$ is \emph{exposed} in $\mu$.\footnote{
  Note that even if $\rho$ is valid in $\mu$, it may not be the case that $\rho$ is exposed in $\mu$.
}
The predecessors $P$, which come in two different types, can be identified as follows:
\begin{enumerate}
  \item Rotation $\rho_1$ is a \emph{type 1 predecessor} of rotation $\rho_2$ if there exists a $(d,h)\in\Appls\times\Insts$ such that $\rho_1$ moves $d$ to $h$, and $\rho_2$ moves $d$ above $h$.  In other words, we have:
  \begin{align*}
       & \rho_2 = [ \ldots, (h, d), \ldots]
       && \qquad \text{for all such rotations.}
    \\ & \downarrow \text{  (type 1) }
    \\ & \rho_1 = [ \ldots, (\cdot, d), (h, \cdot), \ldots]
  \end{align*}
  \item Rotation $\rho_1$ is a \emph{type 2 predecessor} of rotation $\rho_2$ if there exists a $(d,h)\in\Appls\times\Insts$ such that $\rho_1$ moves $d$ from below $h$ to above $h$, and $\rho_2$ moves $h$ from above $d$ to below $d$.
  In other words, we have:
  \begin{align*}
       & \rho_2 = [ \ldots, (\cdot, d_{\text{down}}), (h, d_{\text{up}}), \ldots]
       && \qquad \text{if we have $h_{\text{up}} \succ_d h \succ_d h_{\text{down}}$, }
    \\ & \downarrow \text{  (type 2) }
       && \qquad \text{and also $d_{\text{up}} \succ_h d \succ_h d_{\text{down}}$.}
    \\ & \rho_1 = [ \ldots, (h_{\text{down}}, d), (h_{\text{up}}, \cdot), \ldots]
  \end{align*}
\end{enumerate}

Intuitively, type 1 predecessor relations are needed because applicants can only ``switch stable partners'' in the units specified by rotations, and because if such a $d$ is improved above $h$ by $\rho_2$, then $d$ must have already be improved \emph{to} $h$ by $\rho_1$.
It is easy to see why type 2 predecessor relations are needed: if we eliminate such a $\rho_2$ without eliminating $\rho_1$, then $d$'s priority at $h$ is violated, so the matching is not stable.
For example, the stable matching lattice in \autoref{fig:simple-ex} is represented by rotations $\alpha = [ (h_A,d_A), (h_B,d_B)]$ and $\beta = [ (h_C,d_C), (h_D,d_D)]$, where $\alpha$ is a type 2 predecessor of $\beta$.
See \cite{GusfieldI89} or \cite{CaiT19} for details.
We build on the above concrete representation of the rotations and predecessors in \autoref{sec:result} when we prove our full characterization.

\subsubsection{The Legal Matching Lattice}
\label{sec:extra-prelims-legal}

While our main focus is on the priority-neutral lattice $\PN$, we make use of another interesting generalization of $\Stab$ known as the \emph{legal} lattice \cite{EhlersM20} (already introduced in \autoref{footnote:legal}), which will aid us greatly in our analysis. 
A set $\Legal$ of matchings is called \emph{legal} if
\[\Legal = \{ \mu\ \allowbreak|\ \allowbreak \text{$\forall \nu \in \Legal:$ \allowbreak no applicant $d$'s priority is violated by $\mu$ at institution $\nu(d)$} \}. \]
\cite{EhlersM20} prove that there exists a unique legal set $\Legal$, which always contains $\Stab$,
and that the applicant-optimal element of $\Legal$ is $\EADAM$.
Thus, $\PN$ and $\Legal$ have the same topmost element; moreover, \cite{Reny22} proves that $\PN\subseteq \Legal$, so these two lattices are closely related.

In contrast to $\PN$, however, $\Legal$ has the same tractable structure as $\Stab$.
\cite{FaenzaZ22} formalizes this by showing that every legal lattice can be represented as the stable lattice on some \emph{altered} set of preferences and priorities. 
This allows us to greatly simply our analysis (intuitively, by representing $\PN$ as a subset of $\Legal$, where we use the ``off the shelf'' rotation DAG of $\Legal$, and specify which legal matchings are not priority-neutral).
In summary, the main known results are the following:

\begin{theorem}[\cite{EhlersM20, Reny22, FaenzaZ22}]
  \label{thrm:pn-in-l-l-lattice}
    We have $\PN\subseteq \Legal$, and $\Legal$ forms a distributive lattice which can be represented via a partial order on a set of rotations over some other set of preferences and priorities.
\end{theorem}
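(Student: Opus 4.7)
My plan is to dispatch the three assertions of the theorem in order: (i) $\PN \subseteq \Legal$, (ii) $\Legal$ is a distributive lattice, and (iii) $\Legal$ has a rotation representation arising from a stable lattice over some altered preference profile.

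For (i), the main task is to show any priority-neutral matching $\mu$ lies in the unique legal set. The approach I would take is to argue by contradiction: assume $\mu \notin \Legal$, so there exist $\nu \in \Legal$ and an applicant $d$ such that $\mu$ violates $d$'s priority at $\nu(d)$. From this witness, I would build a priority-correcting adjustment of $\mu$ which improves $d$, contradicting $\mu \in \PN$. The natural candidate is a matching derived from combining $\mu$ and $\nu$ via coordinatewise maximum, leveraging the structural properties of legal matchings (in particular \autoref{lem:max-min-lattice-closure}) to ensure the result is a valid matching and that the new priority violators have all weakly improved against $\mu$. The key leverage is that for $\nu \in \Legal$, no other legal matching (and a fortiori no priority-neutral $\mu$) can violate $d$'s priority at $\nu(d)$ without a reciprocal structural obstruction; exploiting this asymmetry is what allows the contradiction to be forced.

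For (ii), I would rely on the fact, established in \cite{EhlersM20}, that $\Legal$ is closed under coordinatewise maxima and minima (which itself follows from \autoref{lem:max-min-lattice-closure} applied to any two legal matchings, since by the defining property of legality no legal matching can violate any priority at the match of another legal matching). This closure under coordinatewise operations immediately makes $\Legal$ a lattice with these as join and meet, and distributivity is then inherited from the distributivity of the pointwise $\max$ and $\min$ operations on each applicant's totally ordered preference list.

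For (iii), I would invoke the result of \cite{FaenzaZ22} that the legal lattice can be realized as the stable matching lattice on some altered profile of preferences and priorities, together with the classical representation of stable matching lattices via partial orders on rotations \cite{GusfieldI89} and Birkhoff's representation theorem. The hardest step overall is part (i) — constructing an explicit priority-correcting adjustment from a legality witness — which is exactly where \cite{Reny22}'s argument is needed; parts (ii) and (iii) reduce to routine invocations of \cite{EhlersM20} and \cite{FaenzaZ22} once their hypotheses are in place.
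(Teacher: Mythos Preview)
The paper does not prove this theorem at all: it is stated as a summary of results from \cite{EhlersM20}, \cite{Reny22}, and \cite{FaenzaZ22}, and no argument is given beyond the attributions. So there is no ``paper's own proof'' to compare against; your decomposition into (i)--(iii) and your attribution of each part to the correct source matches exactly how the paper packages the result.

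That said, your sketch for part (i) has a real gap. You propose to take $\mu \notin \Legal$, pick a witness $\nu \in \Legal$ and applicant $d$ with $\mu$ violating $d$'s priority at $\nu(d)$, and then build a priority-correcting adjustment from $\mu \vee \nu$ via \autoref{lem:max-min-lattice-closure}. But the hypothesis of that lemma is precisely that neither matching violates any applicant's priority at the other's assignment---and you have just assumed that $\mu$ violates $d$'s priority at $\nu(d)$. So the lemma does not apply, and $\mu \vee \nu$ need not be a matching. If instead you intend $\nu$ itself to serve as the adjustment, condition (1) of \autoref{def:priority-neutral} is satisfied, but condition (2) requires that every $d' \in \vio(\nu)$ have $\nu(d') \succeq_{d'} \mu(d')$, and nothing in your setup rules out some $d'$ whose priority is violated in $\nu$ but who strictly prefers $\mu(d')$. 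The actual argument in \cite{Reny22} is more delicate than a one-step coordinatewise combination; you have correctly identified this as the hard part, but the specific mechanism you describe does not go through as written.

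Parts (ii) and (iii) are fine: closure of $\Legal$ under coordinatewise max and min (which does follow from \autoref{lem:max-min-lattice-closure}, since by the defining fixed-point property of $\Legal$ no two legal matchings can violate each other's priorities at their respective matches) gives a sublattice of a product of chains, hence distributive; and (iii) is exactly the content of \cite{FaenzaZ22}.
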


For convenience, in most of our analysis we actually use the sub-lattice $\K$ of $\Legal$ which also has the same bottommost element as $\PN$, namely $\IPDA$.\footnote{
  To see that $\Legal$ does not in general have the same lowest elements as $\PN$, 
  observe that the definition of $\Legal$ is symmetric in the roles of applicants and institutions, and thus $\Legal$ always contains the institution-side analogue of $\EADAM$ (which is a different matching from $\IPDA$).
}
In other words, we define $\K = \{ \mu \in \Legal \ \mid\ \mu \ge \IPDA \}$; note that $\PN\subseteq \K$. 
It is easy to see that $\K$ is also distributive. 
See \autoref{fig:all-lattices} for an illustration of all matching lattices we consider.

\begin{figure}[htbp]
    \begin{minipage}[t]{0.6\textwidth}
      \strut\vspace*{-\baselineskip}\newline
      \includegraphics[width=\textwidth]{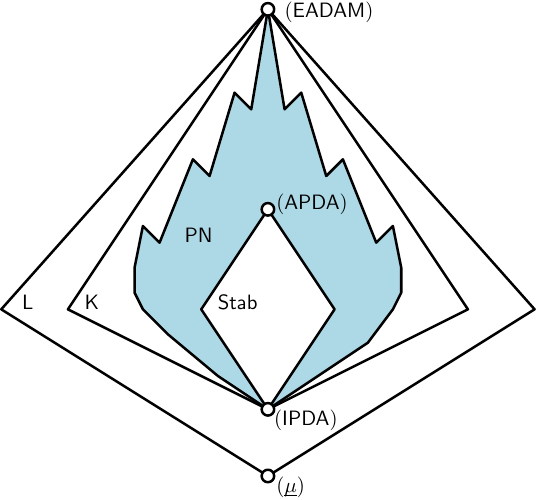}
    \end{minipage}
    \qquad
    \begin{minipage}[t]{0.35\textwidth}
      \phantom{.}

      \caption{Illustration of matching lattices.}
      \label{fig:all-lattices}

      \phantom{.}

      \begin{minipage}{\textwidth}  
        \footnotesize
        \textbf{Notes:}
        As the figure shows, $\Stab\subseteq \PN\subseteq \K\subseteq \Legal$.
        Each of $\Stab, \K$, and $\Legal$ are always distributive lattices, and can be represented with a predecessor DAG on a set of rotations, and hence we illustrate each of these with a straight-bordered shape.
        In contrast, as shown in \autoref{sec:non-distributive}, $\PN$ is not distributive.
        The applicant optimal / pessimal matchings in each lattice are labeled within parentheses, where the matching $\underline{\mu}$ denotes the result of running $\EADAM$ on the instance produced by interchanging the roles of the applicants and the institutions.
      \end{minipage}
    \end{minipage}
\end{figure}

The fact that the lattice $\K$ has a rotation DAG representation is guaranteed by the following:

\begin{lemma} %
  \label{thrm:K-has-rotations}
  There always exists a rotation DAG, call it $G$, such that $D(G)=\K$.
\end{lemma}
\begin{proof}
  Fix some instance with preferences and priorities $Q$.
  The main result of \cite{FaenzaZ22} is that there exists an altered instance $\widetilde Q$ such that $\Legal$ equals the set of stable matchings under $\widetilde Q$.
  Thus, lattice $\Legal$ under $Q$ has a predecessor DAG representation $(\underline{\mu}, R',P')$, where $\underline{\mu}$ the the result of running $\EADAM$ while interchanging the role of applicants and institutions, which is equal to the representation of the stable lattice under $\widetilde Q$.

  To transform this rotation DAG into an $(\IPDA,  R, P)$ which represents $\K$, it suffices to simply define $R$ by removing from $R'$ the set of rotations corresponding to $\IPDA$ (and setting $P$ to remove all elements of $P'$ containing removed elements of $R'$).
  After doing this operation, it is straightforward to verify that $D(\IPDA,R,P)$ contains exactly those elements of $D(\underline{\mu}, R',P')$ that weakly dominate $\IPDA$, as desired.
\end{proof}

\subsection{Structural Results for $\PN$ and the Good Chain Property}
\label{sec:structural-properties}

We now begin to reason about the structure of $\PN \subseteq \K$.
The main structural result we establish here is that, while many matchings in $\K$ may not be in $\PN$, there always exists a chain $\IPDA = \mu_1 < \mu_2 < \ldots < \mu_N = \EADAM$ in $\PN$, such that each $\mu_{i+1}$ differs from $\mu_i$ by a rotation in $\K$.
To prove this, we adapt and prove additional properties of several notions of matching chains from prior work, particularly \cite{TangY14,FaenzaZ22,Reny22}.

\subsubsection{Properties of \texorpdfstring{$\PN$}{PN}}

To begin, we recall one of the integral tools \cite{Reny22} uses to prove his results for $\PN$ matchings. 
This is the notion of a \emph{(Reny-)feasible chain}, which is defined as follows:

\begin{definition}[\cite{Reny22}]
    A chain $\mu_1 < \mu_2 < \ldots < \mu_N$ is \emph{Reny-feasible} if $\mu_1$ is stable, $\mu_N$ is Pareto-efficient, and for each $i$, $\mu_i$ does not violate the priority of any applicant who is Pareto-improvable in $\mu_{i-1}$.
\end{definition}

While \cite{Reny22} reasons about Reny-feasible sequences in quite some detail, he does not prove the following fundamental fact about Reny-feasible sequences: every matching in a Reny-feasible sequence is always priority-neutral. 
To prove this fact, we extend the techniques from \cite[Lemmas 3--6]{Reny22}.

First, we recall one simple lemma from \cite{Reny22}:

\begin{lemma}[\cite{EhlersM20}, Lemmas 4--6; \cite{Reny22}, Lemma 1]
  \label{lem:max-min-lattice-closure}
  Let $\mu$ and $\nu$ be such that for all applicants $d$: 
  $d$'s priority at $\mu(d)$ is not violated in $\nu$, 
  and $d$'s priority at $\nu(d)$ is not violated in $\mu$.
  Then, $\mu \vee \nu$ and $\mu \wedge \nu$ are matchings. 
\end{lemma}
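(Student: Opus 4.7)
\medskip\noindent\textbf{Proof plan.}
The plan is to use the standard bipartite graph decomposition: let $G$ have vertex set $\Appls \cup \Insts$ and edges given by the matches in $\mu$ together with the matches in $\nu$. Every vertex of $G$ has degree at most $2$ (each applicant or institution appears in at most one edge per matching), so $G$ is a disjoint union of simple paths and alternating even cycles. The operations $\vee$ and $\wedge$ act independently on each connected component, so it suffices to rule out any ``double-assignment'' of an institution on each component separately.

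The main work is the cycle case. On an alternating cycle of length $2k$, label the vertices as $h_1 d_1 h_2 d_2 \cdots h_k d_k$ so that $\mu(d_i) = h_i$ and $\nu(d_i) = h_{i+1}$ (indices mod $k$), and assign each applicant $d_i$ a sign $\sigma_i \in \{+1,-1\}$ recording whether she prefers her $\mu$-match ($+1$) or her $\nu$-match ($-1$). A direct unpacking shows that $\mu \vee \nu$ double-assigns $h_j$ exactly when $\sigma_{j-1} = -1$ and $\sigma_j = +1$, while $\mu \wedge \nu$ double-assigns $h_j$ exactly when $\sigma_{j-1} = +1$ and $\sigma_j = -1$. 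In the first case both $d_{j-1}$ and $d_j$ strictly prefer $h_j$ to their other match, so by strictness one of $d_{j-1} \succ_{h_j} d_j$ or $d_j \succ_{h_j} d_{j-1}$ holds, and that applicant directly witnesses a priority violation of the form prohibited by the hypothesis (either $d_{j-1}$ at $\nu(d_{j-1}) = h_j$ in $\mu$, or $d_j$ at $\mu(d_j) = h_j$ in $\nu$). Since $\sigma$ is cyclic, the numbers of $+1 \to -1$ and $-1 \to +1$ transitions are equal, so forbidding the latter forbids the former. Thus $\sigma$ must be constant on each cycle, and then $\mu \vee \nu$ and $\mu \wedge \nu$ each coincide with one of $\mu, \nu$ there and are matchings.

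Path components are handled by the same $\sigma$-analysis. The max-fail transitions are ruled out by the contested-institution argument above. For the min-fail transitions, which give no direct violation at the interior contested institution, I would invoke the hypothesis at the path endpoints: individual rationality fixes the sign at any unmatched-applicant endpoint, and an endpoint institution $h$ that is empty in one of $\mu, \nu$ automatically satisfies ``$d \succ_h \emptyset$'', so any applicant who strictly prefers $h$ to her other match would yield a forbidden violation at $h$. These endpoint constraints pin down the signs at the boundary and let one propagate $\sigma$-constancy inward.

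The elegant half is the cycle argument --- a single application of the hypothesis at a contested institution, combined with the parity count of sign transitions around a cycle. I expect the main obstacle to be the path case, where the cyclic parity argument is not available and one must combine the contested-institution argument for max-fail transitions with an endpoint-based argument for the min-fail transitions, being careful with the conventions around unmatched applicants and empty institutions.
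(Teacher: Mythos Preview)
The paper does not give its own proof of this lemma; it is quoted from \cite{EhlersM20} and \cite{Reny22} and used as a black box in the proof of \autoref{thrm:Reny-feasible-are-all-pn}. So there is nothing in the paper to compare against directly.

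Your approach is correct and standard. The cycle argument is clean and complete: the hypothesis kills every max-fail transition at the contested institution, and cyclic parity then kills the min-fail transitions for free, forcing $\sigma$ constant on each cycle.

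For paths your plan also works, but the outcome is sharper than ``propagate $\sigma$-constancy inward.'' At an unmatched-applicant endpoint, individual rationality pins the sign toward the matching in which she is matched. At an empty-institution endpoint $h$ (say $\mu(h)=\emptyset$), the hypothesis applied to the adjacent applicant $d$ with $\nu(d)=h$ forces $\mu(d)\succeq_d h$, i.e.\ pins $d$'s sign toward $\mu$. Because the edges along any path strictly alternate between $\mu$ and $\nu$, the two endpoints are always unmatched in \emph{opposite} matchings, and one checks that the resulting forced endpoint signs are incompatible once max-fail transitions are forbidden. So under the hypothesis the symmetric difference of $\mu$ and $\nu$ consists only of alternating cycles (plus common fixed pairs); nontrivial paths simply do not occur. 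This is the analogue of the rural-hospitals phenomenon in this setting, and it makes the $\wedge$ case fall out without any separate interior argument for min-fail transitions.

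One caution: the applicant-endpoint step genuinely uses individual rationality of $\mu$ and $\nu$, which the lemma statement does not make explicit but which is implicit in the paper's conventions.
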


We now prove our core result regarding Reny-feasible chains:

\begin{proposition}
    \label{thrm:Reny-feasible-are-all-pn}
  Let $\mu_1 < \mu_2 < \ldots < \mu_N$ be any Reny-feasible sequence.
  Then for each $n \in \{1,\ldots, N\}$, the matching $\mu_n$ is priority-neutral.
\end{proposition}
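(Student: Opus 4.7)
The plan is to proceed by induction on $n$, with \autoref{lem:max-min-lattice-closure} as the workhorse. The base case $n = 1$ is immediate: since $\mu_1$ is stable we have $\vio(\mu_1) = \emptyset$, so $\mu_1$ is vacuously priority-neutral.

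For the inductive step, suppose $\mu_1, \ldots, \mu_{n-1} \in \PN$ and, toward contradiction, that $\nu$ is a priority-correcting adjustment of $\mu_n$ witnessed by some $d_* \in \vio(\mu_n)$ with $\nu(d_*) \succ_{d_*} \mu_n(d_*)$. A key preliminary observation is that, combining \autoref{item:decr-nonvio} of \autoref{def:priority-neutral} with $\mu_n \ge \mu_{n-1}$, every $d \in \vio(\nu)$ satisfies $\nu(d) \succeq_d \mu_n(d) \succeq_d \mu_{n-1}(d)$. The central claim is that $\xi := \mu_{n-1} \vee \nu$ is a matching. Once this is in hand, the contradiction falls out quickly: we have $\xi \ge \mu_{n-1}$ and $\xi(d_*) \succeq_{d_*} \nu(d_*) \succ_{d_*} \mu_n(d_*) \succeq_{d_*} \mu_{n-1}(d_*)$, so $\xi$ witnesses that $d_*$ is Pareto-improvable in $\mu_{n-1}$; since $d_* \in \vio(\mu_n)$, this contradicts the Reny-feasibility of the chain.

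To prove that $\xi$ is a matching, I will invoke \autoref{lem:max-min-lattice-closure} and verify its two hypotheses. First, if some $d$'s priority at $\mu_{n-1}(d)$ were violated in $\nu$, then $d \in \vio(\nu)$ and $\mu_{n-1}(d) \succ_d \nu(d)$, directly contradicting the preliminary observation. The main obstacle will be the second hypothesis, that no $d$'s priority at $\nu(d)$ is violated in $\mu_{n-1}$. Suppose such a violation exists; then $d \in \vio(\mu_{n-1})$ and $\nu(d) \succ_d \mu_{n-1}(d)$, which is exactly \autoref{item:vio-impr} of \autoref{def:priority-neutral} applied to the pair $(\mu_{n-1}, \nu)$. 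Together with the preliminary observation (which supplies \autoref{item:decr-nonvio} for this same pair), this shows $\nu$ is itself a priority-correcting adjustment of $\mu_{n-1}$, contradicting the inductive hypothesis that $\mu_{n-1} \in \PN$. Hence both hypotheses of \autoref{lem:max-min-lattice-closure} hold, $\xi$ is a matching, and the argument closes.
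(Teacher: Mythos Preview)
Your proof is correct and uses the same core ingredients as the paper: \autoref{lem:max-min-lattice-closure} to certify that $\mu_{n-1}\vee\nu$ is a matching, followed by the observation that this forces $d_*$ to be Pareto-improvable in $\mu_{n-1}$, contradicting Reny-feasibility. The only organizational difference is that the paper fixes $n$ and an improver-only violator $\xi$ of $\mu_n$, then runs an inner induction on $n'\le n$ proving the auxiliary claim $d\in\vio(\mu_{n'})\Rightarrow\mu_{n'}(d)\succeq_d\xi(d)$; you instead induct directly on $n$ and use the hypothesis $\mu_{n-1}\in\PN$ as a black box to dispatch the second hypothesis of \autoref{lem:max-min-lattice-closure}. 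Your packaging is arguably more streamlined, while the paper's inner claim is slightly stronger (it handles all improver-only violators, not just priority-correcting adjustments), but for the stated proposition the two arguments are equivalent.
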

\begin{proof}
  Call $\xi$ an \emph{improver-only violator} of $\mu$ if, for all $d \in \vio(\xi)$, we have $\xi(d)\succeq_d \mu(d)$. 
  Note that a priority-correcting adjustment of $\mu$ is simply an improver-only violator, which also improves the match of some applicant whose priority is violated by $\mu$.
  We proceed by considering any $n$ and any improver-only violator $\xi$ of $\mu_n$, and showing that $\xi$ cannot improve the match of any applicant whose priority is violated by $\mu_n$.

  Fix some $n \in \{1,\ldots, N\}$, and suppose that $\xi$ is an improver-only violator of $\mu_n$.
  Using induction on $n'$, we will prove the following claim: 
  for all $n' \in \{1,\ldots, n\}$, we have that for all applicants $d$,\footnote{
    \cite[Lemmas 4 and 5]{Reny22} perform related inductive arguments, but get weaker conclusions that are only sufficient to conclude that the final matching $\mu_N$ in the sequence is priority-neutral.
  } 
  \[ d \in \vio(\mu_{n'}) \qquad \implies \qquad \mu_{n'}(d)\succeq_d \xi(d). \]
  The base case of $n' = 1$ follows from the fact that $\mu_1$ is stable by the definition of a Reny-feasible sequence, so $\vio(\mu_1) = \emptyset$.

  Now assume by induction that the claim holds for some $n'$. 
  We will prove the claim for $n'+1$.

  To begin, we will prove using \autoref{lem:max-min-lattice-closure} that $\mu_{n'}\vee \xi$ is a matching.
  To see this, observe that by induction, we know that for all $d \in \vio(\mu_{n'})$, we have $\mu_{n'}(d)\succeq_d \xi(d)$.
  Thus, $\mu_{n'}$ does not violate any $d$'s priority at $\xi(d)$.
  On the other hand, by the definition of an improver-only violator of $\mu_{n}$, we know that for all $d \in \vio(\xi)$, we have $\xi(d)\succeq_d \mu_n(d) \succeq_d \mu_{n'}(d)$.
  Thus, $\xi$ does not violate any $d$'s priority at $\mu_{n'}(d)$.
  Thus, \autoref{lem:max-min-lattice-closure} implies that $\mu_{n'}\vee \xi$ is a matching.

  Now, consider any applicant $d \in \vio(\mu_{n'+1})$. 
  By the definition of a Reny-feasible sequence, $d$ is not Pareto-improvable in $\mu_{n}'$.
  Since $\mu_{n'+1}\ge\mu_{n'}$, also by the definition of a Reny-feasible sequence, we thus have $\mu_{n'+1}(d)=\mu_{n'}(d)$.
  Thus, since $\mu_{n'}\vee\xi \ge \mu_{n'}$, we must have $[\mu_{n'} \vee \xi ](d) = \mu_{n'}(d)$, and therefore $\mu_{n'+1}(d)=\mu_{n'}(d) \succeq_d \xi(d)$.
  This finishes the proof of the inductive step, and proves the claim for all $n' \le n$.

  This result directly implies that $\mu_n$ must be priority-neutral.
  To see this, we apply the claim with $n' = n$.
  In detail, consider any $\xi$ which is a improver-only violator of $\mu_n$.
  By the claim, we know that for all $d \in \vio(\mu_n)$, we have $\mu_n(d) \succeq_d \xi(d)$. 
  Thus, $\xi$ is not a priority-correcting adjustment of $\mu_n$.
  Thus, in fact $\mu_n$ has no priority-correcting adjustments, and in fact $\mu_n$ must be priority-neutral.
\end{proof}

We will also need the following result, which says that when looking for priority correcting adjustments, it is without loss of generality to look at matchings $\ge \IPDA$. We prove this result using a novel strengthening of the classical argument used to show that $\IPDA$ is the institution-optimal stable match, going all the way back to \cite{GaleS62}. 

\begin{lemma}
    \label{thrm:pca-dominates-ipda}
  Consider any matching $\nu \in \K$, and let $\xi$ be any priority-correcting adjustment of $\nu$. 
  Then $\xi \ge \IPDA$.

  Moreover, consider any $\nu \ge \IPDA$, and suppose $\xi$ is a matching such that for all $d \in \vio(\xi)$, we have $\xi(d)\succeq_d \nu(d)$.
  Then $\xi \ge \IPDA$.
\end{lemma}
\begin{proof}
  Recall that for each $\nu \in \K$, we have $\nu \ge \IPDA$.
  As in the prove of \autoref{thrm:Reny-feasible-are-all-pn}, call $\xi$ an improver-only violator of $\mu$ if $d \in \vio(\xi)$ implies $\xi(d)\succeq_d\mu(d)$,
  and that priority-correcting adjustments of $\nu$ are just improver-only violators with additional requirements.
  Thus, the first paragraph of the lemma is actually a special case of the second paragraph.

  We now prove the second paragraph of the lemma.
  To begin, observe that if $\xi$ satisfies the above condition for any $\nu \ge \IPDA$, then it also satisfies the same condition for $\nu = \IPDA$ (since for each $d \in \vio(\xi)$, we have $\xi(d)\succeq_d \nu(d) \succeq_d \IPDA(d)$).
  Thus, it suffices to prove the lemma for the case where $\nu = \IPDA$.
  We prove the contrapositive, i.e. that if $\xi \ngeq \IPDA$, then there must exist a $d \in \vio(\xi)$ such that $\IPDA(d) \succ_d \xi(d)$.

  To this end, suppose that $\xi \ngeq \IPDA$. 
  This means there exists a $d$ such that $ \IPDA(d) \succ_d \xi(d)$.
  Let $h = \IPDA(d)$. 
  If $d \succ_h \xi(h)$, then we are done because $d \in \vio(\xi)$ and $\IPDA(d) \succ_d \xi(d)$.
  So suppose going forward that $\IPDA(h) = d \prec_h \xi(h)$.\footnote{
    In what follows, we use a classical argument from \cite{GaleS62} which to proves that $\IPDA$ is the institution-optimal stable matching, and we strengthen the conclusion of the argument.
    Namely, we prove that if any institution $h$ is matched to an applicant they prefer to $\IPDA(h)$, then not only is it the case that some applicant's priority is violated, but the priority of some applicant \emph{who prefers $\IPDA$} is violated.
  }

  Now, consider some sequence of proposals and rejections made during the execution of the $\IPDA$ algorithm.
  Since $\xi(h) \succ_h \IPDA(h)$, there exists at least one $h$ such that $\xi(h)$ rejects $h$ during $\IPDA$.
  Consider the \emph{first} such rejection; say, applicant $d_1$ rejects $h_1 = \xi(d_1)$ in favor of $h_2 \ne \xi(d_1)$;
  note that by the way the $\IPDA$ algorithm works, we have $h_2 \succ_{d_1} h_1$.
  Since this is the \emph{first} such rejection, we know that $h_2$ has not yet been rejected by $\xi(h_2)$, so it follows that $d_1 \succ_{h_2} \xi(h_2)$.
  Combining these last two facts, we see that $d_1 \in \vio(\xi)$.

  Moreover, by the way the $\IPDA$ algorithm works, $d_1$'s match can only improve through the remaining run of $\IPDA$; thus, $\IPDA(d_1) \succeq_{d_1} h_2 \succ_{d_1} h_1 = \xi(d_1)$.
  Thus, we get both that $d_1 \in \vio(\xi)$, and that $\IPDA(d_1) \succ_{d_1} \xi(d_1)$.
  This finishes the proof.
\end{proof}

\subsubsection{Properties of \texorpdfstring{$\Legal$}{L}}

We now introduce another type of chain in order to additionally reason about $\Legal$ and $\K$ while bringing in the techniques of \cite{TangY14}.
This chain modifies the Kesten-Tang-Yu chain (as defined in \autoref{def:Kesten-Tang-Yu-Alg}) by, informally speaking, progressing the chain with ``as small of differences as possible.''
To formalize this, we recall that in any lattice, an element $x$ \emph{covers} $y$ if $x \ge y$, and for any $x \ge z \ge y$, we have either $x=z$ or $y=z$.
Covering elements in stable lattice correspond to two matchings differing by rotations in the rotation poset.
In other words, $\mu_2$ covers $\mu_1$ in a stable lattice if and only if there is a rotation $\alpha$ exposed in $\mu_1$ such that $\alpha\mu_1 = \mu_2$.

\begin{definition}
    \label{def:gradual-KTY}
  We say that a chain of matchings is a \emph{gradual Kesten-Tang-Yu chain} if it can be constructed as follows:
  Set $P_0 = P$ to be the full preference (and priority) profile, and let $\mu_0 = \IPDA(P_0)$.
  Now, for $i\ge 1$, as long as $\mu_{i-1} \ne \APDA(P_{i-1})$, set $P_i = P_{i-1}$, and set $\mu_i$ equal to any matching which covers $\mu_{i-1}$ in the stable matching lattice under preferences $P_{i-1}$.
  If $\mu_{i-1} = \APDA(P_{i-1})$, yet $\mu_{i-1}$ is not Pareto-optimal, then modify $P_i$ as in \autoref{def:Kesten-Tang-Yu-Alg} (i.e., let $P_i$ denote the following operation on $P_{i-1}$: for each $d$ matched to an under-demanded institution in $\mu_{i-1}$, remove all institutions in their list strictly before $\mu_{i-1}(d)$).
  One can show $\mu_{i-1}$ is stable under preferences $P_i$ as well;
  we set $\mu_i$ to any matching stable under $P_i$ which covers $\mu_{i-1}$ in the stable lattice under $P_i$.
  The chain terminates with $\mu_N$ once $\mu_N$ is Pareto-optimal.
\end{definition}

By \cite[Lemma 2]{Reny22}, or by the techniques of \cite{TangY14}, we know that if $\mu_i$ in the gradual Kesten-Tang-Yu chain is not Pareto optimal, then some applicants must be matched to under-demanded institutions.\footnote{
  In more detail, \cite[Lemma 2]{Reny22} shows that if an applicant $d$ is Pareto-improvable at $\mu_i = \APDA(P_i)$, then $d$ cannot be matched to an under-demanded in $\APDA(P_i)$.
  Thus, whenever $\mu_i = \APDA(P_i)$ but $\mu_i$ is not Pareto-optimal, we know that $P_{i+1} \ne P_i$.
}
Conversely, if the matching is Pareto-optimal, then it must equal $\APDA$ under the current preferences (and under the original preferences).
Hence there always exists at least one gradual Kesten-Tang-Yu sequence. 

Additionally, we observe that any gradual Kesten-Tang-Yu sequence is also Reny-feasible.
To see why, observe that, by construction, a gradual Kesten-Tang-Yu sequence will only violate the priority of applicants matched to under-demanded institutions (and these applicants must be Pareto-unimprovable by \cite[Lemma 2]{Reny22}).
Moreover, the first matching $\mu_0$ of a gradual Kesten-Tang-Yu chain equals $\IPDA$, and the last matching $\mu_N$ is Pareto-efficient and equals $\EADAM$.
Thus: 
\begin{observation}
  \label{thrm:gradual-KTY-is-feasible}
  Any gradual Kesten-Tang-Yu sequence is Reny-feasible. 
\end{observation}

Hence, by \autoref{thrm:Reny-feasible-are-all-pn}, all elements of any gradual Kesten-Tang-Yu sequence are priority-neutral, and by \autoref{thrm:pn-in-l-l-lattice}, also legal.

We now use results from \cite{FaenzaZ22} and \cite{TangY14} to establish that any gradual Kesten-Tang-Yu sequence consists of covering relationships in the legal lattice; i.e., any gradual Kesten-Tang-Yu sequence is a chain in the legal lattice which starts with $\IPDA$, and is maximal among such chains.
This constitutes our core result regarding $\Legal$.
\begin{proposition}
    \label{thrm:gradual-KTY-covers-in-legal}
  Let $\mu_1 < \mu_2 < \ldots < \mu_N$ be any gradual Kesten-Tang-Yu sequence.
  Then, for each $n = 1,\ldots,N-1$, we have that $\mu_n$ is covered by $\mu_{n+1}$ in $\Legal$. 
\end{proposition}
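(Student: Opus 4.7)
The plan is to show that each step $\mu_n \to \mu_{n+1}$ of a gradual Kesten-Tang-Yu chain corresponds to applying a single rotation in the Birkhoff representation of the legal lattice. By \autoref{thrm:pn-in-l-l-lattice}, $\Legal$ is distributive and representable (via \cite{FaenzaZ22}) as the stable matching lattice of some altered profile $\hat{P}$; combined with the fact that $\mu_n, \mu_{n+1} \in \Legal$ established just above, it then suffices to show that the rotation $\alpha_n$ taking $\mu_n$ to $\mu_{n+1}$ (in the stable lattice of the current KTY profile $P$) is also an atomic rotation in the $\hat{P}$-stable lattice.

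For the ``same-preferences'' case, I would argue by contradiction: suppose $\nu \in \Legal$ with $\mu_n < \nu < \mu_{n+1}$. The key step is to show that some matching stable under the current $P$ must also lie strictly between $\mu_n$ and $\mu_{n+1}$, contradicting the covering property of $\alpha_n$ in the $P$-stable lattice. I would establish this via careful analysis of how $\nu$ interacts with $P$-stable matchings under coordinatewise meets and joins (using closure of $\Legal$ under these operations from \cite{EhlersM20}), exploiting the distributivity of $\Legal$ to produce an intermediate $P$-stable matching from $\nu$ by meeting or joining with suitable stable witnesses.

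For the ``truncation'' case, where $\mu_n = \APDA(P_{\text{old}})$ and the new profile $P$ arises via truncation, I would appeal directly to the structural characterization from \cite{FaenzaZ22} of legal rotations in terms of the full KTY execution, combined with the under-demandedness property from \cite{TangY14} that guarantees each truncation step exposes at least one new legal rotation at an under-demanded institution. Together these should imply that the rotation exposed in the truncated $P$-stable lattice corresponds to exactly one minimal legal rotation not yet applied at $\mu_n$, so that $\mu_{n+1}$ covers $\mu_n$ in $\Legal$.

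The main obstacle I anticipate lies in the truncation case: the truncated profile $P$ changes the set of stable matchings, while the legal lattice is defined with respect to the original profile $P_0$. One must carefully track how rotations in the varying truncated stable lattices correspond to rotations in the fixed \cite{FaenzaZ22} representation of $\Legal$. The key insight, I believe, is that under-demandedness in the KTY process precisely isolates the next legal rotation to be exposed, so that each KTY step advances by exactly one rotation of the legal lattice's Birkhoff poset, never ``skipping over'' multiple legal rotations in a single move.
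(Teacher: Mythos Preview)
Your overall strategy---reduce to showing that each $\mu_n \to \mu_{n+1}$ is a single rotation in the \cite{FaenzaZ22} representation of $\Legal$---matches the paper's, and you correctly identify the main obstacle (the profiles $P_i$ vary while $\Legal$ is fixed). However, the mechanism you offer for the ``same-preferences'' case does not work, and the missing ingredient is exactly what the paper supplies.

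Concretely: you propose to take a hypothetical $\nu \in \Legal$ with $\mu_n < \nu < \mu_{n+1}$ and, by meeting or joining $\nu$ with suitable $P_i$-stable witnesses inside $\Legal$, manufacture a $P_i$-stable matching strictly between $\mu_n$ and $\mu_{n+1}$. But coordinatewise meets and joins of a legal matching with a stable matching yield only legal matchings, not $P_i$-stable ones; the $P_i$-stable lattice is a sublattice of $\Legal$, and a covering relation in a sublattice need not be a covering relation in the ambient distributive lattice (e.g.\ $\emptyset \lessdot \{1,2\}$ in the sublattice $\{\emptyset,\{1,2\}\}$ of $2^{\{1,2\}}$). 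So distributivity alone cannot close the gap; you need something specific about how $\Legal$ sits inside the stable lattices of the $P_i$.

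The paper's argument supplies exactly this. It does not split into your two cases; instead it works uniformly with \emph{illegal pairs} (pairs $(d,h)$ never used in any legal matching). The key observations are: (i) every truncation step in the gradual Kesten--Tang--Yu process removes only illegal pairs, because an applicant at an under-demanded institution is already at her $\EADAM$ match, so everything above it on her list is illegal; (ii) after all truncations, the remaining illegal pairs lie strictly between consecutive matches along the chain (or below $\IPDA$); and (iii) deleting pairs of this kind from preference lists can never destroy the property ``$\alpha$ is a rotation exposed at $\mu_n$,'' by direct inspection of the definition of an exposed rotation. Hence each rotation used in the gradual Kesten--Tang--Yu chain survives as a rotation of the $\widetilde P$-stable lattice, which is $\Legal$. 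Your ``key insight'' in the last paragraph is morally this, but you have not made the illegal-pair bookkeeping explicit, and without it the contradiction argument in your same-preferences case cannot be completed.
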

\begin{proof}
    Fix some profile $P$ of preferences (and priorities).
    To begin, we recall some concepts from \cite{FaenzaZ22}.
    If applicant $d$ is never matched to institution $h$ in any matching in $\Legal$, then \cite{FaenzaZ22} call the pair $(d, h)$ an \emph{illegal pair}.
    Let $\widetilde{P}$ denotes the altered preferences where, for all $d$, we modify $d$'s preference list by removing all $h$ such that $(d,h)$ is an illegal pair.
    \cite[Lemma 3.5]{FaenzaZ22} shows that the legal lattice on $P$ equals the stable lattice on $\widetilde{P}$ (and moreover that illegal pairs can always be removed from applicants preference lists without changing the set of legal matchings).
    Our proof proceeds by showing that $\widetilde{P}$ can be constructed from the execution of a gradual Kesten-Tang-Yu algorithm as in \autoref{def:gradual-KTY} (along with additional steps after the sequence) in a way such that the successive elements of the gradual Kesten-Tang-Yu sequence correspond covering relationships in the stable lattice under $\widetilde{P}$.

    Now, consider $\mu_0 = \IPDA(P)$. 
    We know from the main result of \cite{TangY14} (or \cite[Lemma 2]{Reny22}) that in \autoref{def:gradual-KTY}, every $d$ matched to an under-demanded institution must be matched as in $\EADAM$.
    Thus, for each $h'$ above such a $d$'s current match in $\mu_i$ is such that $(d, h')$ is an illegal pair.
    Thus, for each step of \autoref{def:gradual-KTY}, for each $d$ matched to an under-demanded institution, all of the institutions preferred to $d$'s current match $h$ can be removed from $d$'s list without changing the set of legal matchings.
    Let $P_{\mathsf{top}}$ denote the result of applying this operation successively for all steps of \autoref{def:gradual-KTY}.
    Note that there are still additional illegal edges that remain on applicants' lists in $P_{\mathsf{top}}$; 
    since we know all matchings $\mu_i$ are legal (by \autoref{thrm:Reny-feasible-are-all-pn} and the fact that $\PN\subseteq\Legal$),
    these remaining illegal pairs are some set of pairs $(d,h)$ such that $\mu_i(d) \prec_d h \prec_d \mu_{i+1}(d)$ for some $i$ (or such that $h \prec_d \mu_1 = \IPDA$).
    However, we can see from the definition of rotations under some set of preferences %
    that deleting any such $h$ from $d$'s list will never change whether a given rotation $\alpha$ is an actual rotation under the preferences: %
    a rotation which was exposed in a matching $\mu$ before any further removal of such illegal pairs must remail exposed in $\mu$ after we remove these further illegal pairs, directly by the definition of a rotation exposed in $\mu$.
    (Note, however, that this operation may change the predecessor relationships; in particularly, one can show that it removes all type 2 predecessors.)
    Thus, when all illegal edges have been removed, and thus by \cite[Lemma 3.6]{FaenzaZ22} the legal set equals the stable set, we will still have that all rotations along the gradual Kesten-Tang-Yu sequence also constitute rotations in the legal lattice.
    Since one matching covers another in a stable matching lattice if and only if they differ by rotations, the result follows.
\end{proof}

\subsubsection{Good Chain Property}

We now prove our main structural result for $\PN$, which we call the ``good chain property''.
This result will combine \autoref{thrm:Reny-feasible-are-all-pn} and \autoref{thrm:gradual-KTY-covers-in-legal} in order to additionally show that there is some maximal chain in $\K$ such that every element in this maximal chain is also in $\PN$.

\begin{theorem}[Good Chain Property of $\PN$]
  \label{lem:struct-for-pn-max-chain}
  Let $G$ be the rotation DAG representation of $\K$.
  Then, there exists an ordering $[ \alpha_1, \ldots, \allowbreak \alpha_N ]$ of the rotations in $G$ such that for each $k \in \{1, \ldots, N\}$, we have $\match\big(\{\alpha_1,\ldots,\alpha_k\}) \in \PN$.
\end{theorem}
\begin{proof}
  Let $G = (\IPDA, R, P)$ be the rotation DAG representation of $\K$.
  Consider any gradual Kesten-Tang-Yu sequence $\IPDA = \mu_1 < \mu_2 < \ldots < \mu_N = \EADAM$.
  Since each $\mu_i \in \K$, they correspond to a set $X_i \subseteq R$ with $\match(X_i) = \mu_i$; also note that $X_1 = \emptyset$.
  Moreover, by \autoref{thrm:gradual-KTY-covers-in-legal}, each $\mu_i$ covers $\mu_{i-1}$ in $\Legal$, and hence in $\K$.
  Thus, for each $i = 1,\ldots,N$, we have $X_i = X_{i-1} \cup \{\alpha_i\}$ for some $\alpha_i \in R$ (since covering relationships in $D(G)$ always correspond to adding a single rotation).
  Additionally, by \autoref{thrm:gradual-KTY-is-feasible} and \autoref{thrm:Reny-feasible-are-all-pn}, each $\mu_i \in \PN$. 
  So, with respect to the ordering $[ \alpha_1, \alpha_2, \ldots, \alpha_N ]$ constructed as above, we know that for each $k \in \{1, \ldots, N\}$, we have $\alpha_k\alpha_{k-1}\ldots\alpha_1 \IPDA \in \PN$.
  This proves the theorem.
\end{proof}

\subsection{A Simpler Sufficient Representation of \texorpdfstring{$\PN$}{PN}: Fanout Lattices}
\label{sec:simpler-sufficient}

We now define an abstract class of lattice representations which generalizes the predecessor DAG representation of distributive lattices.
We will call this representation ``fanout multigraphs'', and we prove that the induced ``fanout lattices''
are sufficient to capture every possible $\PN$ lattice.
They can also express an analogue of the good chain property established in \autoref{lem:struct-for-pn-max-chain}.

\begin{definition}[Fanout multigraph of fanout lattices]
    \label{def:fanout-set}
    For some set of applicants $\Appls$ and institutions $\Insts$, a \emph{fanout multigraph} is a tuple $G = (\mu, R, E)$, defined as follows.
    First, $\mu$ is some base matching.
    Second, $R$ is some set of rotations $R = \{ \rho_1, \ldots, \rho_N \}$,
    equipped with a fixed ordering $[\rho_1,\ldots,\rho_N]$.
    Third, $E$ is a set of \emph{fanout} (multi)edges over $R$. 
    Each such fanout $e \in E$ is a pair $e = (r_0, \{r_1, r_2, \ldots, r_k\})$ for some $k\ge 1$ and $r_i\in R$ for $i\in\{0,1,\ldots,k\}$.

    Consider some subset $X\subseteq R$ and some fanout $e= (\rho_0, \{\rho_1, \rho_2, \ldots, \rho_k\})$ over $R$.
    We say that $e$ \emph{corrects} $X$ if $\rho_0 \in X$, but $X \cap \{\rho_1, \rho_2, \ldots, \rho_k\} = \emptyset$.\footnote{
        As we will see below, this terminology is inspired by the fact that if $\match(X) = \mu$ and $e$ corrects $X$, then $e$ corresponds to a priority-correcting adjustment of $\mu$.
    }

    We say that $G$ with order $[\rho_1,\ldots,\rho_N]$ over rotations has the \emph{good chain property} if for all $e = (\rho_j , T)\in E$, there exists an $i < j$ such that $\rho_i \in T$. (This means that for each $j=1,\ldots,N$, there is no fanout $e$ which corrects $\{ \rho_1, \rho_2, \ldots, \rho_j \}$.)

    Define the \emph{(abstract) fanout lattice}, denoted $F_{\mathsf{abs}}(G)$, to be the collection of subsets of $R$, ordered by set inclusion, defined as follows:
    \[ F_{\mathsf{abs}}(G) = \left\{ X \subseteq R\ \big|\ \text{No $e\in E$ corrects $X$} \right\}. \]

    Define the \emph{(matching) fanout lattice}, denoted $F(G)$, as follows.
    (As in \autoref{def:rotation-dag}, we require that for every $X =  \{ \rho_{i(1)}, \ldots \rho_{i(k)} \} \in F_{\mathsf{abs}}(G)$ where $i(1) < \ldots < i(k)$, we have $\rho_{i(j+1)}$ valid at $\rho_{i(j)}\ldots\rho_{i(1)}\mu$ for each $j=0,\ldots,k-1$.)
    We set
    \[ F(G) = \left\{ \match(X)\ \big|\ \text{$X \subseteq R$, and no $e\in E$ corrects $X$} \right\}. \qedhere \]
\end{definition}

We now show that the fanout lattice is (in fact) a lattice, with join operations given by $\cup$ (analogous to the way join operations in $\PN$ are given by coordinatewise maxima).

\begin{proposition}
    \label{thrm:fanouts-give-lattice}
    For fanout multigraph $G$, the set $F = F_{\mathsf{abs}}(G)$ forms a lattice under the set-inclusion ordering.
    Moreover, for all $S, T \in F$, we have $S \vee_{F} T = S \cup T$ (though the same is not true for $\wedge_{F}$).
\end{proposition}
\begin{proof}
    To begin, observe that $\emptyset \in F$ and $R \in F$, so $F$ has a least and greatest element under $\subseteq$.
    
    Next, we show that $F$ has least upper bounds given by $\cup$.
    Consider any $S, T \in F$. 
    We claim that $S \cup T \in F$.
    To see this, observe consider any fanout $e = (r_0, \{r_1,\ldots,r_k\})$ such that $r_0 \in S\cup T$.
    Then, $r_0$ is also in one of $S$ or $T$.
    Suppose without loss of generality that $r_0 \in S$.
    Since $S \in F$, we know $e$ does not correct $S$, thus, there is some $j \in \{1,\ldots,k\}$ such that $r_j \in S$. 
    Thus we also have $r_j \in S \cup T$, so $e$ does not correct $S\cup T$ either.
    Since this applies for all fanouts in $E$, we know that $S\cup T \in F$ as well.
    This shows $S$ and $T$ have a least upper bound in $F$, and that indeed $S \vee_F T = S\cup T$.

    Finally, we show that $F$ has greatest lower bounds.
    To see this, consider any $S, T \in F$. 
    Let $\mathsf{Under}_{S,T} = \{ U \in F \ \mid\ \text{$U \subseteq S$ and $U \subseteq T$} \} \subseteq F$ denote the subset of $F$ below both $S$ and $T$.
    Since $\emptyset \in F$, we have $\emptyset \in \mathsf{Under}_{S,T}$, and thus $\mathsf{Under}_{S,T} \ne \emptyset$.
    Thus,  the union of all the sets in $\mathsf{Under}_{S,T}$ is well-defined; call this union $\overline{U} = \bigcup_{U \in \mathsf{Under}_{S,T}} U$.
    Since we just showed $F$ is closed under unions, we have $\overline{U}\in F$.
    We claim that $\overline{U}$ is the greatest lower bound of $S$ and $T$ in $F$.
    To see this, note that $\overline{U}\subseteq S, T$ by construction, and moreover, for any $V \in F$ with $V \subseteq S, T$, we have $V \in \mathsf{Under}_{S,T}$ and thus $V \subseteq \overline{U}$.
    This shows $\overline U = S \wedge_{F} T$, as desired.
    Thus, $S$ and $T$ have a greatest lower bound in $F$, completing the proof.\footnote{
      After showing that $\PN$ is closed under componentwise maximums, \cite{Reny22} uses precisely this same logic to show that $\PN$ has greatest lower bounds.
    }
\end{proof}

For example, consider the abstract fanout multigraph with rotations $R = \{ \alpha, \beta, \gamma \}$ and a single fanout $e = (\beta, \{ \alpha, \gamma \} )$.
Then, the only subset $S$ of $R$ such that $e$ corrects $S$ is $\{\beta\}$.
Thus, $F_{\mathsf{abs}}(G)$ contains every subset of $R$ except for $\{\beta\}$.
See \autoref{fig:S7-uncolored-with-fanout} for an illustration.
The corresponding lattice is as in \autoref{fig:non-distributive-b}.

\begin{figure}[htbp]
    \begin{center}
    \begin{minipage}[c]{\textwidth}
        \centering
        \begin{minipage}[c]{0.45\textwidth}
            \includegraphics*[width=\textwidth]{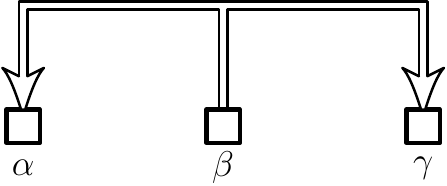}
        \end{minipage} 
    \end{minipage}
    \end{center}
\caption{Illustration of the simplest non-distributive fanout set; the corresponding fanout lattice is as in \autoref{fig:non-distributive-b}.}
\label{fig:S7-uncolored-with-fanout}
\end{figure}

It is not hard to see how fanout multigraphs generalize the order-theoretic properties of the rotation DAG (\autoref{def:rotation-dag}).
For, if $e = (r_0,r_1)$ is a predecessor edge, and $e' = (r_0, \{r_1\})$ is the analogous fanout, then $e$ violates a set $S$ of rotations if and only if $e'$ corrects $S$.
But fanouts additionally allow for other behavior, such as that exhibited in \autoref{fig:S7-uncolored-with-fanout} and \autoref{fig:non-distributive-b}
(and the additional more-involved examples below).

Before proceeding, we briefly discuss the good chain property for fanout multigraphs (which we show below follows via a direct consequence of \autoref{lem:struct-for-pn-max-chain}).
Without such a property, fanout multigraphs would have no notion of ``acyclicity'' (as required in the definition of a rotation DAG).
Indeed, a fanout multigraph without the good chain property could include two fanouts $(r_0, \{r_1\})$ and $(r_1, \{r_0\})$ in a length-two cycle, even though fanouts of this form have exactly the same semantics as predecessor relationships (which must be acyclic; \autoref{def:rotation-dag}).
Moreover, we conjecture that \emph{every} lattice has a representation as a fanout multigraph \emph{without} the good chain property, so we suspect that \emph{some} such notion of acyclicity must be assumed to achieve even a partial characterization which rules out some lattices.
See below in \autoref{sec:fanout-good-chain-non-example} for an example of how the good chain property can rule out some lattices as impossible.

\subsubsection{The Fanout Representation of \texorpdfstring{$\PN$}{PN}}
\label{sec:pn-embeds-in-fanout}

We now show that $\PN$ is always isomorphic to a fanout lattice;
moreover, using the results of \autoref{sec:structural-properties}, we show this fanout lattice always has the good chain property.

The intuition for why $\PN$ is a fanout lattice is that a fanout $e = (r_0, T)$ is able to exactly capture whether some fixed $\eta$ is a priority correcting adjustment of different matchings $\mu$. 
To give exposition into this, recall that a priority correcting adjustment of $\mu$ is some $\eta$ such that:
\begin{enumerate}[(1)]
    \item There is some $d_* \in \vio(\mu)$ with $\eta(d_*) \succ_{d_*} \mu(d_*)$.
    \item For all $d \in \vio(\eta)$, we have $\eta(d)\succeq_d \mu(d)$.
\end{enumerate}
Our goal is to set $e = (r_0,T)$ such that $e$ corrects $S\subseteq R$ if and only if $\eta$ is a priority-correcting adjustment of $\match(S)$.
Informally speaking, we do this by letting $r_0$ be the rotation that causes $d_*$'s priority to be violated, and by letting $T$ be all rotations $r$ such that, for $d \in \vio(\eta)$, rotation $r$ makes $d$ better than-off than $\eta(d)$.

We now state and prove our result formally:

\begin{proposition}
    \label{thrm:fanout-rep}
    For any profile of preferences and priorities, there exists a fanout multigraph $G$ with the good chain property such that $F(G) = \PN$.
\end{proposition}
\begin{proof}
    Recall from \autoref{thrm:K-has-rotations} that there exists a distributive lattice $\K \supseteq \PN$ which always has a predecessor DAG representation $(\IPDA, R, E_{\mathsf{DAG}})$
    for some set of predecessor edges $E_{\mathsf{DAG}}$.
    To prove the proposition, we construct a set of fanouts $E_{\mathsf{Fan}}$ such that for all $\mu\in\K$, we have $\mu\notin \PN$ if and only if there exist an $e \in E_{\mathsf{Fan}}$ such that $e$ corrects $\rots(\mu)$.

    To this end, fix some matching $\mu \in \K \setminus \PN$.
    By definition there must exist a priority-correcting adjustment $\eta$ of $\mu$.
    This means we have the following:
    \begin{enumerate}[(1)]
      \item There exists some $d_*$ such that $d_* \in \vio(\mu)$ and $\eta(d_*) \succ_{d_*} \mu(d_*)$. In particular, suppose $d_*$'s priority is violated at $h_*$, i.e. let $h_*$ be such that $h_* \succ_{d_*} \mu(d_*)$ and $d_* \succ_{h_*} \mu(h_*)$.
      \item For all $d \in \vio(\eta)$, we have $\eta(d)\succeq_d \mu(d)$. 
    \end{enumerate}
    Select any such an $\eta, d_*,$ and $h_*$.
    Now, we define a fanout $e(\mu) = (r_0, T)$ as follows.

    First, we define $r_0$. 
    Let $r_0$ denote the rotation moving $h_*$ to $\mu(h_*)$, i.e., the unique rotation $r_0$ such that, 
    for some matching $r_0 \widetilde\mu(h_*) = \mu(h_*)$ but $\widetilde\mu(h_*) \ne \mu(h_*)$ for any $\widetilde\mu \in \K$ where $r_0$ is valid. 
    There always exists such a rotation; to see why, note that if this was not the case, then $\IPDA(h_*) = \mu(h_*)$, but since $\mu \ge \IPDA$, we then must have $h_* \succ_{d_*} \IPDA(h_*)$, which contradicts the fact that $\IPDA$ is stable.

    Second, we define $T$.
    Let $T_1$ be the set of rotations moving $d_*$ to some institution at least as good for $d_*$ as $h_*$, 
    i.e., $T_1$ consists of all rotations $r$ such that all $\widetilde\mu \in \K$ with $r \in \rots(\widetilde\mu)$, we have $\widetilde\mu(d_*) \succeq_{d_*} h_*$. 
    Furthermore, let $T_2$ denote the set of rotations moving some applicant $d \in \vio(\eta)$ to some institution strictly better for $d$ than $\mu(d)$, i.e., 
    \[ T_2 = \{ r \ \mid\ 
        \text{for some $d \in \vio(\eta)$,
        we have $\widetilde\mu(d) \succ_d \eta(d)$ for all $\widetilde\mu$ with $r \in \rots(\widetilde\mu)$} \}.
    \]
    Then, we let $T = T_1\cup T_2$.\footnote{
        Note that $T_1$ or $T_2$ might be empty, but not both.
        This follows directly from \autoref{lem:fanout-rep-main-claim}, and the fact that $\match(R) = \EADAM \in \PN$, so there must be some $r \in T_1 \cup T_2$.
    }

    Observe that by construction, $r_0 \in \rots(\mu)$ and $(T_1 \cup T_2) \cap \rots(\mu) = \emptyset$, thus $e(\mu)$ corrects $\rots(\mu)$.
    This shows that by including fanout $e(\mu)$ in $E_{\mathsf{Fan}}$, we guarantee that $\mu \notin F(\IPDA,R,E_{\mathsf{Fan}})$; i.e., by using fanouts based on the above construction, we can always remove from our fanout representation all matchings which are not in $\PN$.
    To complete the proof, we must show that when we do not \emph{also} remove other matchings which should actually be in $\PN$.
    The following lemma guarantees this.
    \begin{lemma}
        \label{lem:fanout-rep-main-claim}
        If $\eta$ is defined as above, then
        for any matching $\nu \in \K$, if $e = (r_0, T_1\cup T_2)$ corrects $\rots(\nu)$, then $\eta$ is a priority-correcting adjustment of $\nu$.
    \end{lemma}
    To prove this lemma, suppose that $e$ corrects $\rots(\nu)$; this means that $r_0 \in \rots(\nu)$ and $T_i \cap \rots(\nu) = \emptyset$ for $i\in\{1,2\}$.
    The fact that $r_0 \in \rots(\nu)$ implies that $\nu(h_*) \prec_{h_*} d_*$.
    The fact that $T_1 \cap \rots(\nu) = \emptyset$ implies that 
        $\nu(d_*) \preceq_{d_*} \mu(d_*) \prec_{d_*} \eta(d_*)$.
    This, combined with the previous sentence, implies that $d_*$'s priority is violated in $\nu$, and that $\eta(d_*) \succeq_{d_*} \nu(d_*)$.
    The fact that $T_2 \cap \rots(\nu) = \emptyset$ implies that 
        for all $d \in \vio(\eta)$, we have $\nu(d) \preceq_{d} \eta(d)$.
    Thus, we see that $\eta$ is a priority-correcting adjustment of $\nu$.
    This concludes the proof of \autoref{lem:fanout-rep-main-claim}.

    We are now able to complete the proof of \autoref{thrm:fanout-rep}.
    Let $E_{\mathsf{Fan}} = \{ e(\mu) \ \mid \ \mu\in\K\setminus\PN \}$ (where for any such $\mu \in \K\setminus\PN$, we select any $\eta, d_*,$ and $h_*$ in order to define $e(\mu)$ as above), and let $G = (\IPDA, R, E_{\mathsf{fan}})$.
    Since $e(\mu)$ corrects $\rots(\mu)$ for all $\mu \in \K\setminus\PN$, we know that $M$ contains no elements of $\K\setminus\PN$.
    Moreover, 
    since \autoref{lem:fanout-rep-main-claim} shows that $e(\mu)$ corrects $\rots(\nu)$ only if $\nu$ has some priority-correcting adjustment,
    and since each $\mu \in \PN$ has no priority-correcting adjustments by definition, we see that $M$ contains all elements of $\PN$. 
    This shows that $F(G) = \PN$.

    Finally, we show that $G$ has the good chain property.
    But this follows directly from the fact that $F(G)=\PN$, and from \autoref{lem:struct-for-pn-max-chain}, since the chain constructed in that results can exactly tell us the order on $R$ needed.
    This concludes the proof.
\end{proof}

The above result shows why a fanout is able to capture the property of some fixed $\eta$ being a priority correcting adjustment of different matchings $\mu$.
When coupled with the simple proof in \autoref{thrm:fanouts-give-lattice} that fanout lattices are closed under set unions, this gives intuition for the perhaps mysterious-seeming property that $\PN$ is closed under coordinatewise maximums (but not minimums).\footnote{
    We note, however, that this approach would require much more work in order to achieve a self-contained proof that $\PN$ is closed under coordinatewise maxima.
    This is because we crucially use $\PN\subseteq \K$ to get the set of rotations representing (a superset of) $\PN$, and showing that $\PN\subseteq \K$ requires involved results from \cite{Reny22}.
}

\subsubsection{Examples of \texorpdfstring{$\PN$}{PN} Lattices and Fanout Representations}
\label{sec:fanout-examples}

In \autoref{fig:advanced-intersecting-fanouts} and \autoref{fig:hossy-wide-fanouts}, we give two more-involved examples of $\PN$ rotations and their fanout representations.
The figures show preferences and priorities which induces these lattices, and discuss how one can verify these examples.

\begin{figure}[htbp]
    
    \begin{minipage}[c]{\textwidth}
        \centering
        \begin{minipage}[c]{0.35\textwidth}
        \begin{center}
            \includegraphics*[height=1.2\textwidth]{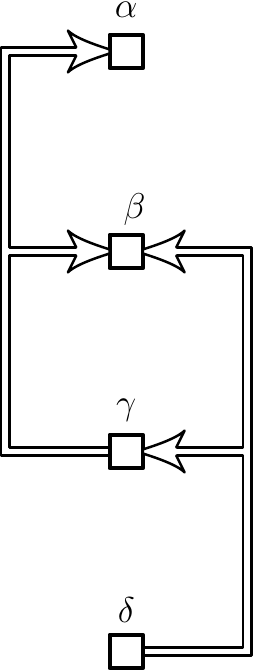}
            \subcaption{The fanout set $\big\{\ (\gamma, \{\alpha,\beta\}),\ \allowbreak  (\delta, \{\beta,\gamma\})\ \big\}$.}
        \end{center}
    \end{minipage} 
    \begin{minipage}[c]{0.55\textwidth}
        \begin{center}
            \includegraphics*[height=0.9\textwidth]{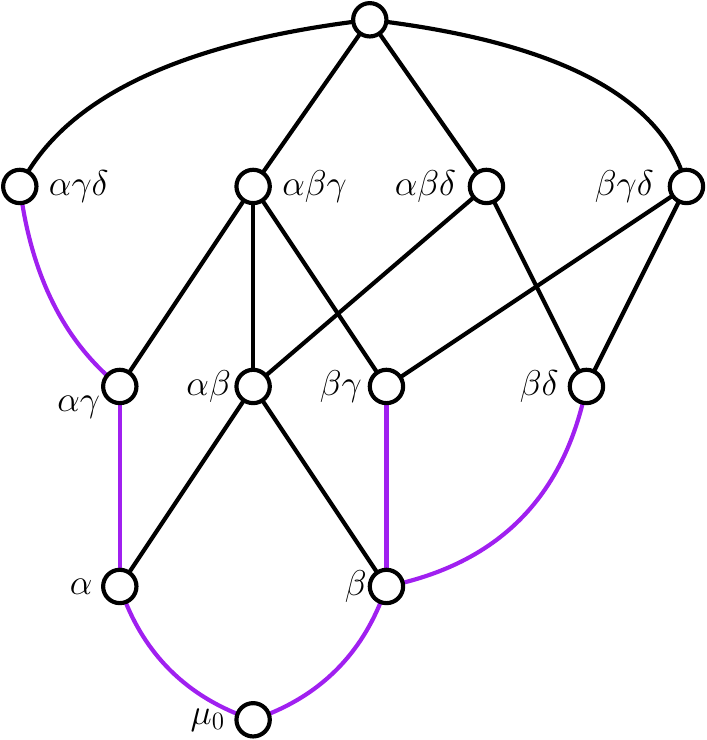}
            \subcaption{Corresponding lattice.}
        \end{center}
        \end{minipage} 
    \end{minipage}

    \vspace{0.3in}

    \begin{minipage}[t]{0.45\textwidth} \centering
        \begingroup
        \setlength{\tabcolsep}{3pt} %
        Institutions' Priorities:
        \\[0.05in]
        \begin{tabular}{cccccccccc}
            \toprule
            $h_A$ & $h_B$
            & $h_C$ & $h_D$
            & $h_E$ & $h_F$
            & $h_X$
            & $h_G$ & $h_H$
            & $h_Y$ \\
            \midrule
            $d_1$ & $d_2$
            & $d_3$ & $d_4$
            & $d_5$ & $d_6$
            & $d_u$
            & $d_7$ & $d_8$
            & $d_v$ \\
            $d_2$ & $d_1$
            & $d_4$ & $d_3$
            & $d_u$ & $d_5$
            & $\myvdots$
            & $d_v$ & $d_7$
            & $\myvdots$ \\
            $\myvdots$ & $\myvdots$ 
            & $\myvdots$ & $\myvdots$
            & $d_6$ & $\myvdots$ 
            & 
            & $d_8$ & $\myvdots$
            & \\
              &
            & &
            & $\myvdots$ &
            & 
            & $\myvdots$ & 
            &
        \end{tabular}
        \endgroup
    \end{minipage}
    \begin{minipage}[t]{0.45\textwidth} \centering
        \begingroup
        \setlength{\tabcolsep}{3pt} %
        Applicants' Preferences:
        \\[0.05in]
        \begin{tabular}{ccccccccccc}
            \toprule
            $d_1$ & $d_2$
            & $d_3$ & $d_4$
            & $d_5$ & $d_6$
            & $d_u$
            & $d_7$ & $d_8$
            & $d_v$ \\
            \midrule
            $h_B$ & $h_A$
            & $h_D$ & $h_C$
            & $h_F$ & $h_E$
            & $h_E$
            & $h_H$ & $h_G$
            & $h_G$ \\
            $h_X$ & $h_B$
            & $h_Y$ & $h_D$
            & $h_C$ & $h_F$
            & $h_X$
            & $h_E$ & $h_H$
            & $h_Y$ \\
            $h_A$ &  
            & $h_A$ & 
            & $h_E$ &  
            &
            & $h_G$ & 
            &  \\
             &
            & $h_C$ & 
            &  & 
            & 
            &  &
        \end{tabular}
        \endgroup
    \end{minipage} 
\begin{minipage}{\textwidth} \centering
    \subcaption{Preferences and priorities inducing this $\PN$ lattice.}
\end{minipage} 

    \caption{Illustration of a $\PN$ lattices via fanout sets.}
    \label{fig:advanced-intersecting-fanouts}
    \vspace{0.1in}
    {\footnotesize
    \textbf{Notes:} In $\IPDA$, every institution gets their top choice. The rotations are as follows: 
    $\alpha = [ (h_A,d_1), (h_B,d_2)]$, 
    $\beta = [ (h_C,d_3), (h_D,d_4)]$, 
    $\gamma = [ (h_E,d_5), (h_F,d_5)]$, 
    and
    $\delta = [ (h_G,d_7), (h_H,d_8)]$.
    The set of fanouts is
    $\big\{\ (\gamma, \{\alpha,\beta\}),\ \allowbreak (\delta, \{\beta,\gamma\})\ \big\}$.
    The join-irreducible elements of the lattice are shown with purple-shaded edges below them.

    \hspace{0.2in} To verify that this fanout set represents the $\PN$ lattice in this instance, one can take the following approach.
    First, one can check that $\IPDA < \alpha < \alpha\beta < \alpha\beta\gamma < \alpha\beta\gamma\delta$ is a gradual Kesten-Tang-Yu chain (\autoref{def:gradual-KTY}); thus, the set of rotations is indeed $\{\alpha,\beta,\gamma,\delta\}$.
    Second, using \autoref{thrm:pca-dominates-ipda}, we know that all priority-correcting adjustments Pareto improve on $\IPDA$;
    one can verify that that there are exactly two Pareto improvements to $\IPDA$ which are not elements of $\K$; namely, the two matchings 
    \begin{align*}
         \{(d_1, h_X), (d_u, h_E), (d_5, h_C), (d_3, h_A)\quad , \quad (d_2,h_B), (d_4,h_D), (d_6,h_F), (d_7,h_G), (d_8,h_H), (d_v,h_Y)\} \\
         \{(d_3, h_Y), (d_v, h_G), (d_7, h_E), (d_5, h_C) \quad , \quad (d_1,h_A), (d_2,h_B), (d_4,h_D), (d_6,h_F), (d_8,h_H), (d_u,h_X)\} \\
    \end{align*}

    \vspace{-0.2in}

    (where we visually divide participants who are matched as in $\IPDA$ with extra spacing for readability).
    One can check that these two matching correspond respectively to the fanouts $(\gamma, \{\alpha,\beta\})$ and $(\delta, \{\beta,\gamma\})$.
    \par}
\end{figure}

\begin{figure}[htbp]

    \begin{minipage}[c]{\textwidth}
        \centering
        \begin{minipage}[c]{0.35\textwidth}
        \begin{center}
            \includegraphics*[height=1.2\textwidth]{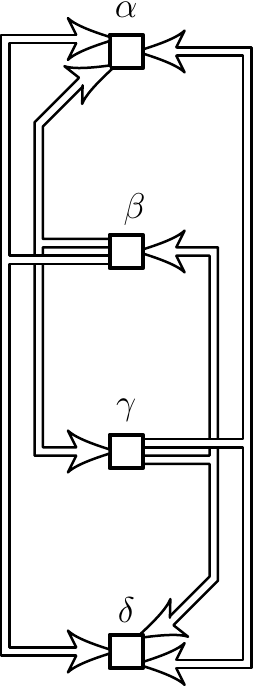}
            \subcaption{The fanout set.}
        \end{center}
    \end{minipage} 
    \begin{minipage}[c]{0.55\textwidth}
        \begin{center}
            \includegraphics*[height=0.9\textwidth]{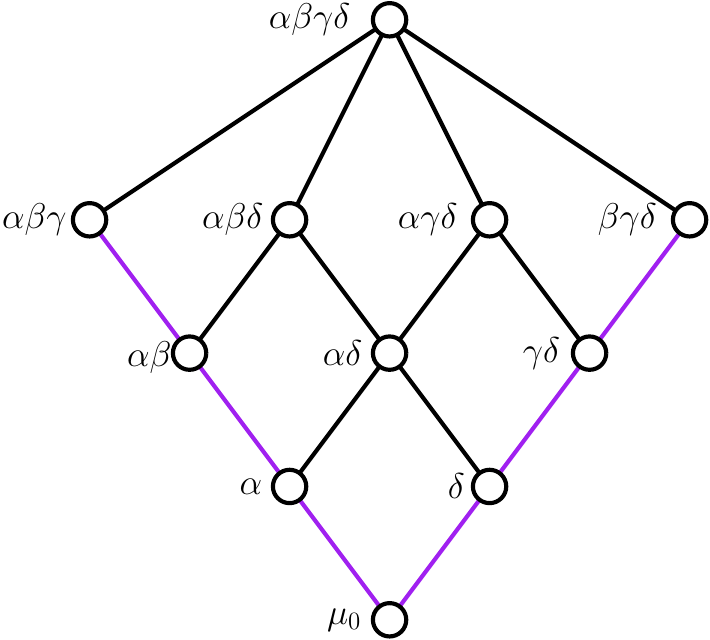}
            \subcaption{Corresponding lattice.}
        \end{center}
        \end{minipage} 
    \end{minipage}

    \vspace{0.3in}

    \begin{minipage}[t]{0.45\textwidth} \centering
        \begingroup
        \setlength{\tabcolsep}{3pt} %
        Institutions' Priorities:
        \\[0.05in]
        \begin{tabular}{cccccccccc}
            \toprule
            $h_A$ & $h_B$
            & $h_C$ & $h_D$
            & $h_E$ & $h_F$
            & $h_X$
            & $h_G$ & $h_H$
            & $h_Y$ \\
            \midrule
            $d_1$ & $d_2$
            & $d_3$ & $d_4$
            & $d_5$ & $d_6$
            & $d_u$
            & $d_7$ & $d_8$
            & $d_v$ \\
            $d_2$ & $d_1$
            & $d_u$ & $d_3$
            & $d_v$ & $d_5$
            & $\myvdots$
            & $d_8$ & $d_7$
            & $\myvdots$ \\
            $\myvdots$ & $\myvdots$ 
            & $d_4$ & $\myvdots$
            & $d_6$ & $\myvdots$ 
            & 
            & $\myvdots$ & $\myvdots$
            & \\
              &
            & $\myvdots$ &
            & $\myvdots$ &
            & 
            & & 
            &
        \end{tabular}
        \endgroup
    \end{minipage}
    \begin{minipage}[t]{0.45\textwidth} \centering
        \begingroup
        \setlength{\tabcolsep}{3pt} %
        Applicants' Preferences:
        \\[0.05in]
        \begin{tabular}{ccccccccccc}
            \toprule
            $d_1$ & $d_2$
            & $d_3$ & $d_4$
            & $d_u$
            & $d_5$ & $d_6$
            & $d_7$ & $d_8$
            & $d_v$ \\
            \midrule
            $h_B$ & $h_A$
            & $h_D$ & $h_C$
            & $h_C$
            & $h_F$ & $h_E$
            & $h_H$ & $h_G$
            & $h_C$ \\
            $h_X$ & $h_B$
            & $h_G$ & $h_D$
            & $h_E$
            & $h_A$ & $h_F$
            & $h_Y$ & $h_H$
            & $h_E$ \\
            $h_A$ &  
            & $h_C$ & 
            & $h_X$
            & $h_E$ &  
            & $h_G$ & 
            & $h_Y$ \\
        \end{tabular}
        \endgroup
    \end{minipage} 
\begin{minipage}{\textwidth} \centering
    \subcaption{Preferences and priorities inducing this $\PN$ lattice.}
\end{minipage} 

    \caption{Illustration of a $\PN$ lattices via fanout sets.}
    \label{fig:hossy-wide-fanouts}
    \vspace{0.1in}
    {\footnotesize
    \textbf{Notes:} 
    The set of fanouts is
    $\big\{\ (\beta, \{\alpha,\gamma\}),\ (\beta, \{\alpha,\delta\}),\ (\gamma, \{\beta,\delta\}),\ \allowbreak (\gamma, \{\alpha,\delta\})\ \big\}$.
    All other notation (i.e., the definition of the rotations $\alpha, \beta, \gamma, \delta$) is as in \autoref{fig:advanced-intersecting-fanouts}.

    \hspace{0.2in} To verify that the $\PN$ lattice is as described, one can take the same approach as in \autoref{fig:advanced-intersecting-fanouts}.
    The matchings which are not in $\K$ but which Pareto-improve on $\IPDA$ include:
    \begin{align*}
         \mu_1 & = 
         \{(d_1, h_X), (d_u, h_E), (d_5, h_A) \quad , \quad (d_2,h_B), (d_3,h_C), (d_4,h_D), (d_6,h_F), (d_7, h_G), (d_8,h_H), (d_v, h_Y)\}
        \\ \mu_2 & = 
         \{(d_3, h_G), (d_7, h_Y), (d_v, h_C) \quad , \quad (d_1,h_A), (d_2,h_B), (d_4,h_D), (d_u, h_X), (d_5, h_E), (d_6,h_F), (d_8,h_H)\}
        \\ \mu_3 & = 
         \{(d_1, h_X), (d_u, h_C), (d_3, h_G), (d_7, h_Y), (d_v,h_E), (d_5,h_A) \quad , \quad (d_2,h_B), (d_4,h_D), (d_6,h_F), (d_8,h_H)\} \\
    \end{align*}

    \vspace{-0.2in}
    (where we visually divide participants who are matched as in $\IPDA$ with extra spacing for readability).
    Pareto-improvements to $\IPDA$
    also include several ways to eliminate rotations on top of $\mu_1$ or $\mu_2$, and include the coordinatewise maximum of $\mu_1$ and $\mu_2$.
    Matching $\beta\mu_1$ corresponds to fanout $(\beta, \{\alpha, \gamma\})$; 
    matching $\gamma\mu_2$ corresponds to fanout $(\gamma, \{\beta,\delta\})$, 
    and matching $\mu_3$ corresponds to two different fanouts: $(\beta, \{\alpha,\delta\})$ and $(\gamma, \{\alpha,\delta\})$.
    \par}
\end{figure}

We also now discuss how these lattices have more-complicated features than would be possible for distributive lattices.
We phrase this discussion in terms of the join-irreducible elements of the lattice, i.e., those $x$ for which $a \vee b = x$ implies $a = x$ or $b = x$.
These elements are show in the figures with a purple-shaded edge below them.

To see that the lattice in \autoref{fig:advanced-intersecting-fanouts} is not distributive, recall by Birkhoff's representation theory \cite{Birkhoff37} that in a distributive lattices (under a predecessor DAG representation) each rotation can appear in exactly one join-irreducible element (namely, the join irreducible set containing $\alpha$ corresponds to the smallest downward-closed subset of the rotations which contains $\alpha$;
see \cite{echenique2023online}, or for a more general discussion in lattice theory, see e.g., \cite{gratzer2012general}).
In contrast, in the lattice in \autoref{fig:advanced-intersecting-fanouts}, both $\gamma$ and $\delta$ appear in multiple join-irreducible elements.
(A similar remark holds for $\beta$ and $\gamma$ in \autoref{fig:hossy-wide-fanouts}.)
Hence, these lattices break this key feature of distributive lattices in a strong way.

\subsubsection{Example of a \emph{Non}-\texorpdfstring{$\PN$}{PN} Lattice Due to the Good Chain Property}
\label{sec:fanout-good-chain-non-example}

We how show how the fanout representation of $\PN$, and the good chain property in particular, can prove that some lattices cannot ever arise as the $\PN$ lattice.

\autoref{fig:m3} displays a lattice, typically denoted by $M_3$, which is a famous simple example of a non-distributive lattice from the order theory literature \cite{birkhoff1940lattice, gratzer2012general}.

\begin{figure}[htbp]
    \begin{minipage}[c]{\textwidth}
      \centering
      \begin{minipage}[c]{0.45\textwidth}
      \centering
          \includegraphics*[height=0.6\textwidth]{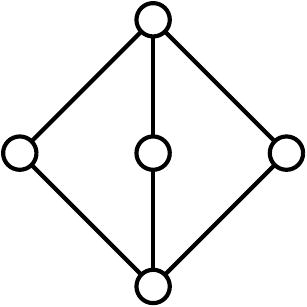}
          \subcaption{The lattice $M_3$.}
          \label{fig:m3}
      \end{minipage} 
      \begin{minipage}[c]{0.45\textwidth}
          \includegraphics*[height=0.6\textwidth]{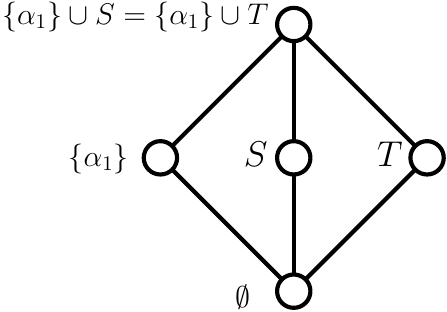}
          \subcaption{Key steps of the proof.}
          \label{fig:m3-proof}
      \end{minipage} 
    \end{minipage}
    \caption{$M_3$ lattice and key steps of the proof of \autoref{thrm:m3-impossibility}.}
    \label{fig:m3-overall}
\end{figure}

\begin{proposition}[$M_3$ is not a $\PN$ lattice]
  \label{thrm:m3-impossibility}
  There is no profile of priorities and preferences such that $\PN$ is isomorphic to the lattice $M_3$ shown in \autoref{fig:m3}.
\end{proposition}
\begin{proof}
  Suppose for contradiction there is some set of preferences by which $\PN$ is isomorphic to $M_3$.
  Let $G = (\IPDA, R, E)$ be the fanout multigraph representing $\PN$, as given by \autoref{thrm:fanout-rep}.
  Let $[ \alpha_1, \alpha_2, \ldots, \alpha_N ]$ be the ordering on $R$ as given by the good chain property.
  Observe that this means there is some element $\mu_1 \in \PN$ which corresponds to $\{\alpha_1\} \subseteq R$.

  Now, since we have assumed $\PN$ is isomorphic to $M_3$, there are two other matchings, call them $\mu_s \ne \mu_t$, which cover $\IPDA$ in $\PN$. 
  Suppose these matchings correspond to sets of rotations $S\subseteq R$ and $T \subseteq R$.
  Now, observe that since $\PN$ is closed under coordinatewise maxima (\autoref{thrm:reny-lattice}), we have that the join of two elements of $\PN$ corresponds to the union of the sets of rotations corresponding to the two elements.
  Thus, $\mu_1 \vee_{\PN} \mu_s$ corresponds to $\{\alpha_1\}\cup S$, and $\mu_1 \vee_{\PN} \mu_t$ corresponds to $\{\alpha_1\}\cup T$.

  However, by the definition of $M_3$, we know that $\mu_1 \vee_{\PN} \mu_s = \mu_1 \vee_{\PN} \mu_t$.
  Thus, we must have $\{\alpha\}\cup S = \{\alpha\}\cup T$.
  (For an illustration of this argument, see \autoref{fig:m3-proof}.)
  This implies that $S = T$.\footnote{
      This step of the proof would fail if we were not able to establish that at least one matching in $\PN$ differed from $\IPDA$ by only one rotation.
      Indeed, if we replaced $\{\alpha_1\}$ with a set of just two rotations $\{\alpha_1, \beta\}$, then we would not be able to conclude from $\{\alpha_1, \beta\}\cup S = \{\alpha_1, \beta\}\cup T$ that $S = T$.
      This shows how we use the good chain property.
  }
  But this contradicts the assumption that $S$ and $T$ correspond to distinct elements of the $\PN$ lattice.
  This concludes the proof.
\end{proof}

We make two additional remark regarding the lattice $M_3$.
First, the order theory literature calls a lattice \emph{modular} if $a \le b$ implies that for all $x$, we have $a \vee (x \wedge b) = (a \vee x) \wedge b$.
One can show that every distributive lattice is modular, and indeed modular lattices form a canonical well-studied class of lattices capturing many mathematical insights \cite{birkhoff1940lattice,gratzer2012general}.
The lattice $M_3$ in \autoref{fig:m3} is a classic example of a modular but not distributive lattice.
Thus, \autoref{thrm:m3-impossibility} shows that not all modular lattices arise as priority-neutral lattices.
On the other hand, the $\PN$ lattice in \autoref{fig:non-distributive-b} is not modular, since $\alpha \vee (\beta\gamma \wedge_{\PN} \alpha\beta) = \alpha \ne \alpha\beta = (\alpha \vee \beta\gamma) \wedge_{\PN} \alpha\beta$.
Thus, not all priority-neutral lattices are modular.
Thus, the class of modular lattices---a canonical generalization of distributive lattices---is unrelated (in the set-inclusion sense) to the class of lattices which arise as $\PN$.

Second, we observe that without using the good chain property, there is a fanout representation of $M_3$, shown in \autoref{fig:m3-ignoring-good-chain}.

\begin{figure}[htbp]
    \begin{minipage}[c]{\textwidth}
        \centering
        \begin{minipage}[c]{0.45\textwidth}
        \begin{center}
            \includegraphics*[height=\textwidth]{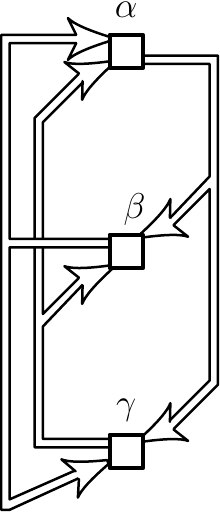}
            \subcaption{A fanout set.}
        \end{center}
    \end{minipage} 
    \begin{minipage}[c]{0.45\textwidth}
        \begin{center}
            \includegraphics*[height=0.8\textwidth]{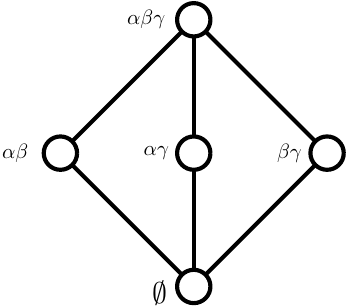}
            \subcaption{Corresponding lattice.}
        \end{center}
        \end{minipage} 
    \end{minipage}

    \caption{Illustration of a fanout representation of $M_3$ without the good chain property.}
    \label{fig:m3-ignoring-good-chain}
\end{figure}

\subsubsection{Remaining Challenges for a Tight Characterization}
\label{sec:remaining-challenges}

\paragraph{Obstacle: Acyclicity in Fanouts.}
The good chain property can be understood as a weak acyclicity property for fanout sets of $\PN$ lattices: for each fanout $(r, T)$, at least one of the elements of $T$ must come before $r$ in some global ordering.
Stronger notions of acyclicity do not hold for fanout sets.
For example, in \autoref{fig:hossy-wide-fanouts}, there are fanouts of the form $(\beta, S_\beta)$ and $(\gamma, S_\gamma)$ where $\beta \in S_\gamma$ and $\gamma \in S_\beta$.

While the good chain property enforces some structure on fanout sets, and is indeed strong enough to rule out some lattices (such as $M_3$ as in \autoref{thrm:m3-impossibility}), this property alone is not enough to characterize $\PN$ lattices.
For example, consider the lattice shown in \autoref{fig:n5}.
This lattice is known as $N_5$ in the order theory literature, and is another famous example of a non-distributive lattice.
The figure shows a fanout set representing $N_5$, and this fanout set has the good chain property. 
However, as we will show in \autoref{thrm:n5-not-priority-neutral} below once we have completed our full characterization, $N_5$ does not arise as a $\PN$ lattice.

\begin{figure}[htbp]
    \begin{minipage}[c]{\textwidth}
        \centering
        \begin{minipage}[c]{0.45\textwidth}
        \begin{center}
            \includegraphics*[height=\textwidth]{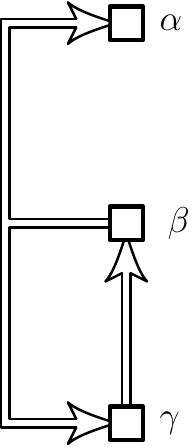}
            \subcaption{A fanout set.}
    \label{fig:n5-fanout}
        \end{center}
    \end{minipage} 
    \begin{minipage}[c]{0.45\textwidth}
        \begin{center}
            \includegraphics*[height=\textwidth]{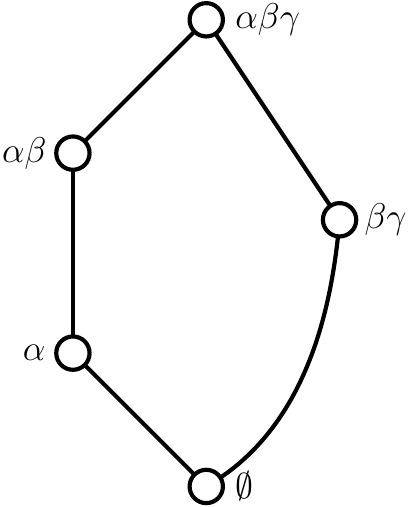}
            \subcaption{Corresponding lattice.}
    \label{fig:n5-lattice}
        \end{center}
        \end{minipage} 
    \end{minipage}

    \caption{Illustration of a \emph{non}-$\PN$ lattices, which still has a fanout representation with the good chain property.}
    \label{fig:n5}
\end{figure}

\paragraph{Idea: Bottom-Up Respresentation Approach.}
Given the above discussion, we suspect that exactly characterizing $\PN$ in terms of fanout sets would be challenging.
Instead, we provide a different approach.
Where fanout sets take a ``top down approach'', and generalize the order-theoretic properties of a rotation DAG (\autoref{def:rotation-dag}),
our full characterization takes a ``bottom up approach'', and generalizes the matching-theoretic properties of the rotation DAG of a specific stable lattice (see \autoref{thrm:stable-rotation-dag}).
In other words, our full representation encodes all possible ``components'' of priority-correcting adjustments (\autoref{def:priority-neutral}), instead of requiring a separate piece of data for every priority-correcting adjustment (as in fanout multigraphs).
See below for the full construction.

\section{Main Result}
\label{sec:result}

In this section, we give our main result: a tight characterization of priority-neutral lattices.
First, in \autoref{sec:movement-graph}, we define our characterization and prove that it can represent all $\PN$ lattices.
Second, in \autoref{sec:all-movement-is-PN}, we show that all lattices with this representation arise as $\PN$ lattices.
Third, in \autoref{sec:discussion}, we conclude the paper with a brief discussion and an application: the first poly-time algorithm for checking whether a given matching is priority-neutral.

\subsection{The Movement Graph Representation of \texorpdfstring{$\PN$}{PN}}
\label{sec:movement-graph}

We now state our main construction, called the \emph{movement graph} and induced \emph{movement lattice}, which we show (along with the good chain property) characterizes $\PN$ lattices.

\begin{definition}[Movement graph of movement lattices]
    \label{def:movement-graph}
    For some set of applicants $\Appls$ and institutions $\Insts$, a \emph{movement graph} is a tuple $G = (\mu, R, U, P, M)$, defined as follows.
    First, $\mu$ is some base matching.
    Second, $R$ is some set of rotations $R = \{ \rho_1, \ldots, \rho_N \}$,
    equipped with a fixed ordering $[\rho_1,\ldots,\rho_N]$.
    Third, $U \subseteq \Insts\times\Appls$ is a set we refer to as \emph{terminal pairs}.
    Fourth, $P$ is a set of predecessor relations on $R \cup U$, i.e., $P \subseteq (R\cup U)\times(R\cup U)$, such that if $(\rho_i, \rho_j) \in P$ for $\rho_i,\rho_j \in R$, then $i > j$.
    Fifth, $M$ is another subset of $(R\cup U)\times(R\cup U)$ which we call \emph{movement relations}.
    We think of $(x,y) \in P \cup M$ as the directed edges $x \to y$ of a graph on $R \cup U$, with edges labeled either ``predecessor'' or ``movement.''

    Now, for a given subset $S\subseteq R$, a \emph{PCA cycle} (named for priority-correcting adjustments; see \autoref{def:priority-neutral}) of $S$ is some sequence $C$ of the form
    \[  x_1 \xrightarrow{e_1} x_2 \xrightarrow{e_2} x_3 \xrightarrow{e_3}
    \ldots \xrightarrow{e_{K-1}} x_K \xrightarrow{e_K} x_1,
    \]
    where $x_i \in R\cup U$ for each $i\in[K]$ and there is a directed edge $e_i$ in $P\cup M$ from $x_i$ to $x_{i+1}$ for each $i \in [K]$ (cyclically), with the following two conditions:
    \begin{enumerate}[(1)]
        \item There exists a $i_* \in [K]$ such that $x_{i_*} \in U$, and moreover, there exists an $\alpha \in S$ such that $(\alpha, x_{i_*}) \in P$. 
        That is, there is some rotation $\alpha \in S$ such that terminal pair $x_{i_*}$ is a predecessor of $\alpha$. %
        (Note that $\alpha$ may or may not appear in $C$.)
        \item For each $i \in [K]$, if the edge $e_i$ with $x_i \xrightarrow{e_i} x_{i+1}$ is such that $e_i \in M$, then $x_i \notin S$.
        In other words, if $x_i \in R$ and moreover $x_i \in S$, then $e_i \in P$ is a predecessor edge.
    \end{enumerate}
    When this holds, we say that PCA cycle $C$ \emph{corrects} $S$.
    (As we will see below, if a PCA cycle $C$ corrects $S$, this will mean that $\match(S)$ has a priority-correcting adjustment, and is hence not in $\PN$.
    Moreover, the above (1) and (2) are the analogues of (1) and (2) in \autoref{def:priority-neutral}.) 

    Additionally, as in \autoref{def:rotation-dag}, we say that a predecessor relationship $p = (r_0, r_1)\in P \cap (R \times R)$ \emph{violates} set $S$ if $r_0 \in S$ but $r_1 \notin S$.
    That is, if $r_0$ and $r_1$ are both rotations, and $r_0$ but not $r_1$ are in $S$.

    In analogy with \autoref{def:fanout-set}, we say that $G$ has the \emph{good chain property} with respect to the ordering $[\alpha_1,\ldots,\alpha_N]$ over $R$ if, for each $i = \{ 0, 1, \ldots, N\}$, if we consider the set $S_i = \{ \alpha_1, \ldots, \alpha_i \}$, then no predecessor relationship violates $S_i$ and no PCA cycle corrects $S_i$. 

    Define the \emph{(abstract) movement lattice}, denoted $M_{\mathsf{abs}}(G)$, to be the collection of subsets of $R$, ordered by set inclusion, defined as follows:
    \begin{align*}
        M_{\mathsf{abs}}(G) = \big\{\ S \subseteq R  \ \mid \ 
          & \text{no predecessor relationship $p \in P$ violates $S$,} \\
          & \text{and no PCA cycle $C$ corrects $S$}
        \ \big\}
    \end{align*} 
    (We also refer to a tuple $G = (R,U,P,M)$ which do not actually correspond to rotations etc. as an abstract movement graph, and define $M_{\mathsf{abs}}(G)$ as above.)

    Define the \emph{(matching) movement lattice}, denoted $M(G)$, as follows.
    (As in \autoref{def:rotation-dag}, we require that for every $X =  \{ \rho_{i(1)}, \ldots \rho_{i(k)} \} \in M_{\mathsf{abs}}(G)$ where $i(1) < \ldots < i(k)$, we have $\rho_{i(j+1)}$ valid at $\rho_{i(j)}\ldots\rho_{i(1)}\mu$ for each $j=0,\ldots,k-1$, and define $\match(X) = \rho_{i(k)}\ldots\rho_{i(1)}\mu$.)
    We set
    \[ M(G) = \left\{ \match(X)\ \big|\ X \in M_{\mathsf{abs}}(G) \right\}. \qedhere \]
\end{definition}

To begin to build up a theory of movement lattices, we first show that 
that every movement lattice is in fact a lattice, using our notion of fanout lattices.

\begin{proposition}
    For every abstract movement graph $G = (\mu, R,U,P,M)$, there exists a set of fanouts $E$ over $R$ such that $M(G) = F\big((\mu, R,E)\big)$.
\end{proposition}
\begin{proof}
    Fix $G = (\mu,R,U,P,M)$, and
    for every possible $S \subseteq R$ and every PCA cycle $C$ of $S$, consider any $(\alpha, x_i)$ for $x_i \in C$ and $\alpha \in S$ as in Item~(1) of the definition of a PCA cycle.
    Moreover, for this given $C$, let $T \subseteq R$ be the set of all $\beta \in C$ such that the edge $e$ directed out of $\beta$ in $C$ is such that $e \in M$.
    Then, observe that PCA cycle $C$ corrects $S \subseteq R$ if and only if fanout $(\alpha, T)$ corrects $S$.
    Let $E$ denote the collection of fanouts which take the above form for some PCA cycle $C$ and rotations $\alpha$.
    Then, treating each predecessor edge $p = (r_1, r_0) \in P$ as a fanout $(r_1, \{r_0\})$ in the natural way, we see that
    $M(G) =  F(\mu, R, P \cup E)$.
\end{proof}

In particular, this implies that (abstract) movement lattices are in fact lattices, and they are closed under set unions, corresponding to the fact that $\PN$ is closed under coordinatewise maxima.

We now define, for any fixed profile of applicant preferences and institution priorities, a movement lattice which represents $\PN$.
This is the main construction of our paper.

To begin, consider the rotation DAG $(\IPDA, R, P_1)$ which represents of $\K\supseteq\PN$, as given by \autoref{thrm:K-has-rotations}.
The rotations of our movement graph will be $R$, and the predecessor relations will be a superset of $P_1$.
The terminal pairs $U$ are all pairs $(\EADAM(d), d)$ for $d\in \Appls$.

The predecessor relations $P\subseteq (R\cup U)\times(R\cup U)$ of the movement graph come in two different types, as for the rotation DAG of stable matching discussed in \autoref{sec:aditional-prelims-rotation-DAG}, as follows:

\begin{enumerate}
  \item Rotation $\rho_1\in R$ is a \emph{type 1 predecessor} of rotation $\rho_2\in R$ if $(\rho_2, \rho_1)\in P_1$.\footnote{
    One can show that $P_1$, i.e., all predecessor relationships of $\K$, exactly correspond to the type 1 predecessor relationships of the stable latices which represents $\K$.  
    Thus, $\K$ in fact has \emph{no} type 2 predecessor relationships.
    However, intuitively speaking, $\PN$ must (unlike $\Legal$) respect type 2 predecessor relationships.
  }

  \item Rotation $\rho_1\in R$ is a \emph{type 2 predecessor} of rotation $\rho_2\in R$ if there exists a $(d,h)\in\Appls\times\Insts$ such that $\rho_1$ moves $d$ from below $h$ to above $h$, and $\rho_2$ moves $h$ from above $d$ to below $d$.
  In other words, we have:
  \begin{align*}
       & \rho_2 = [ \ldots, (\cdot, d_{\text{down}}), (h, d_{\text{up}}), \ldots]
       && \qquad \text{if we have $h_{\text{up}} \succ_d h \succ_d h_{\text{down}}$, }
    \\ & \downarrow \text{  (type 2 predecessor) }
       && \qquad \text{and also $d_{\text{up}} \succ_h d \succ_h d_{\text{down}}$.}
    \\ & \rho_1 = [ \ldots, (h_{\text{down}}, d), (h_{\text{up}}, \cdot), \ldots]
  \end{align*}
  Additionally, terminal pair $\sigma_1 \in U$ is a \emph{type 2 predecessor} of rotation $\rho_2\in R$ if there exists a $(d,h)\in\Appls\times\Insts$ such that $d$ prefers $h$ to $d$'s paired partner in $\sigma_1$, and $\rho_2$ moves $h$ from above $d$ to below $d$.
  In other words, we have:
  \begin{align*}
       & \rho_2 = [ \ldots, (\cdot, d_{\text{down}}), (h, d_{\text{up}}), \ldots]
       && \qquad \text{if we have $ h \succ_d h_{\text{fin}}$, }
    \\ & \downarrow \text{  (type 2 predecessor) }
       && \qquad \text{and also $d_{\text{up}} \succ_h d \succ_h d_{\text{down}}$.}
    \\ & \sigma_1 = [ (h_{\text{fin}}, d) ]
  \end{align*}
\end{enumerate}

Finally, the movement relations $M \subseteq (R\cup U)\times(R\cup U)$ are defined as follows:
\begin{itemize}
  \item Rotation $\rho_1\in R$ is a \emph{movement predecessor} of rotation or terminal pair $x \in R \cup U$ if there exists a $(d,h)\in\Appls\times\Insts$ such that $\rho_1$ moves $d$ from below $h$ to above $h$, and $h$ appears in $x$.
  In other words, we have:
  \begin{align*}
       & x = [ \ldots, (h, \cdot), \ldots]
       && \qquad \text{if we have $h_{\text{up}} \succ_d h \succ_d h_{\text{down}}$. }
    \\ & \downarrow \text{  (movement predecessor) }
    \\ & \rho_1 = [ \ldots, (h_{\text{down}}, d), (h_{\text{up}}, \cdot), \ldots]
  \end{align*}

  \item Terminal pair $\sigma_1\in R$ is a \emph{movement predecessor} of rotation or terminal pair $x \in R \cup U$ if there exists a $(d,h)\in\Appls\times\Insts$ such that $d$ prefers $h$ to their paired partner in $\sigma_1$, and such that $h$ appears in $x$.
  In other words, we have:
  \begin{align*}
       & x = [ \ldots, (h, \cdot), \ldots]
       && \qquad \text{if we have $h \succ_d h_{\text{fin}}$. }
    \\ & \downarrow \text{  (movement predecessor) }
    \\ & \sigma_1 = [ (h_{\text{fin}}, d) ]
  \end{align*}
\end{itemize}

Before proceeding, we illustrate the above movement graph construction using the two examples of \autoref{sec:fanout-examples}.
\autoref{fig:implemented-fanouts} shows the constructions (omitting those terminal pairs which do not participate in any PCA cycles).

\begin{figure}[htbp]
    \begin{minipage}[c]{\textwidth}
        \centering
        \begin{minipage}[c]{0.45\textwidth}
        \begin{center}
            \includegraphics*[height=1.1\textwidth]{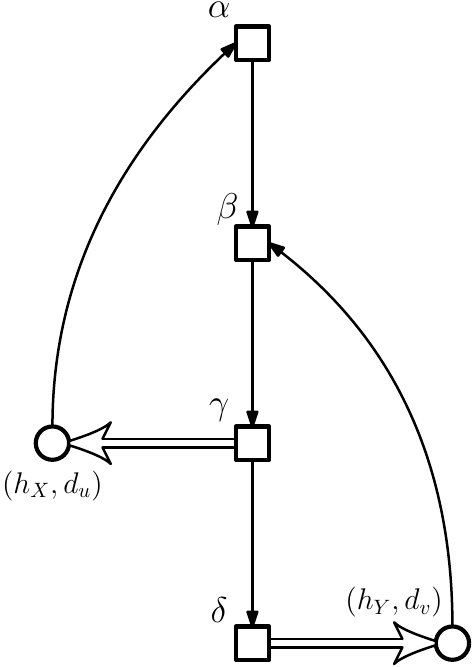}
            \subcaption{Movement graph for \autoref{fig:advanced-intersecting-fanouts}.}
        \end{center}
    \end{minipage} 
    \qquad
    \begin{minipage}[c]{0.45\textwidth}
        \begin{center}
            \includegraphics*[height=1.1\textwidth]{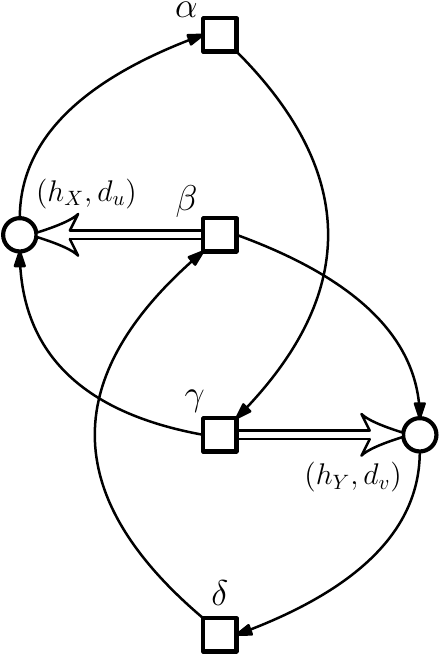}
            \subcaption{Movement graph for \autoref{fig:hossy-wide-fanouts}.}
        \end{center}
        \end{minipage} 
    \end{minipage}
    \caption{Movement graphs construction for examples in \autoref{sec:fanout-examples}.}
    \label{fig:implemented-fanouts}
    \vspace{0.1in}
    {\footnotesize
    \textbf{Notes:} 
    The doubled-arrows $x \Longrightarrow y$ represent predecessor relationships $(x,y)$; in these examples, all such relations are type 2.
    The arrows $x \longrightarrow y$ represent movement relationships $(x,y)$.
    \par}
\end{figure}

Now, let $G = (\IPDA, R, U, P, M)$ be the movement DAG as defined above.

We now show that our construction successfully represents $\PN$ with the following formal result.

\begin{proposition}
  \label{thrm:all-pn-are-movement}
  If $G$ is the movement graph defined as above, then $M(G) = \PN$, and $G$ has the good chain property.
\end{proposition}

We prove this proposition next in \autoref{sec:all-pn-are-movement-proof}, which is entirely dedicated to this proof.

\subsubsection{Proof of \autoref{thrm:all-pn-are-movement}}
\label{sec:all-pn-are-movement-proof}

Fix $G = (\IPDA, R, U, P, M)$.
To begin, observe that by the construction of $(\IPDA, R, P_1)$ representing $\K$, we have $D(G)\subseteq \K$.
The main challenge of the proof is to shown that, for all $\mu \in \K$, 
we have $\mu \in \PN$ if and only if $\mu \in D(G)$, i.e. $\rots(\mu) \in D_{\mathsf{abs}}(G)$, where we recall that $\rots(\mu)\subseteq R$ denotes the unique subset of rotations $S$ such that $\match(S)=\mu$.
Using the language of priority-correcting adjustments (\autoref{def:priority-neutral}), the proof then amounts to the following two steps.
First, we show that if $\mu \in \K$ has a priority-correcting adjustment, then $\rots(\mu)$ has a violating predecessor relationship or a correcting PCA cycle in $G$.
Second, we show that if $S\subseteq R$ with $\match(S) \in \K$ is such that $S$ has a violating predecessor relationship or a correcting PCA cycle, then $\match(S)$ has a priority-correcting adjustment.

\begin{lemma}
   \label{lem:all-pn-are-movement-real-pca-to-graph-pca}
   Suppose $\mu \in \K \setminus \PN$. %
   Then, $\rots(\mu)\subseteq R$ has either a violating predecessor relationship, or a correcting PCA cycle, in $G$.
   Hence, $D(G) \subseteq \PN$.
\end{lemma}
\begin{proof}
    Fix such a $\mu$ and let $\eta$ be any priority correcting adjustment of $\mu$.
    We consider two cases.
    First, suppose $\eta \in \K$.
    Then, there must be some $\alpha \in \rots(\mu)$ and $\beta \in \rots(\eta) \setminus \rots(\mu)$
    such that $\beta$ is a predecessor of $\alpha$.
    To see why, note that since $\eta$ is a priority correcting adjustment, there must exist some $d_*$ whose priority is violated, say at $h_*$; let $\alpha$ be the rotation which moves $h_*$ to their match in $\mu$ (such a rotation must exist because $d_*$'s priority is not violated in the base matching $\IPDA$) and let $\beta$ be the rotation which causes $d_*$ to improve when changing from $\mu$ to $\eta$.
    By the definition of $G$, we see that $\beta$ is a 
    predecessor of $\alpha$, so there must exist some violating predecessor of $\rots(\mu)$.
    This concludes the proof of the first case.

    Before proceeding, observe the following for any $\mu \in \K$.
    If there exists any $d_* \in \vio(\mu)$ such that $\mu(d_*) \ne \EADAM(d_*)$, then there must exist a rotation $\rho \in R$ which improves the match of $d_*$, and in fact $\rho\mu$ Pareto improves $\mu$. (This observation is closely related to \cite[Lemma 7]{Reny22}, which proves that if $d_*\in\vio(\mu)$ for $\mu\in\PN$ then $\mu(d_*) = \EADAM(d_*)$.) This means that, if this $d_*$ is not matched to $\EADAM(d_*)$ in $\mu$ such that $d_* \in \vio(\mu)$, then there exists some priority-correcting adjustment $\eta$ of the form handled in the previous paragraph.

    We now proceed to the second case; suppose that $\eta$ is a priority correcting adjustment of $\mu$, and that $\eta \notin \K$.
    By the previous paragraph, it is without loss of generality that we have some $d_* \in \vio(\mu)$, say $d_*$'s priority is violated at $h_*$, and such that $\mu(d_*) = \EADAM(d_*)$ and $\eta(d_*) \succ_{d_*} \mu(d_*)$.
    Now, we construct a PCA cycle $C$ in the movement graph.
    To do this, start with $(d_*, \EADAM(d_*)) \in U$.
    By our \autoref{thrm:pca-dominates-ipda}, we must have every applicant matched in $\eta$ at least as highly as in $\IPDA$ (and in particular an analogue of the rural hospitals theorem holds---every applicant matched in $\IPDA$ is matched in $\mu$ and $\eta$). 
    Thus, construct the following sequence of applicants:
    $d_1 = d_*$, and for $i \ge 1$, set $h_{i+1} = \eta(d_i)$ and $d_{i+1} = \IPDA(h_{i+1})$. 
    Since $\eta \ge \IPDA$ and this sequence cannot continue forever, there must exist a $K\ge 2$ such that $d_{K+1} = d_1 = d_*$.
    Now, we claim that given this cycle $d_1 \to \ldots \to d_K \to d_1$,
    one can construct a PCA cycle of $\rots(\mu)$ in $G$.
    (Note that there may be additional applicants who changes matches from $\IPDA$ to $\eta$ other than of those $d_i$ we construct here, but still these $d_i$ suffice to construct the PCA cycle.)

    To construct a PCA cycle $C$ of $\rots(\mu)$, start with $x_1 = (d_*, \EADAM(d_*))$. Intuitively, we then compress the chain of reassignments $C' = [ d_1\to d_2\to\ldots\to d_K\to d_1 ]$ defined as above, using edges in $P\cup M$ to change elements of $R \cup U$ whenever the sequence involves reassignments that do not correspond to any rotation.
    In more detail, given $x_i$ for $i\ge 1$, we construct an $x_{i+1}$ as follows: first, if $x_i \in R$, ignore any next-steps in $C'$ which simply correspond to reassignments that happen within $x_i$, until you reach some other reassignment. 
    (Note that this step will always necessarily halt, and this step will be trivial whenever $x_i \in U$.)
    Then, when you reach $d_j \to d_{j+1}$ in $C'$, where $d_j$ moves to some institution other than the one dictated by the movement in $x_i$, we let $x_{i+1}$ be the unique element of $R\cup U$ where $d_{j+1}$ is paired to $\mu(d_{j+1})$.

    We claim that $C$ must be a PCA cycle of $\rots(\mu)$. To see this, note that by the way we start with $x_1 = (d_*, \EADAM(d_*))$, where $d_*$'s priority is violated at $h_*$ (say, due to some rotation $\alpha \in \rots(\mu)$ which sends $h_*$ below $d_*$ in priority), we must meet Condition~(1) of being a PCA cycle from \autoref{def:movement-graph}.
    Second, since $\eta$ was a priority-correcting adjustment of $\mu$, we know that no applicant $d \in \vio(\eta)$ is such that $\eta(d)\prec_d \mu(d)$.
    Many reassignments along the chain $C'$ in fact put $d$ higher than $\mu(d)$; these can correspond to rotations $x_i \in C$ with $x_i \notin \rots(\mu)$.
    Other reassignments in fact put $d$ lower than $\mu(d)$; however, since their priority is violated \emph{nowhere} in $\eta$, it's certainly the case that they are not violated \emph{by the applicant who receives their old match} from $\mu(d)$.
    Thus, at every step along the cycle $C$, we also meet Condition~(2) of being a PCA cycle from \autoref{def:movement-graph}.
    This shows that $\rots(\mu)$ has a PCA cycle, as desired, and thus finishes the proof of \autoref{lem:all-pn-are-movement-real-pca-to-graph-pca}.
\end{proof}

We now establish the opposite inclusion.

\begin{lemma}
   \label{lem:all-pn-are-movement-graph-pca-to-real-pca}
   Suppose $S \subseteq R$ has either a violating predecessor relationship, or a correcting PCA cycle, in $G$.
   Then, $\match(S) \in \K\setminus\PN$.
   Hence, $\PN \subseteq D(G)$.
\end{lemma}
\begin{proof}
    Fix $S$ as in the statement of the lemma.
    Our goal is to show that $\match(S)$ has a priority-correcting adjustment.
    We consider two cases.
    First, suppose $S\subseteq R$ is such that some predecessor relation in $P$ violates $S$.
    (Similarly to the proof of \autoref{lem:all-pn-are-movement-real-pca-to-graph-pca},
    this serves as the ``easy case''.)
    If some element of $P_1$ in fact violates $S$, then we do not even have $\match(S) \in \K$. (In fact, one can show that $\match(S)$ is not even well-defined, since there will be no ordering of the rotations in $S$ that can be successively eliminated from $\IPDA$.)
    If some element of $P_2$ in fact violates $S$, then by construction we know there is some rotation $\beta\notin S$, and some applicant $d_*$ whose priority is violated in $\match(S)$, such that $\beta$ strictly improves the match of $d_*$ that of $\match(S)$. This shows $\match(S)$ has a priority correcting adjustment, and thus that $\match(S)\notin\PN$.

    Second, consider the harder case in which $S\subseteq R$ has a correcting PCA cycle $C$ in $G$. 
    Call this cycle $C = x_1 \to x_2 \to \ldots \to x_K \to x_1$ with corresponding edges $e_1 = (x_1,x_2),\ldots, e_K = (x_K,x_1)$.
    By Condition~(1) of the definition of a PCA cycle, there exists an element of $C$, say $x_1$, such that
    $x_1\in U$ and there is an $\alpha \in S$ such that $x_1$ is a predecessor of $\alpha$.
    Additionally, by Condition~(2), for all $e_i = (x_i, x_{i+1})$ in $C$, if $e_i$ is a movement edge, then then $x_i \notin S$.
    Now, one could define a matching by starting from $\match(S)$, and iteratively reassigning agents exactly according to the edges $e_1,\ldots,e_K$ and the corresponding rotation $x_i \in R$ where needed (i.e., via the reverse of the process used to reassign according to rotations in \autoref{lem:all-pn-are-movement-real-pca-to-graph-pca});
    this would indeed improve the match of $d_*$ by construction.
    Moreover, since $x_i \in S$ can appear in $C$ only if $e_{i} \in P$,
    we know that each reassigned $d_j$ cannot have their priority at their old match $\mu(d_j)$ violated.
    Let this new matching be denoted $\eta$.
    However, interestingly, for arbitrary PCA cycles, this might not actually be a priority correcting adjustment of $\match(S)$.
    The reason is that, for those applicants $d_j$ whose match we \emph{worsen} in $\eta$ relative to $\mu$, there might exist $\overline{h}$ other than $\mu(d_j)$ where $d_j$'s priority is violated.

    The key to avoiding the above problem is the following observation.
    For such a $d_j$ and $\overline{h}$, there also exists a movement edge from 
    the element of $R\cup U$ where $d_j$ was assigned to the element of $R\cup U$ where $\overline{h}$ is reassigned according to the process constructing $\eta$.
    Moreover, if the movement edge assigning $\eta(\overline h)$ to $\overline h$ was an interrupter edge, then since (by assumption) $d_j$ has higher priority than $\eta(\overline h)$ at $\overline h$, we know that the edge from $d_j$ is also a interrupter edge.
    Thus, reassigning $d_j$ to $\overline{h}$, and continuing reassigning according to the remainder of $C$ after $\overline{h}$, is also a valid PCA cycle $C'$.
    Moreover, observe that this PCA cycle is strictly smaller in the set-inclusion sense than the original $C$.
    Since this is the only problem that can arise when constructing $\eta$, and since we can always eliminate the problem while strictly decreasing the set of element of $R\cup U$ in $C$, repeatedly doing this reduction must eventually terminate in a scenario where this problem does not exist.
    At this point---in other words, at any set-inclusion-wise minimal $C$---constructing such at $\eta$ as above must thus create a priority correcting adjustment of $\match(S)$.
    This finishes the proof of \autoref{lem:all-pn-are-movement-graph-pca-to-real-pca}.
\end{proof}

Since \autoref{lem:all-pn-are-movement-graph-pca-to-real-pca} shows that if $S\subseteq R$ has a PCA cycle, then $\match(S)\notin\PN$, and our good chain result for $\PN$ (\autoref{lem:struct-for-pn-max-chain}) shows that there exists a maximal chain in $\K$ whose rotations sets are form $S_1,\ldots,S_N$ with $S_i = \{\alpha_1,\ldots,\alpha_i\}$, and such that $\match(S_i) \in \PN$, we thus observe that $G$ must have the good chain property.
\begin{observation}
    $G$ has the good chain property.
\end{observation}

This shows that every $\PN$ lattice can be represented via a movement graph with the good chain property, and thus concludes the proof of \autoref{thrm:all-pn-are-movement}.

\subsection{All Movement Lattice are \texorpdfstring{$\PN$}{PN} Lattices}
\label{sec:all-movement-is-PN}

We now establish a converse of \autoref{thrm:all-pn-are-movement}: that all movement lattices with the good chain property correspond to some $\PN$ lattice.
To do this, we construct for each such movement graph an explicit instance of priorities and preferences with the corresponding $\PN$ lattice.
Our construction generalizes a method of constructing a stable lattice isomorphic to any given distributive lattice; for completeness and exposition, we provide this construction (which we believe is simpler than previous ones from \cite{blair1984every, GusfieldI89}) in \autoref{sec:exposition-stable-construction}.

\begin{proposition}
  \label{thrm:all-movement-are-pn}
  Consider any movement graph $G$ with the good chain property.
  Then, there exists a profile of preferences and priorities 
  such that $\PN = M(G)$.
\end{proposition}

The remainder of this subsection is dedicated to the proof of \autoref{thrm:all-movement-are-pn}.

Fix some abstract movement graph $G = (R, U, P, M)$, say with ordering $[\alpha_1, \ldots, \alpha_N]$ over $R$ such that the good chain property holds, which induces abstract movement lattice $M(G)$.
We will construct a instance of preferences and priorities such that $\PN$ is isomorphic to  $M(G)$.
For each $i\in [N]$, this construction has applicants $d_i, d_i'$ and institutions $h_i, h_i'$;
the preferences will be such that for each $i$, we have $\IPDA(d_i) = h_i$ and $\IPDA(d_i')=h_i'$, as well as $\EADAM(d_i) = h_i'$ and $\EADAM(d_i')=h_i$. 
Moreover, we construct for each $u\in U$ an applicant $d_u$ and institution $h_u$ such that $\IPDA(d_u) = \EADAM(d_u) = h_u$. 
Preferences and priorities are decided by the predecessor edges $P$ and movement edges $M$ as follows.

For each $i\in [K]$, we set the priorities of $h_i$ and $h_i'$ as follows:
\begin{align*}
    h_i & : d_i \succ \mathcal{S}_i \succ d_i' \succ \ldots \\
    h_i' & : d_i' \succ d_i \succ \ldots
\end{align*}
where $\mathcal{S}_i$ is the set of all $d_j$ such that in $G$, there is an predecessor edge from $\alpha_i \in R$ to $\alpha_j \in R$,
along similarly with the set of all and $d_u$ such that in $G$, there is an predecessor edge from $\alpha_i \in R$ to $u\in U$.
We set the preferences of $d_i$ and $d_i'$ as follows:
\begin{align*}
    d_i & : h_i' \succ \mathcal{T}_i \succ h_i \succ \emptyset \\
    d_i' & : h_i \succ h_i' \succ \emptyset
\end{align*}
where $\mathcal{T}_i$ is the set of all $h_j$ such that in $G$, there is a predecessor \emph{or} movement edge from $\alpha_j\in R$ to $\alpha_i \in R$, 
as well as all $h_u$ for $u \in U$ such that there is a movement edge from $u$ to $\alpha_i$.

Next, we set for each $u \in U$ the priorities of $h_u$ to be $d_u \succ \ldots$, i.e., any priorities that rank $d_u$ first.
We set the preferences of $d_u$ for each $u \in U$ as follows:
\begin{align*}
    d_u & : \mathcal{V}_u \succ h_u \succ \emptyset
\end{align*}
where $\mathcal{V}_u$ is the set of all $h_j$ such that there is a movement \emph{or} predecessor edge from $\alpha_j$ to $u$ in $G$, along with all $h_v$ such that there is a movement edge from $v\in U$ to $u\in U$ in $G$.
This completes the definition of the instance of preferences and priorities we consider.

Now, the main challenge for the remainder of the proof will be to show that under these preferences and priorities, the set of rotation in the predecessor DAG representation of $\K$ is exactly the set of all $\beta_j := [ (h_j, d_j), (h_j',d_j') ]$ for $j\in[K]$.
For, if we can show this, then one can check that directly applying the construction above \autoref{thrm:all-pn-are-movement} will give the desired movement graph representation (except for extra terminal pairs of the form $(d_j,h_j')$ and $(d_j',h_j)$, which will not participate in any PCA cycles and hence not effect $M(G)$).
To establish this set of rotations, we will crucially use the good chain property, along with arguments from \cite{FaenzaZ22} that allow us to find the set of rotations in the lattice $\Legal$.

The key claim is thus as follows:
\begin{lemma}
    \label{lem:all-movement-are-pn-correct-rotations}
    With preferences and priorities as above, consider the predecessor DAG representation $(R', P')$ of $\K$ (as defined in \autoref{thrm:K-has-rotations}).
    Then, we have $R' = \{ \beta_j \}_{j\in[K]}$, where we write $\beta_j = [ (h_j, d_j), (h_j',d_j') ]$ for each $j$. 
\end{lemma}
\begin{proof}
    Recall that we assumed that in the original movement graph, the ordering $[\alpha_1,\ldots,\alpha_N]$ on $R$ was assumed to have the good chain property.
    We prove this claim using Algorithm~1 of \cite{FaenzaZ22}, school-rotate-remove.
    When run on the above profile of preference and priorities,
    observe that after the $j$th step of the main loop of their algorithm, we will inductively have that the current assignment is something of the form $\match(\{\beta_1, \beta_2, \ldots, \beta_k\})$.
    After this, it is not hard to see that our good chain property implies that the iteration of the $\APDA$ algorithm on the trimmed instance consider in \cite{FaenzaZ22} (where that run of $\APDA$ is implied by their definition of the school-rotation digraph), we will always proceed to a 
    further assignment of the form $\match(\{\beta_1, \beta_2, \ldots, \beta_{k+x}\})$ for some $x \ge 1$.
    Hence, Algorithm~1 of \cite{FaenzaZ22} removes all sets $\mathcal{S}_i$ from the priority list of each $h_i$, and all $\mathcal{T}_i$ from the preference list of each $d_i$.  
    Once this is done, 
    the legal lattice then becomes the stable lattice on these set of trimmed preferences, 
    and these stable rotations equal $\{ \beta_j \}_{j \in [K]}$, as needed.
\end{proof}

\autoref{lem:all-movement-are-pn-correct-rotations} thus establishes, along with \autoref{thrm:all-pn-are-movement}, that $\PN = M(G)$.
This concludes the proof of \autoref{thrm:all-movement-are-pn}.

\subsection{Conclusion and Discussion}
\label{sec:discussion}

We now sum up this section by stating our main result: movement lattices with the good chain property exactly characterize priority-neutral lattices.
This follows directly from \autoref{thrm:all-pn-are-movement} and \autoref{thrm:all-movement-are-pn}.
Formally:
\begin{theorem}[Main Theorem; Characterization of $\PN$ lattices]
    \label{thrm:pn-iff-movement}
    A lattice arises as $\PN$ for some profile of preferences and priorities if and only if it is isomorphic to a movement lattice with the good chain property.
\end{theorem}

We now briefly conclude by mentioning two implications of our characterization.

First, to illustrate why $\PN$ lattices are a strict subset of the class of lattices identified in \autoref{sec:simpler-sufficient},
we show that the lattice in \autoref{fig:n5}, which we recall is known as $N_5$, cannot arise as a $\PN$ lattice (even though it can be represented by a fanout set with the good chain property).

\begin{proposition}
    \label{thrm:n5-not-priority-neutral}
    The lattice $N_5$, shown in \autoref{fig:n5-lattice}, cannot arise as the $\PN$ lattice.
\end{proposition}
\begin{proof}
  Suppose for contradiction that $\PN$ is isomorphic to $N_5$.
  To begin, we show that if the lattice $\PN$ is isomorphic to \autoref{fig:n5-lattice}, then (up to relabeling) the corresponding fanout multigraph is as in \autoref{fig:n5-fanout}.
  To see this, first we observe that the good chain of $\PN$ in \autoref{fig:n5-lattice} must be the one labeled $\emptyset, \alpha, \alpha\beta, \alpha\beta\gamma$ in the figure. For, otherwise, the node labeled $\beta\gamma$ in the figure would have to correspond to a single rotation $\delta$ which is not in either of the nodes in the left of the figure, and thus we could not have the join of $\delta$ with the nodes on the left both equal to $\alpha\beta\gamma$.
  Second, this means no fanout can come out of rotation $\alpha$; a fanout must come out of $\beta$ to $\alpha$, which one can check must also include $\gamma$; and one can check that a fanout must come out of $\gamma$ to $\beta$.

  Now we examine two cased based on the fanout which comes out of $\beta$, and use the movement graph representation of $\PN$.
  In both cases, some terminal pair $p$ must be a type 2 predecessor of $\beta$.
  In the first case, suppose the PCA cycle involving $p$ which defines the fanout $(\beta, \{\alpha,\gamma\})$ first goes to $\alpha$.
  Then, one can check that no matter how the fanout $(\gamma, \{\beta\})$ is defined, there will be another fanout $(\beta, \{\alpha\})$.
  Second, suppose the above-mentioned PCA cycle first goes to $\gamma$. 
  Then, one can similarly check that no matter how the fanout $(\gamma, \{\beta\})$ is defined, there will be another fanout $(\beta, \{\gamma\})$.
  Both of these are contradictions.
  This concludes the proof.
\end{proof}

Thus, the class of $\PN$ lattices is a subtle, and we believe quite novel, family of lattices.

Second, we note that since our characterization is constructive, it in fact gives a polynomial time algorithm for checking whether a matching in $\PN$.
This algorithm constructs the movement graph, and then, on input $\mu \in \K$, checks whether $\mu \in \PN$ by considering the set of rotations $S$ corresponding to $\mu$, removing all edges that cannot participate in a PCA cycle for $S$, considering every $u \in U$ where $(\alpha, u)\in P$ for some $\alpha \in S$, and checking whether a cycle includes $u$.\footnote{
    Note that the movement relationships of the construction can create up to $n$ different movement edges in $M$ for each $(d,h)$ pair---one for each of the up to $n$ rotation where $h$ appears---and thus causes our construction overall to use $\widetilde O(n^3)$ bits rather than the tight $\widetilde O(n^2)$.
    There is an easy way to circumvent this, though---one can add to $G$ ``institution nodes'' for each $h$, and represent each movement edge in $M$ via an edge from an element of $R\cup U$ to such an $h$, and from such an $h$ to some element of $R\cup U$.
    This would result in a $\widetilde O(n^2)$ bit data structure, and the arguments of \autoref{lem:all-pn-are-movement-graph-pca-to-real-pca} show that the definition of a PCA cycle can be adapted to this modified graph.

    Even with the above optimization, however, the runtime of the implied algorithm for checking priority neutrality is at least $\Omega(n^4)$, since we check separately for each rotation whether there is a corresponding PCA cycle, rather than the tight $\widetilde O(n^2)$.
    A tight algorithm remains an interesting direction for future work.
}
Given only the definition of $\PN$, it is far from clear that such an algorithm should exist, i.e., why checking for membership in $\PN$ should be any easier than check, for all possible matchings $\xi$ separately, whether $\xi$ is a priority correcting adjustment.
This emphasizes that, while priority-neutral matching lattices are significantly more involved than stable ones, priority-neutral matchings have some degree of tractable structure.
Developing further algorithms for $\PN$ remains an interesting direction for future work which, like for stable matching, likely requires exploiting the structure of $\PN$ in detail.

\bibliographystyle{alpha}
\bibliography{Bib}{}

\appendix

\clearpage

\section{Simple Construction of a Stable Matching Lattice for Any Distributive Lattice}
\label{sec:exposition-stable-construction}

In this brief appendix, we re-prove the known result that any abstract rotation DAG (\autoref{def:rotation-dag}) can be realized as the stable matching lattice.
Our construction simplifies prior ones \cite{blair1984every,GusfieldI89}, and provides exposition to our construction of movement lattices as priority-neutral matching lattices in \autoref{sec:all-movement-is-PN}.

Let $G$ be an abstract rotation DAG with rotations $R = \{1,\ldots,n\}$ and predecessor relations $P$.
For each $i \in R$, let $S_i$ be the set of all $d_j$ such that $(i,j)\in P$, and let $T_i$ be the set of all $h_jj$ such that $(j,i)\in P$.
Then, let the preferences and priorities be as in \autoref{fig:stable-construction}, where for each such $S_i$ and $T_i$, preferences and priorities may be placed in any order.

\begin{figure}[htb]
\begin{minipage}{\textwidth}
\centering
    For $i\in R = \{1,\ldots,n\}$:
    \\[0.05in]
    \begin{tabular}{ccccccc}
        \toprule
        $h_i$ & $h_i'$ \\
        \midrule
        $d_i$ & $d_i'$ \\
        $S_i$ & $d_i$ \\
        $d_i'$ &
    \end{tabular}
    \qquad
    \begin{tabular}{ccccccc}
        \toprule
        $d_i$ & $d_i'$ \\
        \midrule
        $h_i'$ & $h_i$ \\
        $T_i$  & $h_i'$ \\
        $h_i$ &
    \end{tabular}
    \caption{Preferences and Priorities.}
    \label{fig:stable-construction}
\end{minipage} 
\end{figure}

This simple construction suffices:
\begin{proposition}
  With preferences and priorities as in \autoref{fig:stable-construction}, we have that $\Stab$ is isomorphic to $D_{\mathsf{abs}}(G)$.
\end{proposition}
\begin{proof}
  To begin, observe that $\IPDA$ matches $h_i$ to $d_i$ and $h_i'$ to $d_i'$ for each $i$, and $\APDA$ matches
  $h_i$ to $d_i'$ and $h_i'$ to $d_i$ for each $i$.
  Moreover, since the rotation DAG $G$ is acyclic, there is no feasible matching $\mu$ with $\IPDA < \mu < \APDA$ such that any $d_i$ is matched to $h_j$ for some $j\ne i$.
  Additionally, if we let $\alpha_i$ denote the rotation $[(h_i, d_i), (h_i', d_i')]$ for each $i$, acyclicity implies that the matching $\alpha_k\alpha_{k-1}\ldots\alpha_2\alpha_1\IPDA$ is stable for each $1 \le k \le n$, since there is no $i > k$ where $T_i$ contains $h_j$ for some $j \le k$.
  Hence, the rotations of $\Stab$ are exactly $\{\alpha_1,\ldots,\alpha_n\}$.
  Moreover, it is direct from the definitions that the predecessor relations in $\Stab$ are all type 2, and are exactly the predecessor edges in $G$.
  Hence, if we let $P'$ be the predecessor edges on $\{\alpha_1,\ldots,\alpha_n\}$ which exactly correspond to $P$, then $\Stab = D(\IPDA, \{\alpha_1,\ldots,\alpha_n\}, P')$ is isomorphic to $D_{\mathsf{abs}}(G)$, which finishes the proof.
\end{proof}

Since Birkhoff's representation theorem \cite{Birkhoff37} implies that any finite distributive lattice has a representation via a rotation DAG (namely, with any rotation DAG that gives the correct partial order on the join-irreducible elements of the lattice), this directly shows that all distributive lattices arise as stable matching lattices.

In our opinion, our construction is simpler than that of \cite{GusfieldI89}, which creates a separate applicant and institution for each immediate predecessor relation in the rotation DAG.
(The approach of \cite{GusfieldI89} is, in turn, substantially more streamlined approach than that of \cite{blair1984every}, which may use exponentially more applicants and institutions than is necessary to construct a given lattice.)
However, one can show that both our construction and that of \cite{GusfieldI89} may use quadratically more applicants and institutions than is necessary to construct a given lattice (intuitively, since the most-efficient construction may re-use a given applicant or institution linearly many times to create different rotations).
Determining the smallest instance such that the stable lattice is a given distributive lattice is likely a very intractable problem in general.

\end{document}